\documentclass[11pt]{article}
\usepackage{fullpage}
\usepackage[utf8]{inputenc}
\usepackage[english]{babel}
\usepackage[T1]{fontenc}
\usepackage{amsmath,amssymb,amsfonts,amsthm}
\usepackage{url}
\usepackage{hyperref}
\usepackage{cleveref}
\usepackage{graphicx}
\usepackage[ruled,vlined, linesnumbered]{algorithm2e}
\usepackage{tikz}
\usetikzlibrary{cd}

% Theorem

\newtheorem{theorem}{Theorem}
\newtheorem*{theorem*}{Theorem}
\newtheorem{lemma}{Lemma}
\newtheorem{prop}{Proposition}

\newtheorem{cor}{Corollary}

\usepackage{authblk}

% Macros

\newcommand{\C}{\mathbb{C}}
\newcommand{\Z}{\mathbb{Z}}
\newcommand{\N}{\mathbb{N}}

\newcommand{\ket}[1]{|{#1}\rangle}
\newcommand{\braket}[1]{|{#1}\rangle\langle{#1}|}

\newcommand{\Prob}{\mathbb{P}}

\newcommand{\PauliWithPhase}{{\cal P}}
\newcommand{\Pauli}{{\overline{\cal P}}}

\newcommand{\stab}{{\cal S}}

\DeclareMathOperator{\rank}{rank}
\DeclareMathOperator{\tr}{tr}
\DeclareMathOperator{\supp}{supp}
\DeclareMathOperator{\depth}{depth}
\DeclareMathOperator{\MLF}{MLF}
\DeclareMathOperator{\MLC}{MLC}

\newcommand{\GG}{{\bf G}}
\newcommand{\HH}{{\bf H}}

\newcommand{\SSS}{{\bf S}}

\newcommand{\circuit}{{\cal C}}
\newcommand{\outcome}{{\cal O}}
\newcommand{\spacetime}{{\cal Q}}

\DeclareMathOperator{\level}{level}
\DeclareMathOperator{\eff}{eff}

\newcommand{\propagation}[1]{\overrightarrow{#1}}
\newcommand{\backpropagation}[1]{\overleftarrow{#1}}

% Editing
\usepackage{xcolor}

% Title
%\title{Automated circuit faults correction in Clifford circuits}
\title{Spacetime codes of Clifford circuits}

\author[]{Nicolas Delfosse, Adam Paetznick}

\affil[]{Microsoft Quantum, Redmond, Washington 98052, USA}

\begin{document}

\maketitle

\date

\begin{abstract}
We propose a scheme for detecting and correcting faults in any Clifford circuit.
The scheme is based on the observation that the set of all possible outcome bit-strings of a Clifford circuit is a linear code, which we call the ``outcome code''.
From the outcome code we construct a corresponding stabilizer code, the ``spacetime code''. Our construction extends the circuit-to-code construction of Bacon, Flammia, Harrow and Shi~\cite{bacon2015sparse}, revisited recently by Gottesman~\cite{gottesman2022spacetimecode}, to include intermediate and multi-qubit measurements.
With this correspondence, we reduce the problem of correcting faults in a circuit to the well-studied problem of correcting errors in a stabilizer code.
More precisely, a most likely error decoder for the spacetime code can be transformed into a most likely fault decoder for the circuit.
We give efficient algorithms to construct the outcome and spacetime codes.
We also identify conditions under which these codes are LDPC, and give an algorithm to generate low-weight checks, which can then be combined with efficient LDPC code decoders.
\end{abstract}

Several small quantum computing platforms are available today. However, the high noise rate of quantum hardware is a major obstacle to the scalability of these devices.
Some form of quantum error correction is likely necessary in order to achieve noise low enough for solving large-scale industrial problems.

The traditional solution to the noise problem is to execute quantum algorithms on error-corrected qubits. First, choose a quantum error correction code to encode each qubit. Second, design a syndrome extraction circuit for this code. This quantum circuit is run regularly and returns a bit string, the so-called syndrome, which is then used to identify errors.
This code-centric approach has limitations, however.
For example, Floquet codes are not obtained through this traditional solution and are instead defined directly by a circuit.

In this work, we take a circuit-centric approach. 
We describe a general method for correcting faults in Clifford circuits.
This methods applies not only to syndrome extraction circuits of stabilizer codes and Floquet codes, but also to general Clifford circuits that were not initially designed as a syndrome extraction circuit. The basic idea is that these circuits can include some other type of redundancy that can be exploited to correct circuit faults.
Such a circuit-centric approach was also considered previously in~\cite{bacon2015sparse, gottesman2022spacetimecode, gidney2021stim}.
Clifford circuits are central for quantum computing because they implement standard protocols such as quantum teleportation, preparation of Bell states or error correction with stabilizer codes.
Even though they are not universal for quantum computing directly, universality can be achieved with Clifford circuits by the injection of magic states~\cite{bravyi2005universal}.

We consider circuits made with unitary Clifford gates and Pauli measurements. We allow all Pauli measurements, not just single-qubit measurements, and we allow for internal measurements that can occur at any time step of the circuit.

Collectively, the measurement outcomes of a circuit produce a classical bit-string. Our basic idea is to correct circuit faults using redundancy in these bit-strings.
We prove that the set of possible outcome bit-strings of a Clifford circuit is a linear code (up to relabeling the measurement outcomes); see Theorem~\ref{theorem:outcome_code} and Corollary~\ref{cor:outcome_code_linearization}. We call this the ``outcome code'' of the circuit.
Moreover, we design an algorithm that returns a complete set of checks for the outcome code; see Algorithm~\ref{algo:circuit_checks}.

The ``outcome code'' can be used directly to detect and correct circuit faults. Doing so, however, requires designing a decoder, generally a non-trivial task.
Instead of designing a new decoder that maps check values onto circuit faults,
we construct a stabilizer code associated with the circuit, the ``spacetime code''; see Theorem~\ref{theorem:space_time_code}. Measuring the generators of the spacetime code provides corresponding values of the checks of the outcome code.
We show how to design a circuit decoder that returns a most likely fault configuration using a most likely error decoder for the spacetime stabilizer code; see Theorem~\ref{theorem:circuit_decoder_from_stabilizer_decoder}.

The spacetime code defined in this paper is related to the circuit-to-code construction of Bacon, Flammia, Harrow and Shi~\cite{bacon2015sparse}. 
There, the authors proposed a transformation of a restricted class of Clifford circuits with the goal of building new subsystem codes.
They considered a subclass of post-selection circuits and computed the parameters of the resulting subsystem code as 
a function of the input circuit.
Recently, Gottesman extended this formalism to circuits with Clifford unitaries and single-qubit preparation and measurement~\cite{gottesman2022spacetimecode}.\footnote{The term \emph{spacetime code} follows terminology introduced by~\cite{gottesman2022spacetimecode}.}
The spacetime code we consider can be seen as the stabilizer code associated with the subsystem code of~\cite{bacon2015sparse, gottesman2022spacetimecode} after generalizing their constructions to arbitrary Clifford circuits.
The stabilizer generators of the spacetime code we consider here
coincide with ``detectors'' considered by \cite{mcewen2023relaxing}, where they were used to optimize surface codes for particular architecture constraints.

Throughout, we provide new proofs of the properties of the spacetime code based on the relation between the forward and the backward propagation of the faults through the circuit.
Namely, we prove that backward propagation is the adjoint of forward propagation. 
This alternative approach unifies the treatment of all Clifford circuits and sheds some light on the underlying mathematical structure of the spacetime code.
We believe that this relation could also be relevant for other applications.

Equipped with the outcome code,
one can build a scheme for the correction of any small Clifford circuit using a lookup decoder.
To push the range of application further, we propose an algorithm to produce a set of low-weight stabilizer generators for the spacetime code~\ref{algo:spacetime_code_local_genertors}, resulting in a Low-Density Parity-Check (LDPC) spacetime code for which efficient decoders exist~\cite{gallager1962low, panteleev2021degenerate, roffe_decoding_2020, delfosse2022toward}.
Starting from a local code in $D$ dimensions, that is, a code defined by local stabilizer generators in a $D$-dimensional grid of qubits, our algorithm produces local stabilizer generators in $D+1$ dimensions for which topological decoders such as the Renormalization Group decoder~\cite{bravyi2013quantum} can be used.

Our fault correction scheme applies to any Clifford circuit, but syndrome extraction circuits are of special interest.
The design of a quantum error correction scheme is a non-trivial task which requires
(i) a syndrome extraction circuit,
(ii) a syndrome map,
(iii) a decoder.
With our work, the construction of the syndrome map and the decoder can be automated in some cases.
We do not provide a performance guarantee for the resulting scheme, and we believe that some specialized schemes, highly optimized for a specific circuit, are likely to perform better.
The main advantage of our approach is its flexibility.
Our approach only requires the circuit to be given as an input and it applies to codes implemented with Clifford operations or Pauli measurements. This includes for instance CNOT-based surface codes~\cite{dennis2002topological, fowler2012surface} or color codes~\cite{bombin2006topological}, Majorana-based surface codes~\cite{chao2020optimization} or Floquet codes~\cite{hastings2021dynamically} that are implemented with only joint measurements.

This article is organized as follows.
In order to motivate our results, we start by describing an application of the outcome code and the spacetime code in Section~\ref{sec:applications}. Namely, we discuss the implementation of an automated scheme for the correction of faults in Clifford circuits. The rest of the paper provides a complete description of all the ingredients.
Background material is introduced in Section~\ref{sec:background}.
Then, Section~\ref{sec:propagation} proves some core technical results and establishes the relation between the propagation and the backpropagation operators.
The outcome code is defined in Section~\ref{sec:outcome_code} which also describes an algorithm (Algorithm~\ref{algo:circuit_checks}) to compute a complete set of checks for this code.
The spacetime code is introduced in Section~\ref{sec:space_time_code} where we prove that a most-likely error decoder for the spacetime code can be converted into a circuit decoder that returns a most likely set of faults (Theorem~\ref{theorem:circuit_decoder_from_stabilizer_decoder}).
Section~\ref{sec:LDPC} provides an algorithm (Algorithm~\ref{algo:spacetime_code_local_genertors}) to generate low-weight generators for the spacetime code.

\begin{figure}
\centering
\includegraphics[scale=.5]{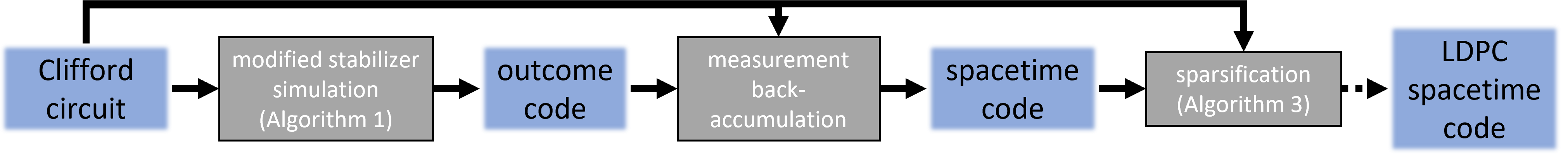}

\caption{
Construction of codes from a Clifford circuit.
Given a Clifford circuit as input, a modified stabilizer simulation, Algorithm~\ref{algo:circuit_checks}, produces the outcome code.
The corresponding spacetime code can then be constructed by accumulating measurement observables from the outcome code backward through the circuit. See Section~\ref{subsec:check_operators}.
Low-weight generators of the spacetime code are obtained by sparsification with Algorithm~\ref{algo:spacetime_code_local_genertors}.  The dashed arrow indicates that sparsification is not possible for all circuits.
All constructions run in polynomial time.
The LDPC spacetime code can be decoded using any quantum LDPC code decoder, 
thereby providing an automated and efficient means of correcting faults given only the circuit as an input.
}
\label{fig:diagram}
\end{figure}

\section{Applications of the outcome code and the spacetime code}
\label{sec:applications}

In this section, we combine all the ingredients developed in this article to describe a flexible scheme for the correction of faults in Clifford circuits. The full protocol is illustrated in Figure~\ref{fig:diagram}.
Detailed descriptions of all the ingredients are provided in subsequent sections.

\subsection{Standard design procedure for a quantum error correction scheme}

To emphasize the advantage of our approach, we first review the general approach to designing and simulating a quantum error correction scheme. For simplicity, we focus on stabilizer codes and consider the standard circuit noise model explained in Section~\ref{sec:background}.

To describe a complete quantum error correction scheme for a given stabilizer code, we must design the following components:
\begin{enumerate}
    \item {\bf Syndrome extraction circuit:} A quantum circuit that takes as input a noisy encoded state and returns a bit string.
    \item {\bf Syndrome map:} A classical procedure that takes as input the outcome of the syndrome extraction circuit and returns bit string that we call the syndrome.
    \item {\bf Decoder:} A classical procedure that takes as input the syndrome and that returns a correction to apply to the encoded state.
\end{enumerate}

These three components are non-trivial and tightly coupled, in general. Designing a syndrome extraction circuit requires working within the constraints of a specific architecture. 
See, for example,~\cite{chao2020optimization, gidney2022pair, reagor2022hardware}. 
The decoding problem is notoriously difficult~\cite{berlekamp1978inherent, iyer2015hardness}.
Some specific cases have been worked out.
For example, with the surface code, one can consider the standard syndrome extraction circuit~\cite{fowler2012surface}, a carefully chosen sequence of CNOT gates and measurements of ancillary qubits.
The syndrome of this circuit is obtained by xor of consecutive outcomes for each ancilla.
The syndrome is then input to a surface code decoder such as a Minimum Weight Perfect Matching decoder~\cite{dennis2002topological} or Union-Find decoder~\cite{delfosse2021almost}.

The effectiveness of the decoder depends tightly on the choice of syndrome map.  Indeed, the surface code decoders mentioned above may be ineffective for other choices of syndromes.  The syndrome map, in turn, depends on the details of both the code and the circuit; see Floquet codes~\cite{hastings2021dynamically}. Any change to the circuit or the code requires corresponding modification of the syndrome map.

\subsection{Automated circuit-fault correction}

Here, we propose a protocol described in Figure~\ref{fig:diagram} for correcting faults that requires only the Clifford circuit as an input, instead of the three components required in the previous section.
Consider a Clifford circuit $\circuit$ with $m$ measurements.
The first step of the protocol is the computation of a set of outcome checks $u_1, \dots, u_r \in \Z_2^m$ using Algorithm~\ref{algo:circuit_checks}.
In the absence of faults, running the circuit $\circuit$ produces an outcome bit-string $o \in \Z_2^m$ that satisfies $(o | u_1) = \dots = (o | u_r) = 0$.
Therefore, one can use these values to detect and correct circuit faults.
Moreover, the set of outcome checks returned by Algorithm~\ref{algo:circuit_checks} is maximal. It fully describes the set of possible outcomes for the circuit, which means that we are using all the information available to detect and correct faults.

Given the outcome checks, we need to design a syndrome map and a decoder in order to transform the check values $(o | u_i)$ into a correction to apply to the output state of the circuit.
For small circuits, this can be done with the syndrome map $o \mapsto \sigma$ where $\sigma_i = (o|u_i)$ and a lookup decoder.
Naively, we can construct a lookup decoder for the circuit $\circuit$ using the outcome checks $u_i$ as follows.
Loop over all the fault configurations of the circuit with up to $M$ faults for some integer $M$. For each fault configuration $F$, compute the syndrome value $\sigma$ and store a configuration made with a minimum number of faults for each syndrome $\sigma$.

Lookup decoders are impractical for large codes.
To make the decoding of large circuit possible, we must restrict the set of schemes we consider because the decoding problem for linear codes and stabilizer codes is generally intractable~\cite{berlekamp1978inherent, iyer2015hardness}.
We focus on circuits that admit local redundancy in the sense that the spacetime code of the circuit has many low-weight stabilizers.
This assumption is even more justified because the value of an outcome check corresponding to a high-weight stabilizer in the spacetime code will be very noisy and unreliable.

After computing the outcome checks $u_1, \dots, u_r$ using Algorithm~\ref{algo:circuit_checks}, we compute the corresponding stabilizers $\backpropagation{F(u_1)}, \dots, \backpropagation{F(u_r)}$ defined in Eq.~\eqref{def:relation_operator}.
Then, we run Algorithm~\ref{algo:spacetime_code_local_genertors} to produce a set low-weight stabilizer generators for the spacetime code.
These generators define a new syndrome map whose components are related to the check values $(o | u_i)$ by a linear map.
We are now equipped with a set of low-weight generators and therefore one can use any LDPC code decoder for this code.
Existing LDPC decoder options include the Union-Find decoder for local topological codes~\cite{delfosse2021almost} or for LDPC codes~\cite{delfosse2022toward}.
Other decoding strategies can be used, such as the Renormalization Group decoder~\cite{bravyi2013quantum} for topological codes or a Belief Propagation decoder for LDPC codes~\cite{panteleev2021degenerate, roffe_decoding_2020}.

\section{Background}
\label{sec:background}

\subsection{Linear codes}

A {\em linear code} with length $n$ is a $\Z_2$-linear subspace of $\Z_2^n$.
It encodes $k$ bits into $n$ bits where $k$ is the dimension of the code. 
The {\em dual code} of a linear code $C$ with length $n$, denoted $C^\perp$, is the set of vectors of $\Z_2^n$ that are orthogonal with all the vectors of $C$.
Orthogonality is with respect to the binary inner product
$
(u | v) = \sum_{i=1}^n u_i v_i \pmod 2.
$
%Vectors of a dual code are sometimes called ``checks'' because the code is fully defined by the relations $(u_i | v) = 0$ where $u_i$ runs over a basis for the dual code.

A linear code can be defined by providing a set of generators of the code or by providing a set of generators of the dual code.
% This is because $C = (C^\perp)^\perp$.
Given a set of generators of the dual code $u_1, \dots, u_r \in \Z_2^n$, the code is the set of vectors $v \in \Z_2^n$ that satisfy $(u_i | v) = 0$ for all $i=1, \dots, r$.
We refer to the vectors $u_i$ as the ``checks'' of the code.

The {\em syndrome map} associated with the checks $u_1, \dots, u_r \in \Z_2^n$ is the map 
\begin{align}
\sigma: \Z_2^n \longrightarrow \Z_2^r
\end{align}
that sends a vector $v$ onto the vector $s$ whose $i$th component is $s_i = (u_i | v)$.
The vector $s$ is called the {\em syndrome} of $u$.
The syndrome can be used to correct some bit flips of the corresponding vector and to map it back to the code space.

\subsection{Stabilizer codes}

A {\em stabilizer code} with length $n$ is defined by a set of commuting $n$-qubit Pauli operators $\stab$, that we call stabilizer generators, such that the group $\langle\stab\rangle$ they generate does not contain $-I$.
We refer to the group $\langle\stab\rangle$ as the {\em stabilizer group} of the code and a Pauli operator $P$ such that $\pm P \in \langle\stab\rangle$ is called a stabilizer.

The code space of a stabilizer code with length $n$ is the subspace of $(\C_2)^{\otimes n}$ that is invariant under the stabilizer generators.
If $\stab$ contains $r$ independent operators, the code space is a subspace of $(\C_2)^{\otimes n}$ isomorphic with $(\C_2)^{\otimes n-r}$.
We interpret this subspace as the encoding of $k = n-r$ (logical) qubits into $n$ (physical) qubits.

Error correction with a stabilizer code is based on the measurement of a set of stabilizer generators $\stab = \{S_1, \dots, S_r\}$.
This produces an outcome $\sigma \in \Z_2^r$ called the {\em syndrome}.
Here the outcome $\sigma_i$ corresponds to the eigenvalue $(-1)^{\sigma_i}$ for the $i$th measured stabilizer generator.
If the state of the system before syndrome measurement is a code state suffering from Pauli error $E \in \Pauli_n$, then the measurement returns a syndrome $\sigma$ such that $\sigma_i = [E, S_i]$
where $[E, S_i]$ is 0 if $E$ and $S_i$ commute and 1 otherwise.
In the absence of error, the syndrome is trivial.
Therefore, a non-trivial syndrome can be used to detect and correct errors on encoded states.

We denote by $\PauliWithPhase_n$ the set of $n$-qubit Pauli operators and by $\Pauli_n$ its quotient by the phase operators $\{\pm I, \pm i I\}$.
In other words, in $\Pauli_n$ we consider Pauli operators up to a phase.
It is natural to consider Pauli errors up to a phase because a global phase has no effect on quantum states.
Given a probability distribution over Pauli errors,
a {\em most likely error decoder} (MLE decoder) is a map $\Z_2^r \mapsto \Pauli_n$ that sends a syndrome $\sigma$ to the Pauli $E$ with maximum probability among Pauli errors with syndrome $\sigma$.

A {\em logical operator} is a Pauli operator that commutes with all stabilizer generators of the code.
A logical operator is {\em non-trivial} if it is not a stabilizer.
For any subset $A$ of $\Pauli_n$, we use the notation 
\begin{align}
A^\perp = \{ Q \in \Pauli_n \ | \ \forall P \in A, [P, Q] = 0 \}
\end{align}
for the set of $n$-qubit Pauli operators that commute with all the elements of $A$.
If $\langle\stab\rangle$ is a stabilizer group, then $\stab^\perp$ is the set of logical operators and $\stab^\perp \backslash \langle \stab \rangle$ is the set of non-trivial logical operators of the code.

\subsection{Clifford circuits}

We consider circuits composed of Clifford unitaries and Pauli measurements.
Qubit preparation of the state $\ket 0$ can be emulated by measuring $Z$.
We assume that all the qubits are present at the beginning of the circuit and no qubit can be added through the circuit. This is not a restriction because a qubit can be reinitialized using a single qubit measurement as explained above.
We do not impose any restriction on the size of the support of allowed operations, but our results also apply to circuits limited to single-qubit and two-qubit gates or contexts in which connectivity is limited to nearest neighbors, for example.
Our results also hold for Clifford circuits with Pauli fixes, but are not included here to keep the notation simple.

A {\em Clifford circuit} is a sequence of $s$ Clifford operations $\circuit = (C_1, \dots, C_s)$ applied to $n$ qubits.
Each operation $C_i$ is applied at a given time step, denoted $\level(C_i) \in \{1, 2, \dots\}$, that we call the {\em level} of the operation.
The {\em depth} of a circuit, denoted $\Delta(\circuit)$, is the maximum level of its operations.
To guarantee that they can be implemented simultaneously, we require that two operations with the same level have disjoint support.
We assume that the circuit operations are given in chronological order, that is $i \leq j$ implies $\level(C_i) \leq \level(C_j)$.

We refer to the state of the $n$ qubits of the circuit before the first circuit operation as the {\em input state} of the circuit and the final state of the $n$ qubits is the {\em output state} of the circuit.
We do not place any constraint on the input state of the circuit.

Throughout, $m$ denotes the number of Pauli measurements in the circuit and 
we denote by $S_1, \dots, S_m$ the measured operators. For all $j = 1, \dots, m$, the level of the measurement $S_j$ is denoted $\ell_j$.
By definition, we have $0 \leq m \leq s$, where $s$ is the number of operations of the circuit, and each $\ell_j$ is an integer in the set $\{1, \dots, \Delta(\circuit)\}$.

\subsection{Circuit faults}

We consider a standard circuit noise model where each circuit operation $C_i$ and each idle qubit is faulty with probability $p_i$.
If a unitary gate or an idle qubit is faulty, it is followed by a uniform random Pauli error $E$ acting on its support.
A faulty measurement is followed by a uniform random Pauli error $E$ acting on its support combined with a flip of the measurement outcome with probability 1/2.

Following \cite{bacon2015sparse}, it is convenient to represent faults as Pauli operators acting on half-integer time steps as shown in Fig.~\ref{fig:relation_code_stabilizer}.
A {\em fault operator} for a circuit $\circuit$ acting on $n$ qubits with depth $\Delta$ is a Pauli operator $F \in \Pauli_{n(\Delta+1)}$ acting on $n (\Delta+1)$ qubits indexed by pairs $(\ell + 0.5, q)$ where $\ell \in \{0, 1, \dots, \Delta\}$ represents a level of the circuit and $q \in \{1, \dots, n\}$ corresponds to a qubit.
We include $\ell = 0$ to represent faults on the input qubits of the circuit.
The component of $F$ on qubit $(\ell + 0.5, q)$ is denoted $F_{\ell + 0.5, q} \in \{I, X, Y, Z\}$.
It corresponds to the fault occurring right after level $\ell$ on qubit $q$.
We also use the notation $F_{\ell + 0.5}$ for the $n$-qubit Pauli operator $\otimes_{q =1}^n F_{\ell + 0.5, q}$ which represents the fault occurring right after the level $\ell$ of the circuit.

The flip of a measurement outcome can also be represented as a fault operator.
Let $P \in \Pauli_n$ be a measured Pauli operator and let $Q \in \Pauli_n$ be a weight-one operator $Q$ that anti-commutes with $P$.
Then, the fault operator $F$ such that $F_{\ell - 0.5} = Q$ and $F_{\ell + 0.5} = Q$ is a representation of the flip of the outcome of the measurement of $P$ at level $\ell$.

We assume a noise model that describes the probability of each combination of circuit faults. The corresponding probability distribution over the set of fault operators is denoted by $\Prob_{\cal F}$.

\subsection{Propagation of Pauli faults and cumulant}

The effect of a set of faults on the outcomes of a circuit can be determined by propagating the faults through the circuit as shown in Fig.~\ref{fig:relation_code_stabilizer}.

\begin{figure}
\centering
\includegraphics[scale=.8]{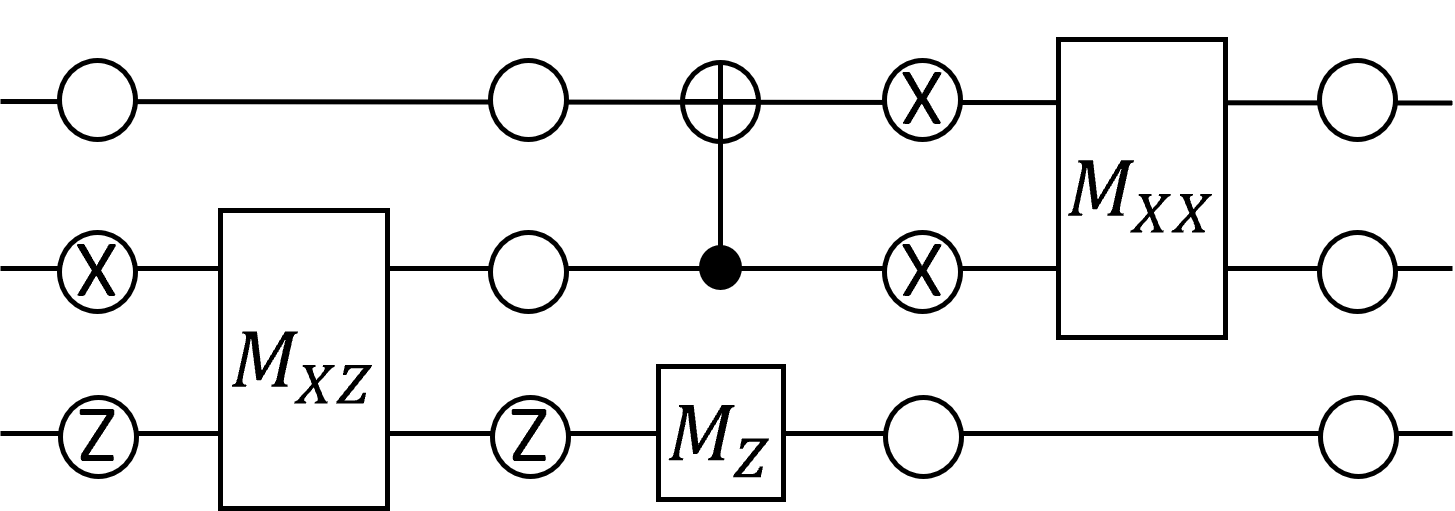}

(a)

\vspace{1cm}
\includegraphics[scale=.8]{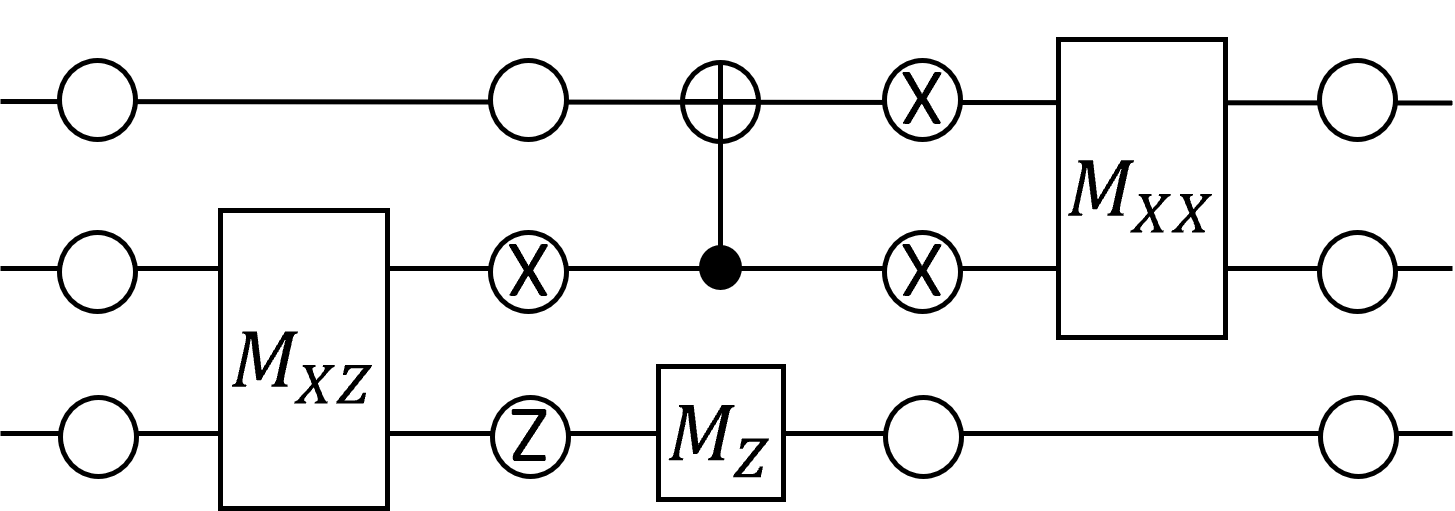}

(b)

\caption{A depth-three circuit with Pauli measurements and unitary Clifford gates.
Pauli faults are supported on the white circles.
We show a fault operator in (a) and its back-cumulant in (b) obtained by propagation of faults backward.
In this circuit, the third measurement is redundant and the three measurement outcomes $o_1, o_2, o_3$ satisfy $o_1 + o_2 + o_3 = 0 \pmod 2$. This defines a check of the outcome code.
The stabilizer generator of the spacetime code corresponding to this check is obtained by the backward accumulation of the three measurements through the circuit.
This results in the weight-four Pauli operator shown in (b) supported on the circles.
}
\label{fig:relation_code_stabilizer}
\end{figure}

The basic idea of fault propagation is that a Pauli fault $P \in \Pauli_n$ occurring before a unitary operation $U$ is equivalent to a fault $U P U^{-1}$ after the operation $g$. We refer to $U P U^{-1}$ as the {\em conjugation} of $P$ by $U$.
In the case of a Pauli measurement, a fault propagates through the corresponding projector unchanged, and it leads to a flip of the measurement outcome if the Pauli fault preceding the measurement anti-commutes with the measured operator.

The {\em cumulant} of a fault operator $F$ through a circuit $\circuit$ with depth $\Delta$, is the fault operator denoted $\propagation F$ obtained by the following procedure.
%\noindent
%{\bf Propagation procedure:}
\begin{enumerate}
\item Initialize $\propagation F$ as $\propagation F = F$.
\item For all levels $\ell = 1, 2, \dots, \Delta$ do:
\begin{enumerate}
\item [(a)] Let $E = \propagation F_{\ell - 0.5}$.
\item [(b)] Conjugate $E$ by the product of all unitary operations of $\circuit$ with level $\ell$. 
\item [(c)] Multiply $\propagation F_{\ell + 0.5}$ by $E$.
\end{enumerate}
\end{enumerate}
When propagating $\propagation F_{\ell - 0.5}$ through the operations with level $\ell$, the order in which these operations are selected is not relevant because they do not overlap.

The cumulant $\propagation F$ is defined in such a way that the operator $\propagation{F}_{\ell + 0.5}$ is the Pauli error resulting from the combination of all the faults occurring during the first $\ell$ levels of the circuit.
The cumulant was introduced in~\cite{bacon2015sparse} as the spackle operator.

\subsection{Effect of circuit faults}

Consider a depth-$\Delta$ Clifford circuit $\circuit$ acting on $n$ qubits.
Assume that the circuit contains $m$ Pauli measurements.
The $m$ measured operators are denoted by $S_1, \dots, S_m$ and for all $j = 1, \dots, m$ let $\ell_j$ denote the level of the measurement of the operator $S_j$.

Executing the quantum circuit $\circuit$ produces two types of data: 
\begin{itemize}
    \item the {\em outcome bit-string} $o \in \Z_2^m$ is the bit-string with $j$th component corresponding to the outcome of measurement $S_j$ and
    \item the {\em output state} $\rho_o$ is the state of the $n$ qubits of the system after running the circuit.
\end{itemize}
The outcome bit-string is generally not deterministic. We denote by $\Prob_{\circuit, \rho}$ the {\em outcome distribution} of the circuit when the input state of the circuit is $\rho$. Given input $\rho$, the circuit $\circuit$ yields the outcome bit-string $o$ with probability $\Prob_{\circuit, \rho}(o)$.
The output state $\rho_o$ generally depends on the outcome bit-string $o$.
A set of circuit faults represented by a fault operator $F \in \Pauli_{n(\Delta+1)}$ can affect the outcome bit-string and the output state of the circuit.

Our goal is to protect the outcome bit-string and the output state of the circuit from the effect of faults occurring throughout the circuit. 
Faults are represented by a fault operator $F \in \Pauli_{n(\Delta+1)}$.
We use the notation $\Prob_{\circuit, \rho}^{(F)}$ for the outcome distribution of the circuit $\circuit$ with input state $\rho$ and faults corresponding to $F$. The corresponding output state is denoted by $\rho_{o}^{(F)}$.

The impact of a fault $F$ on outcome $o_j$ is determined by the cumulant of $F$ just prior to measurement $S_j$. If $\propagation F_{\ell_j - 0.5}$ anti-commutes with $S_j$, then the outcome is flipped.  Likewise, the impact on the output state $\rho_o$ is determined by the cumulant of $F$ at the end of the circuit.

\begin{prop} [Effect of faults] \label{prop:fault_effects}
If $F \in \Pauli_{n(\Delta+1)}$ then, we have
\begin{itemize}
    \item $\Prob_{\circuit, \rho}^{(F)}(o) = \Prob_{\circuit, \rho}(o+f)$ where $f = (f_1, \dots, f_m) \in \Z_2^m$ such that $f_j = [\propagation{F}_{\ell_j - 0.5}, S_j]$,
    \item $\rho_{o}^{(F)} = E \rho_{o+f} E$ where $E = \propagation{F}_{\Delta + 0.5}$.
\end{itemize}
\end{prop}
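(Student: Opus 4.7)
The plan is to prove both assertions simultaneously by induction on the level $\ell = 0, 1, \dots, \Delta$ of the circuit. The inductive claim is that, after processing all operations with level at most $\ell$ in the faulty circuit, the joint distribution of the outcomes of measurements with $\ell_j \le \ell$ together with the unnormalized post-measurement state of the $n$ qubits equals the corresponding joint distribution in the fault-free circuit, modified so that (i) each outcome $o_j$ with $\ell_j \leq \ell$ is shifted by $f_j = [\propagation{F}_{\ell_j - 0.5}, S_j]$, and (ii) the state is conjugated by $\propagation{F}_{\ell + 0.5}$. The base case $\ell = 0$ reduces to observing that applying the fault $F_{0.5}$ to the input state $\rho$ yields $F_{0.5} \rho F_{0.5}^\dagger$, which agrees with the invariant since $\propagation{F}_{0.5} = F_{0.5}$ and no measurement has been performed.

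For the inductive step, assume the invariant holds after level $\ell$ and consider the operations at level $\ell + 1$. Because operations at the same level have disjoint support, they can be processed in any order. For a unitary Clifford $U$ at level $\ell + 1$, the identity $U P \sigma P^\dagger U^{-1} = (U P U^{-1})(U \sigma U^{-1})(U P U^{-1})^\dagger$, valid for any Pauli $P$, shows that applying $U$ after the accumulated fault is equivalent to first applying $U$ in the fault-free circuit and then conjugating by $U P U^{-1}$. For a Pauli measurement of $S$ at level $\ell + 1$, the projector $\Pi_b^S = (I + (-1)^b S)/2$ satisfies the key commutation relation $\Pi_b^S P = P \Pi_{b + [P,S]}^S$ for every Pauli $P$, so the unnormalized post-measurement state conditioned on outcome $b$ equals $P \Pi_{b + [P,S]}^S \sigma \Pi_{b + [P,S]}^S P^\dagger$. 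Taking the trace and using cyclicity yields the Born probability of outcome $b + [P,S]$ in the fault-free circuit, which is exactly the outcome shift prescribed by the invariant when $S = S_j$ with $\ell_j = \ell + 1$. After processing all level-$(\ell+1)$ operations, the accumulated Pauli is conjugated by the product of the level-$(\ell+1)$ unitaries, matching step (b) of the cumulant procedure; multiplying by the new fault $F_{\ell + 1.5}$ then yields $\propagation{F}_{\ell + 1.5}$, preserving the invariant.

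Setting $\ell = \Delta$ delivers both claims of the proposition: the vector of outcome shifts is exactly $f$ with $f_j = [\propagation{F}_{\ell_j - 0.5}, S_j]$, which gives $\Prob_{\circuit, \rho}^{(F)}(o) = \Prob_{\circuit, \rho}(o + f)$; and the final conjugation by $E = \propagation{F}_{\Delta + 0.5}$ yields $\rho_o^{(F)} = E \rho_{o+f} E^\dagger = E \rho_{o+f} E$ after absorbing a global phase that has no physical effect. The main obstacle is the measurement subcase, where one must simultaneously verify the commutation identity $\Pi_b^S P = P \Pi_{b + [P,S]}^S$ for arbitrary multi-qubit measured Pauli $S$ and arbitrary Pauli fault $P$, confirm that the resulting Born probabilities match those of the fault-free circuit with shifted outcome, and arrange for the fault to be coherently pushed through the projector so that it remains available for subsequent levels. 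Everything else amounts to formal bookkeeping enabled by the disjoint-support property of same-level operations.
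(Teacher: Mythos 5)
Your proof is correct and follows the same underlying approach as the paper's: propagate the accumulated fault level by level and observe that a Pauli fault flips a measurement outcome exactly when it anti-commutes with the measured operator. The paper's own proof is a brief appeal to the definition of the cumulant, whereas you supply the operator-level details it leaves implicit (the projector identity $\Pi_b^S P = P\,\Pi_{b+[P,S]}^S$, the unitary conjugation identity, and the explicit induction on levels), which is a faithful formalization rather than a different argument.
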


The pair $(f, E)$ associated with a fault operator $F$ is called the {\em effect} of $F$ and is denoted $\eff(F)$.
We say that $F$ {\em flips the $j$-th outcome of the circuit} if $[\propagation{F}_{\ell_j - 0.5}, S_j] = 1$ and we refer to the Pauli error $E = \propagation{F}_{\Delta + 0.5}$ as the {\em residual error} of $F$.

\begin{proof}
By definition of the cumulant $\propagation{F}$, the accumulation of all the circuit faults occurring before the measurement of $S_j$ is equivalent to the Pauli error $\propagation{F}_{\ell_j - 0.5}$.
As a result, an outcome bit-string $o$ is mapped onto the vector $o+f$ given in the proposition.
When the vector $o$ is observed, the output state is the state $E \rho_{o+f} E$ because $o$ was mapped onto $o + f$ and the accumulation of all the circuit faults acts as an error $E = \propagation{F}_{\Delta + 0.5}$ on the output state of the circuit.
\end{proof}

\subsection{Correction of circuit faults with circuit decoders}

For the circuit to be correct in the presence of faults, we must guarantee that both the outcome bit-string and the output states can be recovered.
An error affecting the outcome bit-string can be just as harmful as an error on the output state.
For example, one can prepare a state $\ket +$ by measuring $X$, which produces either $\ket +$ with outcome 0 or $\ket -$ with outcome 1. To obtain the state $\ket +$, we apply $Z$ to the qubit when the outcome 1 is observed.
An incorrect outcome in this protocol directly translates into a qubit error.

The output of the circuit lives in a stabilizer code (which is a trivial code for some circuits) and therefore it is sufficient to correct the residual error up to a stabilizer of this code.
Formally, define the {\em output stabilizer group} of the circuit, denoted $\langle\stab_o\rangle$, to be the stabilizer group that keeps invariant all the possible output states $\rho_o$ of the circuit corresponding to the outcome $o$ in the absence of noise.
For a fixed outcome $o$, we refer to an operator of $\langle\stab_o\rangle$ as an {\em output stabilizer} and the stabilizer code associated with the stabilizer group generated by $\langle\stab_o\rangle$ as the {\em output stabilizer code} of the circuit.
The output stabilizer group of a circuit can be computed efficiently as in Algorithm~\ref{algo:circuit_checks} as originally proposed in~\cite{gottesman1998heisenberg, aaronson2004improved}.

A {\em circuit decoder} for the circuit $\circuit$ is defined to be a map 
\begin{align*}
\Z_2^m & \longrightarrow \Z_2^m \times \Pauli_n \\
o & \longmapsto (\hat f, \hat E)
\end{align*}
that takes as an input an outcome bit-string $o$ and that returns an estimation of the outcome flips $\hat f$ and an estimation of the residual error $\hat E$.

Assume that the faults corresponding to $F \in \Pauli_{n(\Delta+1)}$ occur, and let $(f, E) = \eff(F)$ be the effect of $F$.
The decoder is said to be {\em successful} if $\hat f = f$ and if the error $\hat E E$ is correctable for the output stabilizer code.
Here by correctable, we mean correctable for the output stabilizer code equipped with a decoder and a noiseless syndrome extraction circuit for the measurement of the stabilizer generators.
This notion of correctability depends on the decoder used for the output code. Unless otherwise stated, we pick a minimum-weight error decoder for the output code, that is a decoder returning a minimum weight Pauli error for a given syndrome.

Denote by $\rho_M$ the density matrix of the maximally mixed state $\rho_M = \left(\frac{I}{2}\right)^{\otimes n}$.
A {\em most likely fault operator} given an outcome bit-string $o \in \Z_2^m$ is defined to be a fault operator $\hat F \in \Pauli_{n(\Delta+1)}$ that maximizes the product
\begin{align} \label{eqdef:q_MLF}
Q_{\MLF}(\hat F, o) = 
\Prob_{\circuit, \rho_M}^{(\hat F)}(o) \Prob_{\cal F}(\hat F) \cdot
\end{align}
This definition is motivated by Bayes' theorem, from which the probability of a fault operator $\hat F$ given an outcome $o$ can be written as
\begin{align} \label{eq:bayes_most_likely_fault}
\Prob(\hat F | o) 
% = \frac{\Prob(o | \hat F) \Prob(\hat F)}{\Prob(o)}
= \frac
{\Prob_{\circuit, \rho_M}^{(\hat F)}(o) \Prob_{\cal F}(\hat F)}
{\sum_{\hat F \in \Pauli_{n(\Delta+1)}} \Prob_{\cal F}(\hat F) \Prob_{\circuit, \rho_M}^{(\hat F)}(o)}
\end{align}
where the input state of the circuit is chosen to be the maximally mixed state $\rho_M$ because we do not want to assume a specific input state.
For a fixed circuit and a fixed outcome bit-string, this number is proportional with $\Prob_{\circuit, \rho_M}^{(\hat F)}(o) \Prob_{\cal F}(\hat F)$ because the denominator is independent of $\hat F$.

A circuit decoder that returns the effect of a most likely fault operator is said to be a {\em most likely fault decoder} or MLF decoder.

\section{Properties of the cumulant}
\label{sec:propagation}

This section introduces the main technical tools of this paper. In particular we introduce the back-cumulant $\backpropagation{F}$ obtained by propagating a fault operator backward, and we show in Proposition~\ref{prop:propagated_commutator} that the {\em accumulator} 
\begin{align} \label{eq:def_backcumulant_map}
\Pauli_{n(\Delta+1)} & \longrightarrow F \in \Pauli_{n(\Delta+1)} \\
F & \longmapsto \propagation{F}    
\end{align}
is the adjoint of the {\em back-accumulator}
\begin{align} \label{eq:def_cumulant_map}
\Pauli_{n(\Delta+1)} & \longrightarrow F \in \Pauli_{n(\Delta+1)} \\
F & \longmapsto \backpropagation{F} \cdot
\end{align}
This relation allows us to replace the fault propagation by the backpropagation of the measured operators and leads to the definition of stabilizer generators of the spacetime code in Section~\ref{subsec:check_operators}.

\subsection{Definition of the back-cumulant}

The {\em back-cumulant} of a fault operator $F$, denoted $\backpropagation F$, is defined similarly to the cumulant with the following procedure.
\begin{enumerate}
\item Initialize $\backpropagation F$ as $\backpropagation F = F$.
\item For all levels $\ell = \Delta, \Delta-1, \dots, 1$ do:
\begin{enumerate}
\item [(a)] Let $E = \backpropagation F_{\ell + 0.5}$.
\item [(b)] Conjugate $E$ by the inverse of the product of all unitary operations of $\circuit$ with level $\ell$. 
\item [(c)] Multiply $\backpropagation F$ by $E$.
\end{enumerate}
\end{enumerate}

Similar to the cumulant, the back-cumulant $\backpropagation F$ is defined in such a way that the operator $F_{\ell - 0.5}$ is equivalent to the accumulation (backward in time) of all the faults occurring during the levels $\ell' \geq \ell$ of the circuit.

\subsection{Explicit cumulant and back-cumulant}

The following proposition provides an explicit description of the cumulant and back-cumulant of a fault operator.

\begin{prop} [Explicit cumulant and back-cumulant] \label{prop:explicit_propagation}
Let $\circuit$ be a Clifford circuit.
Let $U_\ell$ be the product of all the unitary operations of the circuit with level $\ell$ and let $U_{i, j} = U_{j} U_{j-1} \dots U_{i+1}$ for $j \geq i$.
If $F$ is a fault operator then for all $\ell \in \{0, 1, \dots, \Delta \}$, we have
\begin{align} \label{eq:explicit_propagation}
\propagation F_{\ell + 0.5} = \prod_{i = 0}^{\ell} U_{i, \ell} F_{i + 0.5} U_{i, \ell}^{-1}
\end{align}
and
\begin{align} \label{eq:explicit_backpropagation}
\backpropagation F_{\ell + 0.5} =  \prod_{j = \ell}^{\Delta} U_{\ell, j}^{-1} F_{j + 0.5} U_{\ell, j} \cdot
\end{align}
\end{prop}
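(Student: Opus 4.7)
The plan is to prove both identities by straightforward induction on the level $\ell$, unfolding the iterative definitions of the cumulant and the back-cumulant. Since we work in $\Pauli_{n(\Delta+1)}$ (Pauli operators up to phase), sign issues and reorderings within a time slice never matter, which keeps the bookkeeping clean.

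For Eq.~\eqref{eq:explicit_propagation}, I would first note that the procedure defining the cumulant is equivalent to the recurrence
$$\propagation{F}_{\ell + 0.5} = F_{\ell + 0.5} \cdot U_\ell \, \propagation{F}_{\ell - 0.5} \, U_\ell^{-1},$$
valid for $\ell = 1, \dots, \Delta$, with base case $\propagation{F}_{0.5} = F_{0.5}$, matching the target formula since $U_{0,0}$ is the empty product. Then I would run forward induction on $\ell$: substituting the inductive hypothesis for $\propagation{F}_{\ell-0.5}$ into the recurrence, distributing conjugation by $U_\ell$ across the product, and applying the telescoping identity $U_\ell \cdot U_{i,\ell-1} = U_\ell U_{\ell-1} \cdots U_{i+1} = U_{i,\ell}$ to rewrite each factor $U_{i,\ell-1} F_{i+0.5} U_{i,\ell-1}^{-1}$ as $U_{i,\ell} F_{i+0.5} U_{i,\ell}^{-1}$. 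The outer factor $F_{\ell+0.5}$ is then absorbed as the $i=\ell$ term of the enlarged product, since $U_{\ell,\ell} = I$.

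Equation~\eqref{eq:explicit_backpropagation} follows by the mirror-image argument. I would first rewrite the back-cumulant procedure as the recurrence
$$\backpropagation{F}_{\ell - 0.5} = F_{\ell - 0.5} \cdot U_\ell^{-1} \, \backpropagation{F}_{\ell + 0.5} \, U_\ell,$$
with base case $\backpropagation{F}_{\Delta+0.5} = F_{\Delta+0.5}$ (since no iteration has yet modified the top slice), and then run backward induction on $\ell$, using the analogous identity $U_{\ell,j} \cdot U_\ell = U_j U_{j-1} \cdots U_{\ell+1} U_\ell = U_{\ell-1,j}$ to absorb the extra $U_\ell$ factor into each term of the product. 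The new factor $F_{\ell - 0.5}$ becomes the $j = \ell - 1$ term.

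The whole argument is pure index bookkeeping, so I do not expect a real obstacle; the only delicate points are correctly aligning the telescoping endpoints ($U_{\ell,\ell} = I$ and the empty product convention) and remembering that disjointness of supports at each level makes $U_\ell$ unambiguous as an ordinary product of commuting unitaries.
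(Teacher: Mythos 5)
Your proof is correct and follows essentially the same route as the paper: both arguments simply unfold the iterative definitions, with the paper phrasing it as "each component $F_{i+0.5}$ gets conjugated through levels $i+1,\dots,\ell$, i.e.\ by $U_{i,\ell}$" and you making the identical observation rigorous via a one-step recurrence and induction on $\ell$. The recurrences, base cases, and telescoping identities $U_\ell U_{i,\ell-1} = U_{i,\ell}$ and $U_{\ell,j}U_\ell = U_{\ell-1,j}$ are all correctly aligned.
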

Here, we use the convention $U_{i, j} = I$ if $j \leq i$.
When $j > i$, the operation $U_{i, j}$ is defined in such a way that it maps the faults occurring right after level $i$ onto equivalent faults occurring right after level $j$.

\begin{proof}
The component $\propagation F_{\ell + 0.5}$ of $\propagation F$ is obtained by conjugating all the components $F_{i + 0.5}$ of $F$ with $i \leq \ell$ through the level $i+1, i+2, \dots \ell$ of the circuit. This corresponds to the conjugation by $U_{i, \ell}$, justifying the propagation formula.

To back-propagate $F_{j + 0.5}$ from level $j + 0.5$ to level $\ell + 0.5$ with $j \geq \ell$, we conjugate this operator by the inverse of the unitary operations of the circuit at levels $\ell, \ell-1, \dots, j+1$ in that order.
This is equivalent to the conjugation by 
$$
U_{j+1}^{-1} \dots U_{\ell-1}^{-1} U_{\ell}^{-1}
$$
which is equal to $U_{\ell, j}^{-1}$.
\end{proof}

The operators $U_{i, j}$ obey the relations
\begin{align} \label{eq:U_composition}
U_{a, c} = U_{b, c} U_{a, b},
\end{align}
\begin{align}
U_{b, c}^{-1} U_{a, c} = U_{a, b}
\end{align}
and
\begin{align} \label{eq:U_U_inverse_composition}
U_{a, b} U_{a, c}^{-1} = U_{b, c}^{-1}
\end{align}
where $a,b,c$ are three integers such that $0 \leq a \leq b \leq c \leq \Delta$.

% The first relation is immediate by definition.
% To prove the second relation we write
% \begin{align*}
% U_{a, b} U_{a, c}^{-1}
%     & = U_{b} U_{b-1} \dots U_{a+1} (U_{c} U_{c-1} \dots U_{a+1})^{-1}
% \\
%     & = U_{b} U_{b-1} \dots U_{a+1} U_{a+1}^{-1} \dots U_{c-1}^{-1} U_{c})^{-1}
% \\
%     & = U_{b+1}^{-1} \dots U_{c-1})^{-1} U_{c}^{-1}
%     = U_{b, c}^{-1} \cdot
% \end{align*}

% The proof of the third relation relation goes as follows.
% \begin{align*}
% U_{b, c}^{-1} U_{a, c}
%     & = (U_{c} U_{c-1} \dots U_{b+1})^{-1} U_{c} U_{c-1} \dots U_{a+1}
% \\
%     & = U_{b+1}^{-1} \dots U_{c-1}^{-1} U_c^{-1} U_{c} U_{c-1} \dots U_{a+1}
% \\
%     & = U_{b} \dots U_{a+1}
%     = U_{a, b} \cdot
% \end{align*}

\begin{cor} \label{cor:propagation_automorphism}
The accumulator and the back-accumulator are automorphisms of the group $\Pauli_{n(\Delta+1)}$.
\end{cor}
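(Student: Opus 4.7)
My plan is to verify two properties for each of the two maps in~\eqref{eq:def_backcumulant_map}--\eqref{eq:def_cumulant_map}: that it is a group homomorphism and that it is injective. Since $\Pauli_{n(\Delta+1)}$ is a finite group, any injective endomorphism of it is automatically surjective, so these two properties together yield ``automorphism''.

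For the accumulator I would start from the explicit formula
\[
\propagation{F}_{\ell+0.5} = \prod_{i=0}^{\ell} U_{i,\ell}\, F_{i+0.5}\, U_{i,\ell}^{-1}
\]
supplied by Proposition~\ref{prop:explicit_propagation}. The homomorphism property will follow from three observations: (i) fault operators multiply component-wise, so $(FG)_{i+0.5} = F_{i+0.5}\, G_{i+0.5}$; (ii) conjugation by the Clifford $U_{i,\ell}$ is a group homomorphism on Pauli operators, so $U_{i,\ell}(F_{i+0.5} G_{i+0.5}) U_{i,\ell}^{-1}$ splits as a product of conjugates; and (iii) the group $\Pauli_{n(\Delta+1)}$ is abelian (we work modulo phase), so the resulting factors in the product over $i$ can be freely rearranged into $\propagation{F}_{\ell+0.5}\, \propagation{G}_{\ell+0.5}$. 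For injectivity, I would peel off the last level using the composition law~\eqref{eq:U_composition}, i.e.\ $U_{i,\ell} = U_\ell\, U_{i,\ell-1}$, which combined with observations (ii) and (iii) gives the recursion $\propagation{F}_{\ell+0.5} = U_\ell\, \propagation{F}_{\ell-0.5}\, U_\ell^{-1}\cdot F_{\ell+0.5}$. If $\propagation{F}$ is the identity, then the base case $F_{0.5} = \propagation{F}_{0.5} = I$ together with an induction on $\ell$ forces $F_{\ell+0.5} = I$ for every $\ell$, so $F$ is trivial.

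The back-accumulator case is entirely symmetric: I would use the explicit formula~\eqref{eq:explicit_backpropagation}, derive the analogous recursion $\backpropagation{F}_{\ell-0.5} = U_\ell^{-1}\, \backpropagation{F}_{\ell+0.5}\, U_\ell \cdot F_{\ell-0.5}$ from the same composition law for the $U_{i,j}$, and run the induction from $\ell = \Delta$ downward starting from $\backpropagation{F}_{\Delta+0.5} = F_{\Delta+0.5}$. I do not expect any real obstacle; the one point that requires care is consistently working inside the phase-quotient group $\Pauli_{n(\Delta+1)}$, because that abelianness is exactly what lets the products over $i$ (or $j$) be rearranged and keeps both the homomorphism argument and the inductive recursion clean.
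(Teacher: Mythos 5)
Your proposal is correct and follows essentially the same route as the paper: the homomorphism property is obtained from the explicit formula of Proposition~\ref{prop:explicit_propagation} together with the fact that conjugation respects products and that $\Pauli_{n(\Delta+1)}$ (the phase quotient) is abelian. For bijectivity the paper observes that the map is $\Z_2$-linear with a lower-triangular matrix having identity blocks on the diagonal, which is exactly the content of your level-by-level recursion $\propagation{F}_{\ell+0.5} = U_\ell\, \propagation{F}_{\ell-0.5}\, U_\ell^{-1}\cdot F_{\ell+0.5}$ and the induction showing the kernel is trivial, so the two arguments coincide in substance.
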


Thanks to this corollary, one can write 
\begin{align} \label{eq:propagation_product}
\propagation{FG} = \propagation{F}\propagation{G}
\end{align}
and
\begin{align} \label{eq:backpropagation_product}
\backpropagation{FG} = \backpropagation{F}\backpropagation{G}
\end{align}
for all fault operators $F, G$.

\begin{proof}
Based on Eq.~\eqref{eq:explicit_propagation}, we see that 
\begin{align*}
\propagation{FG}_{\ell + 0.5} 
	& = \prod_{i = 0}^{\ell} U_{i, \ell} F_{i + 0.5}G_{i + 0.5} U_{i, \ell}^{-1}
\\
	& = \prod_{i = 0}^{\ell} U_{i, \ell} F_{i + 0.5} U_{i, \ell}^{-1} U_{i, \ell} G_{i + 0.5}G_{i + 0.5} U_{i, \ell}^{-1}
\\
	& = \prod_{i = 0}^{\ell} U_{i, \ell} F_{i + 0.5} U_{i, \ell}^{-1} 
		\prod_{j = 0}^{\ell} U_{j, \ell} G_{j + 0.5} U_{j, \ell}^{-1}
\\
	& = \propagation{F}_{\ell + 0.5} \propagation{G}_{\ell + 0.5} 
\end{align*}
Therein, we can reorder the Pauli operators $U_{i, \ell} F_{i + 0.5} U_{i, \ell}^{-1}$ and $U_{j, \ell} G_{j + 0.5} U_{j, \ell}^{-1}$ because we consider Pauli operators up to a phase.
This proves that the accumulator is a group morphism.

Therefore, the accumulator is a $\Z_2$-linear map over $\Pauli_{n(\Delta+1)} \simeq \Z_2^{2n(\Delta+1)}$. To show that it is bijective, it suffices to see that its matrix is lower triangular with the identity on the diagonal in the basis obtained by concatenating Pauli operators acting on level $0, 1, \dots, \Delta$.

One can show that the back-accumulator is a group automorphism with the same technique.
\end{proof}

\subsection{Basic properties of Pauli commutators}

For two Pauli operators $P$ and $Q$, denote by $[P, Q] \in \Z_2$ the commutator of $P$ and $Q$ which takes the value $0$ if $P$ and $Q$ commute and takes the value 1 otherwise.
We will use the following basic properties of the commutator.
Let $n, m$ be integers.
For all Pauli operators $P, Q \in \Pauli_n$, we have
\begin{align} \label{eq:commutator_symmetry}
[P, Q] = [Q, P].
\end{align}
For all Pauli operators $P, Q, R \in \Pauli_n$, we have
\begin{align} \label{eq:commutator_bilinearity}
[P, QR] = [P, Q] + [P, R] \pmod 2.
\end{align}
For all Pauli operators $P, Q \in \Pauli_n$ and $P', Q' \in \Pauli_m$, we have
\begin{align} \label{eq:commutator_tensor_product}
[P \otimes P', Q \otimes Q'] = [P, Q] + [P', Q'] \pmod 2.
\end{align}
For all Pauli operators $P, Q \in \Pauli_n$ and for all Clifford unitary operations $U$, we have 
\begin{align} \label{eq:commutator_unitary}
[P, Q] = [U P U^{-1}, U Q U^{-1}].
\end{align}

\subsection{Interplay between accumulation and commutation}

Based on Proposition~\ref{prop:fault_effects}, to clarify the effect of faults on the outcomes of a circuit, it is important to understand the interplay between fault propagation and commutation; see definition of $f_j$.
The following proposition is a key technical result and is used many times in the rest of the paper.
It proves that the back-accumulator is the adjoint of the accumulator.

\begin{prop} [Adjoint of the accumulator] \label{prop:propagated_commutator}
For all fault operators $F, G$ of a circuit $\circuit$, we have
$$
[\propagation F, G] = [F, \backpropagation G] \cdot
$$
\end{prop}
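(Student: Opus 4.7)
The plan is to compute both sides using the explicit formulas of Proposition~\ref{prop:explicit_propagation} and reduce the claim to the identity $[UPU^{-1}, Q] = [P, U^{-1} Q U]$ from Eq.~\eqref{eq:commutator_unitary}, by swapping the order of summation over the circuit levels.

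First, I would expand the commutator of two fault operators $A, B \in \Pauli_{n(\Delta+1)}$ across the half-integer time slices using the tensor-product rule Eq.~\eqref{eq:commutator_tensor_product}:
\begin{align*}
[A, B] = \sum_{\ell=0}^{\Delta} [A_{\ell+0.5}, B_{\ell+0.5}] \pmod 2.
\end{align*}
Applied with $A = \propagation F$ and $B = G$, and then substituting the explicit expression $\propagation F_{\ell+0.5} = \prod_{i=0}^{\ell} U_{i,\ell} F_{i+0.5} U_{i,\ell}^{-1}$ from Eq.~\eqref{eq:explicit_propagation} together with bilinearity Eq.~\eqref{eq:commutator_bilinearity}, I get a double sum
\begin{align*}
[\propagation F, G] = \sum_{\ell=0}^{\Delta} \sum_{i=0}^{\ell} \bigl[ U_{i,\ell} F_{i+0.5} U_{i,\ell}^{-1}, \ G_{\ell+0.5} \bigr].
\end{align*}

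The key manipulation is then to apply Eq.~\eqref{eq:commutator_unitary} with the unitary $U_{i,\ell}^{-1}$, which turns each summand into $[F_{i+0.5}, U_{i,\ell}^{-1} G_{\ell+0.5} U_{i,\ell}]$, and to swap the order of summation so the inner sum runs over $\ell \geq i$. Using bilinearity again to pull the product inside the commutator yields
\begin{align*}
[\propagation F, G] = \sum_{i=0}^{\Delta} \Bigl[ F_{i+0.5}, \ \prod_{\ell=i}^{\Delta} U_{i,\ell}^{-1} G_{\ell+0.5} U_{i,\ell} \Bigr].
\end{align*}
At this point I recognize the inner product as $\backpropagation G_{i+0.5}$ by the explicit back-cumulant formula Eq.~\eqref{eq:explicit_backpropagation}, so reassembling via the tensor-product rule gives $[\propagation F, G] = [F, \backpropagation G]$ as required.

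The only subtlety I anticipate is justifying that one can freely reorder the Pauli factors when expanding the commutators and the products defining $\propagation F_{\ell+0.5}$ and $\backpropagation G_{i+0.5}$; this is legitimate because all computations take place in $\Pauli$, where Pauli operators are considered up to a global phase, as already used in the proof of Corollary~\ref{cor:propagation_automorphism}. Beyond that, the proof is essentially a book-keeping exercise: the index swap $\sum_{\ell} \sum_{i \leq \ell} = \sum_{i} \sum_{\ell \geq i}$ together with the adjoint identity for conjugation is precisely what makes the back-accumulator the adjoint of the accumulator.
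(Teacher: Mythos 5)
Your proposal is correct and follows essentially the same route as the paper's own proof: expand the commutator level by level via the tensor-product rule, insert the explicit cumulant formula, apply bilinearity and the conjugation identity of Eq.~\eqref{eq:commutator_unitary}, swap the summation order, and recognize the explicit back-cumulant. The remark about working in $\Pauli$ up to phase is a fine (and accurate) addition, but nothing essential differs from the paper's argument.
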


\begin{proof}
Because $\propagation F = \otimes_{\ell=0}^\Delta \propagation F_{\ell + 0.5}$ and 
$G = \otimes_{\ell=0}^\Delta G_{\ell + 0.5}$, we can write 
\begin{align*}
[\propagation F, G]
	& = \sum_{\ell = 0}^{\Delta} \left[\propagation F_{\ell + 0.5}, G_{\ell + 0.5}\right]
	& \text{Using Eq.~\eqref{eq:commutator_tensor_product}}
\\
	& = \sum_{\ell = 0}^{\Delta} \left[\prod_{i = 0}^{\ell} \left(U_{i, \ell} F_{i + 0.5} U_{i, \ell}^{-1} \right), G_{\ell + 0.5}\right]
	& \text{Using Eq.~\eqref{eq:explicit_propagation}}
\\
	& = \sum_{\ell = 0}^{\Delta} \sum_{i = 0}^{\ell} \left[U_{i, \ell} F_{i + 0.5} U_{i, \ell}^{-1}, G_{\ell + 0.5}\right]
	& \text{Using Eq.~\eqref{eq:commutator_bilinearity}}
\\
	& = \sum_{\ell = 0}^{\Delta} \sum_{i = 0}^{\ell} \left[F_{i + 0.5}, U_{i, \ell}^{-1} G_{\ell + 0.5} U_{i, \ell}\right]
	& \text{Using Eq.~\eqref{eq:commutator_unitary}}
\\
	& = \sum_{i = 0}^{\Delta} \sum_{\ell = i}^{\Delta} \left[F_{i + 0.5}, U_{i, \ell}^{-1} G_{\ell + 0.5} U_{i, \ell}\right]
	& \text{Interchanging the summation order}
\\
	& = \sum_{i = 0}^{\Delta} \left[F_{i + 0.5}, \prod_{\ell = i}^{\Delta} \left( U_{i, \ell}^{-1} G_{\ell + 0.5} U_{i, \ell} \right) \right]
	& \text{Using Eq.~\eqref{eq:commutator_bilinearity}}
\\	
	& = \sum_{i = 0}^{\Delta} \left[F_{i + 0.5}, \backpropagation G_{i + 0.5} \right]
	& \text{Using Eq.~\eqref{eq:explicit_backpropagation}}
\\
	& = [F, \backpropagation G]
	& \text{Using Eq.~\eqref{eq:commutator_tensor_product}} \cdot
\end{align*}
In the last equality we use the factorizations 
$F = \otimes_{i=0}^\Delta F_{i + 0.5}$ and 
$\backpropagation G = \otimes_{i=0}^\Delta \backpropagation G_{i + 0.5}$.
\end{proof}

\section{The outcome code of a Clifford circuit}
\label{sec:outcome_code}

In this section, we show that the set of possible outcome bit-strings of a Clifford circuit is almost linear code\footnote{By almost, we mean that it is not a linear subspace but an affine subspace of $\Z_2^m$ for some integer $m$. However, by changing the description of the circuit, one can make it a proper linear code.} and we propose an algorithm to efficiently compute a set of checks for this code. Our algorithm is derived from the standard stabilizer simulation algorithm of~\cite{gottesman1998heisenberg, aaronson2004improved}. In fact, removing steps 10 to 16 in Algorithm~\ref{algo:circuit_checks}, we recover a standard algorithm that computes the output stabilizer group of a Clifford circuit.
The outcome code can be used to detect and correct faults in a Clifford circuit.

We consider a depth-$\Delta$ Clifford circuit $\circuit$ acting on $n$ qubits containing $m$ measurements. 
We denote by $S_j$ the measured operators for $j = 1, \dots, m$ and by $\ell_j$ the level of the measurement of the operator $S_j$.

In many cases, the outcomes observed through a circuit are not independent. 
For instance, if a measurement is repeated twice in a row, we expect to obtain the same outcome twice.
The following theorem proves that the set of outcomes of a Clifford circuit is an affine subspace. It is defined by a set of affine checks of the form $\sum_{k \in K} o_k = 0$ or 1.

\begin{theorem} [Outcome code of a Clifford circuit] \label{theorem:outcome_code}
The set of all possible outcome bit-strings for a Clifford circuit $\circuit$ with $m$ Pauli measurements is an affine subspace $\outcome(\circuit)$ of $\Z_2^m$. Moreover a set of affine equations defining $\outcome(\circuit)$ is computed by Algorithm~\ref{algo:circuit_checks}.
\end{theorem}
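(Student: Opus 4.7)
The plan is to prove the theorem by running a modified stabilizer simulation of the circuit that records, at each measurement, whether the outcome is forced by previous outcomes or truly free. Throughout the simulation I would maintain a set of Pauli operators $G_1, \dots, G_t \in \PauliWithPhase_n$ on the current qubits together with affine labels $\lambda_i$: each $\lambda_i = \sum_{k \in K_i} o_k + b_i$ is an affine function of the outcomes already observed, recording the eigenvalue bit of $G_i$ forced by the past measurements. These generators and labels play the role of the stabilizer tableau of~\cite{gottesman1998heisenberg, aaronson2004improved}, extended to remember how each tableau element was built up from previous outcomes.

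Processing the circuit chronologically, each unitary operation $U$ at level $\ell$ conjugates every generator $G_i \mapsto U G_i U^{-1}$ and leaves $\lambda_i$ unchanged. When the $j$th measurement $S_j$ is reached I test whether $\pm S_j \in \langle G_1, \dots, G_t \rangle$. If $(-1)^b S_j = \prod_{i \in I} G_i$ for some subset $I$ and bit $b$, then the outcome $o_j$ is forced and equals $\sum_{i \in I} \lambda_i + b$; substituting the definitions of the $\lambda_i$'s yields an affine equation purely in the outcomes $o_1, \dots, o_j$, which is emitted as a check. Otherwise $S_j$ anticommutes with some $G_i$, and a standard argument on the form of the projector $(I \pm S_j)/2$ shows that, starting from the maximally mixed state, the outcome $o_j$ is then an independent uniform random bit. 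I update the tableau by substituting one such anticommuting $G_i$ with $S_j$, setting its label to $o_j$, and multiplying the remaining anticommuting generators by this new element so that every tableau element commutes with $S_j$ again. No check is emitted in this case.

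Two implications must then be verified. The easy direction, that every possible outcome bit-string satisfies the recorded checks, is immediate: each check was derived from a genuine algebraic identity among propagated measured operators and therefore holds for every input $\rho$. For the converse I would fix $\rho = \rho_M$ and observe that the outcomes produced in the ``random'' branches are, conditional on the past, truly uniform $\Z_2$-valued random variables independent of everything else; hence the free bits can be chosen arbitrarily, after which the forced bits are fixed exactly by the recorded checks. Together these prove that $\outcome(\circuit)$ equals the affine subspace cut out by the emitted checks. Matching the above procedure line by line with Algorithm~\ref{algo:circuit_checks} is then straightforward: as the paper already notes, deleting its check-emitting lines recovers the standard tableau simulation of~\cite{gottesman1998heisenberg, aaronson2004improved}, so correctness of the remaining steps is inherited.

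The main obstacle I anticipate is sign bookkeeping: the constants $b_i$ must be propagated faithfully both through conjugation of tableau elements by Clifford unitaries and through the multiplicative elimination step of the measurement update, since a single sign error shifts an affine check by $1$. This careful phase tracking is precisely what explains why $\outcome(\circuit)$ is in general only an affine — rather than linear — subspace of $\Z_2^m$, justifying the further linearization step promised in Corollary~\ref{cor:outcome_code_linearization}.
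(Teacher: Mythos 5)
Your plan is essentially the paper's own proof: a stabilizer simulation augmented with outcome labels, emitting an affine check whenever the measured operator lies in the current stabilizer group, with the forward direction following from the algebraic identity and the converse from uniformity of the free bits (the paper packages the per-measurement facts in Lemma~\ref{lemma:stabilizer_update_rules} and uses the same $\rho_M$ uniformity argument you invoke in Lemma~\ref{lemma:outcome_distribution_uniformity}). The one concrete flaw is your dichotomy ``either $\pm S_j \in \langle G_1, \dots, G_t\rangle$ or $S_j$ anticommutes with some $G_i$'': there is a third case, in which $S_j$ commutes with every tableau element but is not in the group they generate (e.g.\ measuring $Z_1$ and then $Z_2$ on two qubits). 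In that case your update rule ``substitute one such anticommuting $G_i$ with $S_j$'' has nothing to act on; the correct action is simply to append $S_j$ with label $o_j$ and emit no check, and the outcome is still uniform from $\rho_M$ by the same trace computation, since $\tr(S_j S) = 0$ for every $S$ in the group when $\pm S_j$ does not belong to it. The paper's Lemma~\ref{lemma:stabilizer_update_rules} treats this commuting-but-independent case as a separate item precisely for this reason; with that case added, your argument goes through.
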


\begin{algorithm}[H]
\DontPrintSemicolon
\SetKwInOut{Input}{input}\SetKwInOut{Output}{output}
\Input{A Clifford circuit $\circuit = (C_1, \dots, C_s)$ acting on $n$ qubits.}
\Output{A set $R$ of checks for the circuit $\circuit$.}

\BlankLine
    Set $U = I$.\;
	\For{all $i=1, \dots, s$}
	{
		\If{$C_i$ is a unitary gate}
		{
		    Replace $U$ by $C_i U$.\;
		}
		\If{$C_i$ is a measurement of a Pauli operator $S_j \in \PauliWithPhase_n$.}
		{
		    Compute $S_i' = U^{-1} S_i U$.\;
		}
	}
		
	Set $R = \{\}$ and $\stab = \{\}$.\;
	\For{all $i=1, \dots, s$}
	{
    	\If{$C_i$ is a measurement of a Pauli operator $S_j \in \PauliWithPhase_n$.}
		{
			\If{$\pm S_j' \in \langle \stab \rangle$}
			{
				Decompose $\pm S_j'$ as a product of operators in $\stab$.\;
				Rewrite this decomposition in the form $\pm S_j' = \prod_{k \in K_j} S_k'$.\;
				\If{$+S_j' \in \langle \stab \rangle$}
				{
					Add the affine check $o_j + \sum_{k \in K_j} o_k = 0$ to $R$.\;
				}
				\If{$-S_j' \in \langle \stab \rangle$}
				{
					Add the affine check $o_j + \sum_{k \in K_j} o_k = 1$ to $R$.\;
				}
			}
			\Else
			{
				\If{there exists $T \in \stab$ that anti-commutes with $S_j'$}
				{
					\For{$T' \in \stab$ such that $T' \neq T$}
					{
						\If{$T'$ anti-commutes with $S_j'$}
						{
							Replace $T'$ by $TT'$ in $\stab$.\;
						}
					}
					Remove $T$ from $\stab$.\;
				}
				Add $S_j'$ to $\stab$.\;
			}
		}
	}
	\Return $R$.
	\caption{Outcome code of a Clifford circuit.}
	\label{algo:circuit_checks}
\end{algorithm}

The goal of the first subroutine of Algorithm~\ref{algo:circuit_checks} (steps $1-6$) is to move the Clifford unitaries to the end of the circuit. This is done by conjugating the measured operators $S_j$ by the previous unitary gates, producing $S_j'$.
Then, the relations between the measurement outcomes are extracted by tracking the stabilizer group of this sequence of measurements.

Except for steps $10-16$, Algorithm~\ref{algo:circuit_checks} matches the standard algorithm for stabilizer simulation~\cite{gottesman1998heisenberg, aaronson2004improved}.  The only difference is the action taken when a measured Pauli operator $S_j'$ is already an element of the stabilizer group $\langle \stab \rangle$.  In that case, the usual action is to simply output the measurement result $\pm 1$.
Instead, we determine the decomposition of $S_j'$ as a product of existing elements of $\stab$.
The formula $\pm S_j' = \prod_{k \in K_j} S_k'$ asserts that the outcome of measument $C_j$ \emph{must} match sum of outcomes from prior measurements $C_k$ the circuit.  All together, they form a check of the outcome code.

The decomposition (step 12) can be computed by a variation of Gauss-Jordan elimination in which the elements of $\stab$ are augmented to indicate the indexes of their corresponding measurements.
For a circuit with $n$ qubits and $m$ measurements, the worst-case computational complexity of this step is $O(n^2(n+m))$.
Note also that this step can be combined with checking $\pm S_j' \in \langle \stab \rangle$ (step 10), which also requires Gaussian elimination.

The proof of Theorem~\ref{theorem:outcome_code} is derived from the following lemma which is similar to Gottesman's update rules~\cite{gottesman1998heisenberg}.
\begin{lemma} [Stabilizer update rules] \label{lemma:stabilizer_update_rules}
Let $\ket \psi$ be a state of a stabilizer code $C(\stab)$ with length $n$ and assume that the Pauli operator $M \in \PauliWithPhase_n$ is measured with outcome $o \in \{-1, +1\}$.
Then, the following holds.
\begin{enumerate}
\item If $\pm M \in \langle\stab\rangle$, the outcome of the measurement of $M$ is $\pm 1$ and the state of the system is unchanged after measurement.
\item If $\pm M \notin \langle\stab\rangle$ and $M$ commutes with all the elements of $\stab$. Denote $\stab' = \stab \cup \{M\}$.
	\begin{itemize}
	\item For all outcomes $o \in \{+1, -1\}$ there exists a state of $C(\stab)$ such that the measurement of $M$ has outcome $o$ with non-zero probability.
	\item For all states $\ket {\psi'}$ in $C(\stab')$ there exists a state $\ket \psi \in C(\stab)$ such that the measurement of $M$ projects $\ket \psi$ onto $\ket {\psi'}$ with non-zero probability.
	\end{itemize}
\item If $\pm M \notin \langle\stab\rangle$ and $M$ anti-commutes with an element $S_1$ of $\langle\stab\rangle$. Let $S_1, S_2, \dots, S_r$ be a generating set for $\langle\stab\rangle$ such that $S_2, \dots, S_r$ commute with $M$. Denote $\stab' = \{ M, S_2, \dots, S_r \}$.
	\begin{itemize}
	\item For all outcomes $o \in \{+1, -1\}$ there exists a state of $C(\stab)$ such that the measurement of $M$ has outcome $o$ with non-zero probability.
	\item For all states $\ket {\psi'}$ in $C(\stab')$ there exists a state $\ket \psi \in C(\stab)$ such that the measurement of $M$ projects $\ket \psi$ onto $\ket {\psi'}$ with non-zero probability.
	\end{itemize}
\end{enumerate}
\end{lemma}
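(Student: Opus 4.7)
The plan is to handle each of the three cases by working directly with the projectors $\Pi_{\pm} = (I \pm M)/2$ onto the $\pm 1$ eigenspaces of $M$, together with the code space projector $\Pi_{\stab} = \frac{1}{|\langle \stab\rangle|}\sum_{S \in \langle \stab\rangle} S$. The three cases split naturally according to the commutation behavior of $M$ with $\langle \stab\rangle$, and in each case the hypotheses ensure that $\langle \stab'\rangle$ is still a valid stabilizer group (in particular that $-I \notin \langle \stab'\rangle$), a point I would verify first.

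For case 1, every $\ket\psi \in C(\stab)$ satisfies $(\pm M)\ket\psi = \ket\psi$, so $\Pi_{\pm}\ket\psi = \ket\psi$ and $\Pi_{\mp}\ket\psi = 0$. Thus the outcome $\pm 1$ occurs deterministically and the post-measurement state equals $\ket\psi$. For case 2, since $M$ commutes with every element of $\langle \stab\rangle$, the projectors $\Pi_{\pm}$ commute with $\Pi_{\stab}$. The decomposition $\Pi_{\stab} = \Pi_+ \Pi_{\stab} + \Pi_- \Pi_{\stab}$ together with $\Pi_\pm \Pi_\stab = \Pi_{\stab \cup \{\pm M\}}$ (the projector onto $C(\stab \cup \{\pm M\})$) shows that $C(\stab)$ decomposes as the orthogonal direct sum of $C(\stab \cup \{M\})$ and $C(\stab \cup \{-M\})$. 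A straightforward dimension count (both summands have dimension $2^{n-r-1}$ when $\stab$ has $r$ independent generators, using $\pm M \notin \langle \stab\rangle$) shows both are nonzero, yielding existence of states producing each outcome. For the second bullet, given $\ket{\psi'} \in C(\stab')$ with outcome $o$, the state $\ket\psi = \ket{\psi'}$ itself lies in $C(\stab) \subseteq C(\stab')$ and measurement of $M$ leaves it unchanged with probability one, giving the requested preimage.

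For case 3, the key observation is the intertwining identity $S_1 \Pi_{\pm} S_1^{-1} = \Pi_{\mp}$, which follows from $S_1 M S_1^{-1} = -M$. Consequently $S_1$ restricts to a unitary bijection between $\Pi_+ C(\stab)$ and $\Pi_- C(\stab)$, and since at least one of these subspaces is nonzero (because $\Pi_+ + \Pi_- = I$ while $C(\stab) \neq 0$), both are. Picking any nonzero vector in $\Pi_o C(\stab)$ produces a state with outcome $o$ occurring with positive probability. For the second bullet, given $\ket{\psi'} \in C(\stab')$, I will set $\ket\psi = \frac{1}{\sqrt 2}(I + S_1)\ket{\psi'}$. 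Since $S_1^2 = I$ this state is $S_1$-invariant; since $S_2,\dots,S_r$ commute with both $S_1$ and $M$ (so in particular with $\Pi_o$ and hence with $\ket{\psi'}$), they stabilize $\ket\psi$ as well; hence $\ket\psi \in C(\stab)$. Finally $\Pi_o \ket\psi = \frac{1}{\sqrt 2}\Pi_o \ket{\psi'} + \frac{1}{\sqrt 2}\Pi_o S_1 \ket{\psi'} = \frac{1}{\sqrt 2}\ket{\psi'} + \frac{1}{\sqrt 2}S_1 \Pi_{-o}\ket{\psi'} = \frac{1}{\sqrt 2}\ket{\psi'}$, using $\Pi_o \ket{\psi'} = \ket{\psi'}$ and the intertwining, so measurement of $M$ projects $\ket\psi$ onto $\ket{\psi'}$ with probability $1/2$.

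The main obstacle I expect is case 3: both directions require carefully choosing the preimage state so that it lies in $C(\stab)$ (not merely in a larger space) and so that the measurement of $M$ actually projects it onto the prescribed target, and verifying $-I \notin \langle \stab'\rangle$ for the replacement $S_1 \mapsto oM$. The remaining verifications are routine consequences of the Pauli commutation algebra and the fact that the measurement projectors are expressible as $(I \pm M)/2$.
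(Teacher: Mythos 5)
Your proof is correct, and for case 3 it takes a genuinely different route from the paper. The paper handles case 3 by choosing logical operators commuting with $M$, evaluating the outcome probability of each logical basis state $\ket{\bar x}$ via a trace over the stabilizer group $\langle\stab_x\rangle$, and then extending to arbitrary states of $C(\stab')$ by linearity of the projection. You instead use the intertwining identity $S_1 \Pi_{\pm} S_1^{-1} = \Pi_{\mp}$ to show the two eigencomponents of any code state have equal norm (so both outcomes occur with probability $1/2$), and you produce the preimage of an arbitrary $\ket{\psi'} \in C(\stab')$ explicitly as $\frac{1}{\sqrt 2}(I+S_1)\ket{\psi'}$, checking it is stabilized by $S_1, S_2, \dots, S_r$ and that $\Pi_+\ket\psi = \frac{1}{\sqrt2}\ket{\psi'}$. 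This avoids choosing a logical basis altogether and is arguably cleaner; the paper's trace computation, on the other hand, makes the probability $1/2$ appear from an explicit counting argument over the stabilizer group. Your case 2 is essentially the paper's argument, with the first bullet upgraded to a direct-sum-plus-dimension-count justification that the projected state is nonzero (a point the paper glosses over). Two small corrections: the inclusion in case 2 should read $C(\stab') \subseteq C(\stab)$, not the reverse (adding a stabilizer generator shrinks the code space — your subsequent use of it is nevertheless correct); and in case 3 you should note that $\|\frac{1}{\sqrt2}(I+S_1)\ket{\psi'}\|=1$ follows from $\langle\psi'|S_1|\psi'\rangle=0$, which holds because $S_1$ anticommutes with $M$ and $M\ket{\psi'}=\ket{\psi'}$.
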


\begin{proof}
The first item is an immediate consequence of the postulates of quantum mechanics.

Consider the second item.
For the state $\ket{\psi_o} = \frac{I+(-1)^oM}{2} \ket \psi$ where $\ket \psi$ is a state of $C(\stab)$, the outcome of the measurement $M$ is $o$ with probability 1 proving the first part.
%$$
%\tr \left( \frac{I+(-1)^oM}{2} \braket{\psi_o} \right)
%= \tr \left(\braket{\psi_o}\right) 
%= 1
%$$
Moreover, the measurement of $M$ maps any state $\ket {\psi'}$ of $C(\stab')$ onto itself with probability 1.

It remains to prove the third item.
Consider a set of logical operators $\bar X_1, \bar Z_1, \dots, \bar X_k, \bar Z_k$ of the stabilizer code $C(\stab)$.
Without loss of generality, we can assume that all these logical operators commute with $M$. Otherwise, if a logical operator $L$ anti-commutes with $M$, replace $L$ by $LS_1$ which is another representative of the same logical operator.
Therefore, the operators $\bar X_1, \bar Z_1, \dots, \bar X_k, \bar Z_k$ also form a basis of logical operators for the code $C(\stab')$.

Consider the logical state $\ket{\bar x}$ in the stabilizer code $C(\stab)$ where $x \in \Z_2^k$. It is the state of $C(\stab)$ fixed by the operators $(-1)^{x_i}\bar Z_i$.
This state is mapped onto the logical state $\ket{\bar x}$ of $C(\stab')$ with probability $1/2$.
To see this, denote $\stab_x = \stab \cup \{(-1)^{x_1}\bar Z_1, \dots, (-1)^{x_k}\bar Z_k\}$.
Then, $\braket{\bar x} = 2^{-n} \sum_{S \in \stab_x} S$ and the measurement of $M$ produces the outcome $o$ with probability 
\begin{align*}
	\tr \left( \frac{I+(-1)^oM}{2} \braket{\bar x} \right)
	= 2^{-(n+1)} \left(
	\tr \sum_{S \in \langle\stab_x\rangle} S 
	+ (-1)^o \tr \sum_{S \in \langle\stab_x\rangle} MS
	\right) 
	 = 1/2
	\cdot
\end{align*}
Indeed, the first sum is $\sum_{S \in \langle\stab_x\rangle} S = 2^n$ because all the Pauli operators are traceless except the identity which has trace $2^n$.
The second sum is trivial because $MS = \pm I$ would imply $\pm M \notin \stab_x$ which is impossible because $M$ anti-commute with $\stab_x$.

Therefore, we demonstrated the existence of a state $\ket{\bar x}$ of $C(\stab)$ which leads both measurement outcomes $o = 0$ or $1$ with probability $1/2$. 
Moreover, its image when the outcome is trivial is the state $\ket{\bar x}$ of the code $C(\stab')$.
Therefore, we can reach any of the basis state $\ket{\bar x}$ of $C(\stab')$ from some state of $C(\stab)$ by measuring $M$.
By linearity of the projection, this proves that one can reach any state of the code $C(\stab')$.
\end{proof}

\begin{proof} [Proof of Theorem~\ref{theorem:outcome_code}]
Based on Lemma~\ref{lemma:stabilizer_update_rules}, the set of states that belong to the output of the subcircuit $(C_1, \dots, C_{i-1})$ is a stabilizer code.
Moreover, a set of stabilizer generators $\stab_{i-1}$ for this code is computed by updating the stabilizer generators of the circuit after each operation as in Algorithm~\ref{algo:circuit_checks}.

If the operator $C_i$ is the measurement of an operator $S_j$ such that neither $S_j$ nor $-S_j$ belong to the stabilizer group of the outcome of the circuit $(C_1, \dots, C_{i-1})$, then the outcome $o_j$ can be either 0 or 1.
However, if $\pm S_j$ belongs to the stabilizer group, its outcome is constrained and this defined a check of the affine code. This check is computed in Algorithm~\ref{algo:circuit_checks}.
\end{proof}

Without loss of generality, we can assume that the outcome code $\outcome(\circuit)$ of a Clifford circuit not only an affine subspace but also a proper $\Z_2$-linear code.
Indeed, replacing the measured operator $S_j$ by $-S_j$ for all the measurements such that $-S_j \in \langle \stab \rangle$ in Algorithm~\ref{algo:circuit_checks} makes all the checks linear without modifying the output of the circuit.
 This leads to the following corollary.

\begin{cor} [Linearization of the outcome code] \label{cor:outcome_code_linearization}
Let $\circuit$ be a Clifford circuit and $\circuit'$ the circuit obtained by running Algorithm~\ref{algo:circuit_checks}, replacing $S_j$ by $-S_j$ upon encountering condition $-S_j \in \langle\stab\rangle$.
Then the outcome code of $\circuit'$ is a linear code.
\end{cor}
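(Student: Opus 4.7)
The plan is to apply Theorem~\ref{theorem:outcome_code} directly to $\circuit'$ and show that every check returned by Algorithm~\ref{algo:circuit_checks} on $\circuit'$ is homogeneous, i.e.\ has right-hand side $0$. The pivotal observation is that the set $\stab$ maintained by the algorithm evolves identically on $\circuit$ and on $\circuit'$: the sign replacement $S_j \mapsto -S_j$ is triggered only inside the sub-case $-S_j \in \langle\stab\rangle$, which sits within the outer branch ``$\pm S_j \in \langle\stab\rangle$'' where the algorithm merely appends a check to $R$ and leaves $\stab$ untouched. In every other situation the two circuits execute exactly the same operation, so $\stab$ is updated identically.

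I would formalize this as an induction on the index $i$ of the operation being processed: before step $i$, the set $\stab$ is the same in the two runs. The inductive step is a case analysis on $C_i$. Unitary gates are identical in both circuits and conjugate $\stab$ the same way. A measurement with $\pm S_j \notin \langle\stab\rangle$ is not replaced, augments $\stab$ in the same manner, and emits no check. A measurement with $+S_j \in \langle\stab\rangle$ is not replaced and both runs emit the same linear check $o_j + \sum_{k \in K_j} o_k = 0$. A measurement with $-S_j \in \langle\stab\rangle$ becomes in $\circuit'$ a measurement of $M_j = -S_j$; since $+M_j = -S_j \in \langle\stab\rangle$, the run on $\circuit'$ enters the ``$+M_j \in \langle\stab\rangle$'' sub-branch with the same decomposition $+M_j = \prod_{k \in K_j} S_k$, appends the \emph{linear} check $o_j + \sum_{k \in K_j} o_k = 0$, and leaves $\stab$ untouched. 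This closes the induction and simultaneously shows that every check emitted for $\circuit'$ is homogeneous.

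By Theorem~\ref{theorem:outcome_code}, these checks cut out $\outcome(\circuit')$, which is therefore the solution set of a homogeneous $\Z_2$-linear system, hence a linear subspace of $\Z_2^m$. There is no deep obstacle here: the only point that requires care is the inductive bookkeeping showing that $\stab$ stays synchronized between the two runs, and this follows directly from the fact that the sign replacement is confined to the branch that does not touch $\stab$.
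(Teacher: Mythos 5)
Your proposal is correct and follows the same route as the paper: the paper's justification is the one-sentence remark preceding the corollary (the sign flip occurs only in the branch where $\stab$ is left untouched, so it converts the affine checks with right-hand side $1$ into linear ones without affecting anything else), and your induction on the operation index simply makes that bookkeeping explicit. The one point worth noting is that the elements of $\stab$ referenced in each decomposition $K_j$ come only from measurements processed in the ``else'' branch, which are never sign-flipped --- you implicitly use this when asserting the decompositions coincide in the two runs, and it is exactly why the argument goes through.
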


For all remaining sections, we assume that the outcome code $\outcome(\circuit)$ is linear. 
For obvious reasons, we may refer to a vector of $\outcome(\circuit)^\perp$ as a check of the outcome code $\outcome(\circuit)$.
The outcome code can be immediately used to correct faults in a quantum circuit.
Indeed, if a fault configuration induces a flip of some of the circuit outcomes, then it can be detected using the values of the checks of the outcome code.

\section{The spacetime code of a Clifford circuit}
\label{sec:space_time_code}

As a linear code, the outcome code provides natural means of detecting and correcting errors in the classical bit-string of measurement outcomes of a circuit. 
However, we also need to correct the output qubits of the circuit.
In order to correct qubit errors, we need a way of mapping syndromes of the outcome code back on to corresponding qubit errors, i.e., a circuit decoder. One way to build this map is to construct a lookup table by enumerating combinations of faults recording their syndrome. But this approach scales exponentially with circuit size, and so is feasible only for small circuits.

Instead, we might hope to leverage existing techniques for constructing decoders for stabilizer codes.  In this setting, we imagine that Pauli errors can occur on qubits, but syndrome information is obtained perfectly.  Contrast this with the circuit setting in which errors can occur on any component, including measurement outcomes.

In this section, we introduce a stabilizer code, the spacetime code, associated with a Clifford circuit. We then show that the problem of correcting faults in the circuit reduces to the problem of correcting Pauli errors in this stabilizer code.

\subsection{Definition of check operators}
\label{subsec:check_operators}

Here, we define the check operators that are the stabilizers of the spacetime code.

It is convenient to introduce another notation from \cite{bacon2015sparse}.
The fault operator $F \in \Pauli_{n(\Delta+1)}$ obtained by placing a $n$-qubit Pauli operator $P$ at level $\ell + 0.5$ is denoted $\eta_{\ell + 0.5}(P)$.
By definition, we have $F_{\ell + 0.5} = P$ and $F_{\ell' + 0.5} = I$ for all $\ell' \neq \ell$.

For any vector $u \in \outcome(\circuit)^\perp$, define the {\em check operator} to be the back-cumulant $\backpropagation{F(u)}$ of the fault operator
\begin{align} \label{def:relation_operator}
F(u) = \prod_{j = 1}^{m} \eta_{\ell_j - 0.5}(S_j^{u_j})
\end{align}
Recall that $S_1, \dots, S_m$ are the $m$ measured operators of the circuit and $S_j$ is measured at level $\ell_j$.
In other words, $F(u)$ is the fault operator obtained by placing each measured operator $S_j$ with $u_j = 1$ at the level right before it is measured.

One may prefer to use a cumulant instead of a back-cumulant. Then, we can define the check operators as $\propagation{F'(u)}$ where
\begin{align}
F'(u) = \prod_{j = 1}^{m} \eta_{\ell_j + 0.5}(S_j^{u_j}) \cdot
\end{align}
The definition is discussed in Appendix~\ref{appendix:alternative_def_of_check_operators} where we prove that $\backpropagation{F(u)} = \propagation{F'(u)}$ for vectors $u$ of $\outcome(\circuit)^\perp$.
We prefer using $\backpropagation{F(u)}$ because it makes some proofs more straightforward.
What makes the back-accumulator natural is the fact that it is the adjoint of the accumulator (Proposition~\ref{prop:propagated_commutator}). Informally, this means that we can replace the propagation of faults by the backpropagation of measurements. We will make this claim rigorous throughout this section.

\begin{prop} \label{prop:check_operator_morphism}
The map $u \mapsto \backpropagation{F(u)}$ is an injective group morphism from $\Z_2^m$ to $\Pauli_{n(\Delta+1)}$.
\end{prop}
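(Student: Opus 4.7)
The plan is to factor the claim through the map $F: u \mapsto F(u)$ and appeal to the fact (Corollary~\ref{cor:propagation_automorphism}) that the back-accumulator $G \mapsto \backpropagation{G}$ is itself a group automorphism of $\Pauli_{n(\Delta+1)}$. Because a composition of morphisms is a morphism and a composition with a bijection preserves injectivity, it suffices to establish that $F: \Z_2^m \to \Pauli_{n(\Delta+1)}$ is itself an injective group morphism.

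For the morphism property, I would compute directly. Given $u, v \in \Z_2^m$, each $\eta_{\ell_j - 0.5}(\cdot)$ is a morphism on $\Pauli_n$ (placing an $n$-qubit operator at the specified level), so
\begin{align*}
F(u+v)
    &= \prod_{j=1}^m \eta_{\ell_j - 0.5}\bigl(S_j^{u_j + v_j}\bigr)
    = \prod_{j=1}^m \eta_{\ell_j - 0.5}(S_j^{u_j}) \eta_{\ell_j - 0.5}(S_j^{v_j})
    = F(u) F(v),
\end{align*}
where I have used that $S_j^2 = I$ in $\Pauli_n$ so that $S_j^{u_j + v_j} = S_j^{u_j} S_j^{v_j}$ even though addition in $\Z_2$ is taken mod $2$, and that factors supported at distinct levels (resp.\ disjoint qubits, within the same level) commute up to phase and can be reordered freely in $\Pauli_{n(\Delta+1)}$.

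For injectivity, suppose $F(u) = I$ in $\Pauli_{n(\Delta+1)}$. Fix a level $\ell$ and read off the $(\ell - 0.5)$-component: this is the $n$-qubit product $\prod_{j:\, \ell_j = \ell} S_j^{u_j}$. By the standing assumption that two operations at the same level act on disjoint qubit supports, the operators $\{S_j : \ell_j = \ell\}$ have pairwise disjoint supports. Hence the product is the identity if and only if each factor $S_j^{u_j}$ is the identity on its own support, and since each $S_j$ is a non-trivial Pauli, this forces $u_j = 0$ for every $j$ with $\ell_j = \ell$. Ranging over all levels gives $u = 0$, as desired.

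The only delicate step is the injectivity argument, which hinges on the disjoint-support hypothesis for simultaneous operations; without it, one could have nontrivial linear relations among the $S_j$'s at the same level. Everything else reduces to bookkeeping once one recalls that the back-accumulator is a bijective morphism, which has already been established.
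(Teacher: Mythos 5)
Your proof is correct and follows essentially the same route as the paper's: factor the map as the composition of $u \mapsto F(u)$ with the back-accumulator, invoke Corollary~\ref{cor:propagation_automorphism} and Eq.~\eqref{eq:backpropagation_product}, and reduce everything to showing $u \mapsto F(u)$ is an injective morphism. The only difference is that you spell out the morphism identity $F(u+v)=F(u)F(v)$ and the disjoint-support injectivity argument, both of which the paper simply asserts.
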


\begin{proof}
For all $u, v \in \Z_2^m$, the operators $F(u)$ and $F(v)$ satisfy $F(u)F(v) = F(u+v)$.
Combined with Eq.~\eqref{eq:backpropagation_product}, this leads to $\backpropagation{F(u)} \backpropagation{F(v)} = \backpropagation{F(u+v)}$ which proves that 
$u \mapsto \backpropagation{F(u)}$ is a group morphism.
It is injective as the composition of an injective map $u \mapsto {F(u)}$ and a bijection $F \mapsto \backpropagation F$ (Corollary~\ref{cor:propagation_automorphism}).
\end{proof}

We will use extensively the relation $\backpropagation{F(u+v)} = \backpropagation{F(u)}\backpropagation{F(v)}$
which is a consequence of Proposition~\ref{prop:check_operator_morphism}.

\subsection{The spacetime code}

The following theorem proves that the check operators $\backpropagation{F(u)}$ form a stabilizer group. We refer to the corresponding stabilizer code as the {\em spacetime code} of the circuit, denoted $\spacetime(\circuit)$.

\begin{theorem} [The spacetime code] \label{theorem:space_time_code}
Let $\circuit$ be a Clifford circuit with depth $\Delta$.
Then, the set of all the check operators $\backpropagation{F(u)}$ for $u \in \outcome(\circuit)^\perp$ is a stabilizer group.
Moreover, if the outcome code $\outcome(\circuit)$ of the circuit has parameters $[m, k]$, the spacetime code of the circuit
$\spacetime(\circuit)$ has parameters $[[N, K]]$ with $N=n(\Delta+1)$ and $K = N-(m-k)$.
\end{theorem}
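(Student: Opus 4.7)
The plan is to verify three things: (i) the check operators pairwise commute, (ii) the group they generate does not contain $-I$ (so it is genuinely a stabilizer group), and (iii) the parameters come out right. Proposition~\ref{prop:check_operator_morphism} already gives me that $u \mapsto \backpropagation{F(u)}$ is an injective group morphism, so restricting to $\outcome(\circuit)^\perp$ immediately produces a subgroup of $\Pauli_{n(\Delta+1)}$ of rank $\dim_{\mathbb F_2}\outcome(\circuit)^\perp = m-k$. The rest of the work is to check commutativity and to count.

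For commutativity, I would use the adjoint relation of Proposition~\ref{prop:propagated_commutator} to derive the identity
\[
[\backpropagation{F(u)}, F(v)] \;=\; [F(u), \propagation{F(v)}] \;=\; \sum_j u_j \bigl[\propagation{F(v)}_{\ell_j - 0.5}, S_j\bigr] \;=\; \bigl(u \mid f(F(v))\bigr),
\]
where $f(F(v))$ is the outcome-flip effect of the fault $F(v)$ as in Proposition~\ref{prop:fault_effects}. The key sub-claim is $f(F(v)) \in \outcome(\circuit)$ for every $v \in \Z_2^m$: the fault $F(v)$ places $S_j^{v_j}$ immediately before the measurement of $S_j$, and since the projector satisfies $P_s^{(S_j)}\, S_j = (-1)^s P_s^{(S_j)}$, the fault is absorbed up to a global phase, leaving both the outcome distribution and post-measurement state unchanged. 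Hence $\Prob_\circuit^{(F(v))} = \Prob_\circuit$, so $\outcome + f(F(v)) = \outcome$ and $f(F(v)) \in \outcome$. For $u \in \outcome(\circuit)^\perp$ this forces $(u \mid f(F(v))) = 0$, giving $[\backpropagation{F(u)}, F(v)] = 0$ for every $v$. To upgrade this to $[\backpropagation{F(u)}, \backpropagation{F(v)}]=0$ for $u,v\in\outcome(\circuit)^\perp$, a symmetric application of the adjoint yields $[\backpropagation{F(u)}, \backpropagation{F(v)}] = (u \mid f(\backpropagation{F(v)}))$, so it is enough to show that $f(\backpropagation{F(v)}) \in \outcome(\circuit)$ as well; this is where the main technical work lies, and I expect it to follow by expanding $\propagation{\backpropagation{F(v)}}_{\ell_k - 0.5}$ via the explicit formulas of Proposition~\ref{prop:explicit_propagation} and cancelling terms pairwise, using that each basis check $u^{(j)} \in \outcome(\circuit)^\perp$ corresponds algebraically to a relation $\pm S_j = \prod_{k\in K_j} \tilde S_k$ between propagated measurement operators.

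For the absence of $-I$ and the counting, the commutativity holds in $\Pauli_{n(\Delta+1)}$, which is defined modulo phase, so any resulting abelian subgroup lifts to a genuine stabilizer group after at most sign adjustments of generators, exactly in the spirit of Corollary~\ref{cor:outcome_code_linearization}. The number of physical qubits is $N = n(\Delta+1)$ directly from the definition of the fault space, and by the injectivity of $u \mapsto \backpropagation{F(u)}$ on $\outcome(\circuit)^\perp$, the rank of the stabilizer group is exactly $\dim_{\Z_2}\outcome(\circuit)^\perp = m-k$, yielding $K = N-(m-k)$. The main obstacle, as noted, is the promotion of the commutator identity from $[\backpropagation{F(u)}, F(v)]=0$ to $[\backpropagation{F(u)}, \backpropagation{F(v)}]=0$, since the back-accumulator is not a symplectic map and one cannot appeal to automatic preservation of isotropy.
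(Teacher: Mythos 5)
Your overall skeleton matches the paper's: Proposition~\ref{prop:check_operator_morphism} gives the subgroup structure, injectivity rules out $-I$ and fixes the rank at $m-k$, and the parameters follow. The counting and the $-I$ argument are fine. The genuine gap is the commutativity of the check operators, which is the heart of the theorem and which you explicitly leave open. Your first step is correct and is essentially Lemma~\ref{lemma:sabilizer_error_vs_relation_operator}: $[\backpropagation{F(u)}, F(v)] = (u \mid f(F(v)))$ vanishes because a fault $S_j$ inserted immediately before the measurement of $S_j$ is absorbed by the projector and has no physical effect, so $f(F(v)) \in \outcome(\circuit)$. But your reduction of the full statement to ``$f(\backpropagation{F(v)}) \in \outcome(\circuit)$ for $v \in \outcome(\circuit)^\perp$'' achieves nothing: that condition says precisely that $(u \mid f(\backpropagation{F(v)})) = 0$ for all $u \in \outcome(\circuit)^\perp$, which is word-for-word the commutativity you are trying to prove. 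You have restated the problem, not reduced it.

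Moreover, the route you sketch for closing it --- expanding $\propagation{\backpropagation{F(v)}}$ via Proposition~\ref{prop:explicit_propagation} and ``cancelling terms pairwise'' --- does not go through as described. Writing out the double cumulant, the component $F(v)_{j+0.5}$ appears in $\propagation{\backpropagation{F(v)}}_{\ell+0.5}$ with multiplicity $\min(j,\ell)+1$ (all copies carrying the same conjugation, by Eq.~\eqref{eq:U_composition} and Eq.~\eqref{eq:U_U_inverse_composition}), so modulo $2$ the surviving terms depend on the parity of the level rather than cancelling in pairs. The paper avoids the double cumulant entirely: Proposition~\ref{prop:check_operators_commutation} proves $[\backpropagation{F(u)}_{\ell+0.5}, \backpropagation{F(v)}_{\ell+0.5}] = 0$ by downward induction on $\ell$, using the recursion $\backpropagation{F(u)}_{\ell+0.5} = F(u)_{\ell+0.5}\, U_{\ell,\ell+1}^{-1} \backpropagation{F(u)}_{\ell+1.5} U_{\ell,\ell+1}$ to split the commutator into four terms: one trivial because measured operators at a common level commute, one handled by the induction hypothesis, and two cross terms killed by Lemma~\ref{lemma:sabilizer_error_vs_relation_operator} applied to $\eta_{\ell+1.5}(F(u)_{\ell+0.5})$. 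Some argument of this kind --- or an equally concrete substitute --- is required before your proof is complete.
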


\begin{proof}
Based on Proposition~\ref{prop:check_operator_morphism}, the set of operators of the form $\backpropagation{F(u)}$ with $u \in \outcome(\circuit)^\perp$ is a subgroup of $\PauliWithPhase_{n(\Delta+1)}$. Proposition~\ref{prop:check_operators_commutation} proves that this subgroup is commutative.
Moreover, it cannot contain $-I$ because the map $u \mapsto \backpropagation{F(u)}$ with value in $\Pauli_{n(\Delta+1)}$ is injective. Indeed, if there exists $u \neq 0$ such that $\backpropagation{F(u)} = -I$, then modulo a phase we have $\backpropagation{F(u)} = \backpropagation{F(0)}$ which contradicts the injectivity.
Another consequence of the injectivity is that the rank of the image stabilizer group is equal to the dimension of $\outcome(\circuit)^\perp$ which is $m-k$. This provides the stabilizer code parameters.
\end{proof}

% Let $u_1, \dots, u_r \in \outcome(\circuit)^\perp$ be the outcome checks returned by Algorithm~\ref{algo:circuit_checks}.
% The {\em syndrome of an outcome bit-string} $o \in \Z_2^m$ is defined to be the vector $\sigma_\outcome(o) \in \Z_2^r$ whose $i$th component is
% $
% (o | u_i).
% $
% The {\em syndrome of a fault operator} $F \in \Pauli_{n(\Delta+1)}$ is defined to be the vector $\sigma_\spacetime(F) \in \Z_2^r$ whose $i$th component is 
% $
% [F, \backpropagation{F(u_i)}].
% $

% \begin{prop}
% Let $o \in \Z_2^m$ be an outcome bit-string obtained by running the circuit $\circuit$ on an input state $\rho$ with faults represented by a fault operator $F$.
% Then, we have
% $
% \sigma_\outcome(o) = \sigma_\spacetime(F).
% $
% \end{prop}

% \begin{proof}
% This is an immediate application of Lemma~\ref{lemma:outcome_check_and_check_operator}.
% \end{proof}

% Based on this proposition, the problem of finding a most likely set of circuit faults given a outcome $o \in \Z_2^m$ can be solved by finding a most likely Pauli error for the spacetime code for the Pauli noise model corresponding to the distribution of circuit faults.
% This reduces the problem of correction of circuit faults to a standard error correction problem in stabilizer code.

\subsection{Application to the correction of circuit faults}

Here, we use the spacetime code to show that one can reduce the problem of correcting circuit faults to the correction of Pauli errors in a stabilizer code.

We consider a set of checks $u_1, \dots, u_r$ for the outcome code $\outcome(\circuit)$ of a Clifford circuit.
The syndrome map associated with these checks is denoted 
$$
\sigma_{\outcome}: \Z_2^m \rightarrow \Z_2^r.
$$
Recall that the $i$th syndrome bit of a vector $v\in\Z_2^m$ is $(u_i | v)$.
We consider the spacetime code $\spacetime(\circuit)$ with stabilizer generators $\backpropagation{F(u_1)}, \dots, \backpropagation{F(u_r)}$.
The corresponding syndrome map is denoted 
$$
\sigma_{\spacetime}: \Pauli_{n(\Delta+1)} \longrightarrow \Z_2^r
$$
and the $i$th syndrome bit of a Pauli error $F\in\Pauli_{n(\Delta+1)}$ is $[\backpropagation{F(u_i)}, F]$.

An MLE decoder for the stabilizer code $\spacetime(\circuit)$ is a map 
$$
D_\spacetime: \Z_2^r \longrightarrow \Pauli_{n(\Delta+1)}
$$
that returns a most likely Pauli error given a syndrome $\sigma_{\spacetime}(F)$.
Here, we consider Pauli errors according to the distribution $\Prob_{\cal F}$ of circuit faults.

Finally, recall that the effect of a fault operator $F$ in a circuit is obtained from a map 
$$
\eff: \Pauli_{n(\Delta+1)} \longrightarrow \Z_2^m \times \Pauli_n 
$$
that maps a fault operator onto its corresponding outcome bit-string and the output state.

\begin{theorem} [Circuit decoder from stabilizer code decoder] \label{theorem:circuit_decoder_from_stabilizer_decoder}
Let $\circuit$ be a Clifford circuit.
If $D_\spacetime$ is a MLE decoder for stabilizer code $\spacetime(\circuit)$, then 
$
\eff \circ D_\spacetime \circ \sigma_{\outcome}
$
is a MLF circuit decoder for the circuit $\circuit$.
\end{theorem}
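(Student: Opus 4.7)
The strategy is to reduce, on the nose, the MLF optimization to MLE decoding of $\spacetime(\circuit)$. The key is to show that for every fault operator $F$ with effect $\eff(F)=(f,E)$, the spacetime syndrome of $F$ and the outcome syndrome of its flip vector $f$ coincide, i.e.\ $\sigma_\spacetime(F)=\sigma_\outcome(f)$. Fixing an outcome check $u_i\in\outcome(\circuit)^\perp$, one would compute
\[
\sigma_\spacetime(F)_i \;=\; [\backpropagation{F(u_i)},F] \;=\; [F(u_i),\propagation F]
\]
via the adjoint relation of Proposition~\ref{prop:propagated_commutator}. Since $F(u_i)$ is supported only at the half-integer slots $\ell_j-0.5$, with component $S_j^{(u_i)_j}$, applying the slot-wise commutator rule \eqref{eq:commutator_tensor_product} and bilinearity \eqref{eq:commutator_bilinearity} yields
\[
[F(u_i),\propagation F] \;=\; \sum_{j=1}^m (u_i)_j\,[S_j,\propagation F_{\ell_j-0.5}] \;=\; \sum_{j=1}^m (u_i)_j f_j \;=\; (u_i \mid f),
\]
where the middle equality uses the identification of $f_j$ in Proposition~\ref{prop:fault_effects}. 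Hence $\sigma_\spacetime(F)=\sigma_\outcome(f)$.

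Next I would observe that, with the maximally mixed input $\rho_M$, the fault-free outcome distribution is uniform on $\outcome(\circuit)$. Iterating Lemma~\ref{lemma:stabilizer_update_rules} along the circuit, each measurement whose outcome is not forced by the current stabilizer group yields $0$ or $1$ with probability $1/2$ from $\rho_M$, while each forced measurement returns its determined value with probability $1$; hence $\Prob_{\circuit,\rho_M}(o')=1/|\outcome(\circuit)|$ for $o'\in\outcome(\circuit)$ and is zero otherwise. Combining this with Proposition~\ref{prop:fault_effects} and the syndrome identity above,
\[
\Prob^{(F)}_{\circuit,\rho_M}(o) \;=\; \Prob_{\circuit,\rho_M}(o+f) \;=\; \frac{1}{|\outcome(\circuit)|}\mathbf{1}\bigl\{\sigma_\outcome(o)=\sigma_\spacetime(F)\bigr\}.
\]
Plugging into \eqref{eqdef:q_MLF}, the quantity $Q_\MLF(F,o)$ is, up to a constant independent of $F$, equal to $\Prob_{\cal F}(F)\cdot\mathbf{1}\{\sigma_\spacetime(F)=\sigma_\outcome(o)\}$. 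Its maximizers over $F$ are therefore precisely the Pauli operators of maximal $\Prob_{\cal F}$-probability having spacetime syndrome $\sigma_\outcome(o)$, which by definition is $D_\spacetime(\sigma_\outcome(o))$. Applying $\eff$ extracts the pair $(\hat f,\hat E)$ associated with a most likely fault operator, so $\eff\circ D_\spacetime\circ\sigma_\outcome$ is an MLF circuit decoder.

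The only non-routine ingredient is the syndrome identity of the first step: conceptually it is the \emph{adjoint transfer} that replaces forward propagation of a fault by backward propagation of the measured observables appearing in an outcome check, and its whole content is packaged in Proposition~\ref{prop:propagated_commutator}. Everything else --- the slot decomposition of commutators, the uniformity of the fault-free distribution for $\rho_M$, and the resulting proportionality of $Q_\MLF$ to $\Prob_{\cal F}$ on the correct syndrome class --- is routine bookkeeping.
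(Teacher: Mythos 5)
Your proposal is correct and follows essentially the same route as the paper's proof: it combines the uniformity of the fault-free outcome distribution on $\outcome(\circuit)$ for input $\rho_M$ (the paper's Lemma~\ref{lemma:outcome_distribution_uniformity}) with the syndrome identity $\sigma_\spacetime(F)=\sigma_\outcome(f)$ (the paper's Lemma~\ref{lemma:outcome_check_and_check_operator}) to reduce maximizing $Q_{\MLF}$ to finding a maximum-probability fault with the prescribed spacetime syndrome. The only difference is that you inline the proofs of these two lemmas --- deriving the syndrome identity directly from the adjoint relation of Proposition~\ref{prop:propagated_commutator} and the uniformity from the stabilizer update rules --- where the paper cites them, so the substance is identical.
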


We first prove a lemma.
Recall that $\rho_M$ denotes the $n$-qubit maximally mixed state.
The indicator function of a set $A$ is denoted $\delta_{A}$. It takes the value $\delta_{A}(x) = 1$ if $x \in A$ and 0 otherwise.

\begin{lemma} \label{lemma:outcome_distribution_uniformity}
Let $\circuit$ be a Clifford circuit and let $F$ be a fault operator with effect $\eff(F) = (f, E)$.
If $\outcome(\circuit)$ is the outcome code of $\circuit$, then we have
\begin{align}
\Prob_{\circuit, \rho_M}^{(F)}(o) = \frac{1}{2^k} \delta_{f + \outcome(\circuit)}(o)
\end{align}
where $k = \dim \outcome(\circuit)$.
\end{lemma}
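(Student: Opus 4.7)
The plan is to first reduce to the fault-free case. By the first item of Proposition~\ref{prop:fault_effects}, $\Prob_{\circuit, \rho_M}^{(F)}(o) = \Prob_{\circuit, \rho_M}(o+f)$, and since addition by $f$ is a bijection on $\Z_2^m$ we have $\delta_{f + \outcome(\circuit)}(o) = \delta_{\outcome(\circuit)}(o+f)$, so it suffices to establish
\begin{equation*}
\Prob_{\circuit, \rho_M}(o) \;=\; 2^{-k}\, \delta_{\outcome(\circuit)}(o)
\end{equation*}
for the fault-free circuit. Since $U \rho_M U^{-1} = \rho_M$ for every Clifford unitary $U$, the Clifford gates do not influence the outcome distribution, and the entire analysis reduces to studying successive measurements of the Heisenberg-propagated operators $S_1, \dots, S_m$ tracked by Algorithm~\ref{algo:circuit_checks} starting from a maximally mixed state.

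I would then proceed by induction on the measurement index $j$, maintaining the joint invariant: conditioned on outcomes $(o_1, \dots, o_j)$ satisfying every check that Algorithm~\ref{algo:circuit_checks} has appended to $R$ during its first $j$ measurement steps, the resulting state is the maximally mixed state over the stabilizer code generated by the current $\stab_j$ (with signs fixed by the observed outcomes), and the joint probability equals $2^{-k_j}$ where $k_j$ is the number of unforced measurements among $S_1, \dots, S_j$. The inductive step is a short Born-rule calculation: writing the maximally mixed state on $C(\stab_{j-1})$ as $2^{-n} \sum_{S \in \langle \stab_{j-1} \rangle} S$ (with the signs encoding the outcome history), one computes the probability of outcome $o_j$ by tracing against $(I + (-1)^{o_j} S_j)/2$ and uses that $\tr(P) = 2^n$ exactly when $P = I$. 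This yields $1/2$ when $\pm S_j \notin \langle \stab_{j-1}\rangle$ (cases~2 and 3 of Lemma~\ref{lemma:stabilizer_update_rules}), and yields $1$ or $0$ when $\pm S_j \in \langle \stab_{j-1}\rangle$ depending on whether $o_j$ satisfies the affine check that Algorithm~\ref{algo:circuit_checks} appends to $R$ at that step.

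Multiplying the conditional probabilities, $\Prob_{\circuit, \rho_M}(o)$ factors as $2^{-k_m}$ times the indicator that every check produced by Algorithm~\ref{algo:circuit_checks} is satisfied. By Theorem~\ref{theorem:outcome_code} that indicator is exactly $\delta_{\outcome(\circuit)}(o)$, and the appended checks are linearly independent (each one is recorded the first time a new relation among outcomes appears), so $|R| = m - k_m$ and hence $\dim \outcome(\circuit) = k_m = k$. The main subtlety I expect is the sign bookkeeping: one must carry the signs of stabilizer generators honestly through Clifford conjugations and through the case-3 replacement $T' \mapsto T T'$ in Algorithm~\ref{algo:circuit_checks}, in order to verify that the forced-outcome conditions produced by the induction line up exactly with the affine (rather than merely linear) checks that Algorithm~\ref{algo:circuit_checks} records; everything beyond that is routine trace arithmetic together with the dimension count.
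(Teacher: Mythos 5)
Your proposal is correct and follows essentially the same route as the paper: reduce to the fault-free case via Proposition~\ref{prop:fault_effects}, use that Clifford unitaries preserve $\rho_M$, and observe that each unforced measurement yields a uniform outcome while each forced one is deterministic and matches the check recorded by Algorithm~\ref{algo:circuit_checks}. Your explicit induction with the Born-rule trace computation simply spells out what the paper compresses into the sentence that measurements of the maximally mixed state return uniform outcomes and preserve maximal mixedness (over the current code space), together with the dimension count $k = m - |R|$.
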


\begin{proof}
Consider first the case of a trivial fault operators $F = I$.
Then, we know from Theorem~\ref{theorem:outcome_code} that the set of outcomes $o$ with non-zero probability is a subset of the outcome code $\outcome(\circuit)$.
The main difference with Theorem~\ref{theorem:outcome_code} is that we restrict the input state to be the maximally mixed state.
The unitary operations of the circuit preserve the maximally mixed states and applying a Pauli measurement to the maximally mixed state returns a uniform outcome and the post-measurement state is the maximally mixed state.
This proves that all outcome bit-strings of $\outcome(\circuit)$ can occur when the input state of the circuit is fixed to be the maximally mixed state $\rho_M$. 
Moreover, they occur with the same probability.
This proves that 
$$
\Prob_{\circuit, \rho_M}^{(I)}(o) = \frac{1}{2^k} \delta_{\outcome(\circuit)}(o) \cdot
$$
Injecting $F$ in the circuit shifts the outcome bit-string $o$ by $f$, which leads to the shifted indicator function in the lemma.
\end{proof}

\begin{proof} [Proof of Theorem~\ref{theorem:circuit_decoder_from_stabilizer_decoder}]
We need to prove that for all $o \in \Z_2^m$, the fault operator $F = D_\spacetime \circ \sigma_{\outcome}(o)$ maximizes the function 
$
Q_{\MLF}(F, o)
$
introduced in Eq.~\eqref{eqdef:q_MLF}.
First, denote $(f, E) = \eff (F)$.
Based on Lemma~\ref{lemma:outcome_distribution_uniformity}, we have 
\begin{align}
Q_{\MLF}(F, o)
= \frac{1}{2^k} \delta_{f + \outcome(\circuit)}(o) \Prob_{\cal F}(F)
\end{align}
which means that one can maximize $Q_{\MLF}(F, o)$ by selecting a fault operator $F$ such that $o \in f + \outcome(\circuit)$ that has maximum probability $\Prob_{\cal F}(F)$.
The condition $o \in f + \outcome(\circuit)$ is equivalent to 
$\sigma_{\outcome}(f) = \sigma_{\outcome}(o)$.
Finally, Lemma~\ref{lemma:outcome_check_and_check_operator}, we know that $\sigma_{\outcome}(f) = \sigma_{\spacetime}(F)$.
Overall, we showed that a MLF decoder for the circuit $\circuit$ is a circuit decoder that returns the effect of a fault operator $F$ such that $\sigma_{\spacetime}(F) = \sigma_{\outcome}(o)$ with maximum probability.
Therefore the map $\eff \circ D_\spacetime \circ \sigma_{\outcome}$ is a MLF decoder.
\end{proof}

\subsection{Beyond MLF circuit decoders}

A MLF decoder is sometimes good enough but it ignores the fact that two fault operators may have the same effect and two residual errors that differ in an output stabilizer can be considered equivalent.

In some cases one may prefer a {\em most likely coset decoder} (MLC decoder), which is defined to be a circuit decoder that takes as an input an outcome bit-string $o$ and returns a pair $(\hat f, \hat E)$ that maximizes the sum
\begin{align} \label{eqdef:q_MLC}
Q_{\MLC}(\hat f, \hat E, o) = 
\sum_{E' \in \hat E\langle\stab_o\rangle}
\sum_{\substack{\hat F \in \Pauli_{n(\Delta+1)} \\ \eff(\hat F) = (\hat f, E')}}
\Prob_{\circuit, \rho_M}^{(\hat F)}(o) \Prob_{\cal F}(\hat F) \cdot
\end{align}
Wherein, $\hat E\langle\stab_o\rangle$ denotes the coset of $\hat E$ in the output stabilizer group.
Based on Eq.~\eqref{eq:bayes_most_likely_fault}, this sum is proportional with the sum of the probabilities of all the fault operators that leads to an effect $(\hat f, E')$ where the Pauli error $E'$ is equivalent to $\hat E$ up to a stabilizer of the output code.

One can design a MLC decoder using a subsystem version of the spacetime code obtained by gauging logical operators which correspond to faults with trivial effect.
Namely, one can define a subsystem code from the union of two sets of gauge operators: 
(i) the operators $G(P, \ell)$, defined by Eq.~\eqref{eq:image-operator}, for each $\ell$ and for each $P$ that commutes with all the measurements at level $\ell$ and 
(ii) the operators $\eta_{\ell_j - 0.5}(S_j)$ where $S_j$ is an operator measured at level $\ell_j$.
Then, the center of the gauge group is the stabilizer group of the spacetime code.
Equivalently, the subsystem spacetime code can be seen as a variant of the stabilizer spacetime code obtained by gauging the logical qubits corresponding to the operators $G(P, \ell)$ such that $P$ commutes with all the measured operators at level $\ell$.
This subsystem code can be seen as a generalization of the circuit-to-code construction proposed in~\cite{bacon2015sparse}. Here we consider circuits including intermediate measurements and multi-qubit measurement whereas~\cite{bacon2015sparse} focuses on post-selection circuits.

Then, one could design a a MLC decoder from a decoder for the subsystem spacetime code that returns a most likely coset of the gauge group.
We do not expand on this strategy because designing a most likely coset for a subsystem code is generally challenging. In what follows, we focus on the strategy suggested by Theorem~\ref{theorem:circuit_decoder_from_stabilizer_decoder} and we use decoders for the stabilizer spacetime code.

\section{LDPC spacetime code}
\label{sec:LDPC}

To obtain a complete scheme for the correction of faults in a Clifford circuit, we need to provide a decoder. However, decoding a general code is quite difficult. Indeed, the maximum likelihood decoding problem is NP-hard for linear codes~\cite{berlekamp1978inherent} and $\#$P-hard for stabilizer codes~\cite{iyer2015hardness}.
However, some classes of codes such as LDPC codes, which are defined by low-weight checks, admit an efficient decoder with good performance~\cite{gallager1962low}.

Here, we consider restrictions induced by a limited connectivity in the quantum hardware implementing the circuit. This imposes constraints on the spacetime code which, in some cases, make it easier to decode.
Our basic idea is to produce a set of low-weight stabilizer generators for the spacetime code and to use an LDPC code decoder.
%In this section, we propose an algorithm that produces low-weight stabilizers for a spacetime code.

\subsection{Qubit connectivity and LDPC spacetime codes}

Because the back-cumulant can spread errors through the entire circuit, the check operators $\backpropagation{F(u)}$ are not expected to have low weight.
However, if the circuit connectivity is limited, then then back-cumulant does not spread $F(u)$ as quickly.
Below we identify sufficient conditions that ensure that the spacetime code admit low-weight generators.

A family of stabilizer codes is said to be {\em LDPC} if each stabilizer group admit generators acting on $O(1)$ qubits.
A family of stabilizer codes is said to be {\em $D$-dimensional local} if the $n$ qubits of the codes can be mapped onto the vertices of a $D$-dimensional grid $\Z^D$ in such a way that the stabilizer generators are all supported on a ball with radius $O(1)$ in the grid.
$D$-dimensional local codes are a subset of quantum LDPC codes.

The depth of an outcome bit-string $u \in \Z_2^m$, denoted $\depth(u)$ is defined as
$$
\depth(u) = 
\max \{ \ell_j \ | \ u_j = 1 \} 
- \min \{ \ell_j \ | \ u_j = 1 \} 
+ 1 \cdot
$$
Recall that $\ell_j$ is the level of the $j$ th measurement of the circuit.
The measurements indexed by $j$ satisfying $u_j = 1$ are supported in a subset of $\depth(u)$ consecutive levels of the circuit $\circuit$.

A family of circuits is said to be {\em bounded} if the outcomes codes of the circuits admit a set of checks with weight $O(1)$ and depth $O(1)$.

\begin{prop} [LDPC spacetime codes] \label{prop:LDPC_condition}
Let $(\circuit_t)_{t \in \N}$ be a family of Clifford circuits.
If the family of circuits is bounded and if all circuit operations act on $O(1)$ qubits, then the spacetime codes of the circuits are LDPC.
\end{prop}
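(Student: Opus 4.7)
The plan is to fix a check $u \in \outcome(\circuit)^\perp$ of weight $w = O(1)$ and depth $d = O(1)$ and to show the corresponding stabilizer generator $\backpropagation{F(u)}$ has weight $O(1)$. Write $s_{\max} = O(1)$ for the maximum number of qubits acted on by any one circuit operation, and let $\ell_{\min}$, $\ell_{\max}$ be the smallest and largest values of $\ell_j$ among indices with $u_j = 1$, so that $\ell_{\max} - \ell_{\min} + 1 = d$. The argument has two stages: first confine the non-trivial components of $\backpropagation{F(u)}$ to the $d$ consecutive time slices $\ell \in \{\ell_{\min}-1, \ldots, \ell_{\max}-1\}$, and then bound the weight on each such slice by tracking how back-propagation spreads operators through $O(1)$ levels.

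\textbf{Confinement.} The non-trivial components of $F(u)$ sit exactly at the indices $j \in \{\ell_k - 1 : u_k = 1\}$, so the explicit formula~\eqref{eq:explicit_backpropagation} immediately gives $\backpropagation{F(u)}_{\ell+0.5} = I$ for every $\ell \geq \ell_{\max}$. For the lower cutoff I invoke the check property. Algorithm~\ref{algo:circuit_checks} emits $u$ from a Pauli identity of the form $\pm S_{j^{\star}} = \prod_{k \in K_{j^{\star}}} S_k'$ holding in the evolving stabilizer group at the time of the measurement $S_{j^{\star}}$, where each $S_k'$ is $S_k$ forward-propagated to that time. Pulling this identity back to level $\ell_{\min}-1$ using the composition rules~\eqref{eq:U_composition}--\eqref{eq:U_U_inverse_composition}, together with the fact that each $U_{\ell_k}$ commutes with $S_k$ (disjoint support at a common level), yields $\prod_{k : u_k = 1} U_{\ell_{\min}-1,\ell_k-1}^{-1} S_k U_{\ell_{\min}-1,\ell_k-1} = \pm I$. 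By Proposition~\ref{prop:explicit_propagation} the left-hand side is exactly $\backpropagation{F(u)}_{(\ell_{\min}-1)+0.5}$, so this component equals $I$ in $\Pauli_n$. The recurrence $\backpropagation{F(u)}_{\ell-0.5} = (U_\ell^{-1}\, \backpropagation{F(u)}_{\ell+0.5}\, U_\ell) \cdot F(u)_{\ell-0.5}$, which also follows from the explicit formula, combined with $F(u)_{\ell-0.5} = I$ for every $\ell < \ell_{\min}$, then propagates this identity backward by a trivial induction, giving $\backpropagation{F(u)}_{\ell+0.5} = I$ for all $\ell \leq \ell_{\min}-2$.

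\textbf{Weight on each slice, and the expected obstacle.} At the top of the window, $\backpropagation{F(u)}_{\ell_{\max}-0.5} = F(u)_{\ell_{\max}-0.5}$ is a product of at most $w$ measured operators each of support size $\leq s_{\max}$, so has weight at most $w s_{\max}$. Applying the recurrence one step at a time, conjugation by $U_\ell^{-1}$ spreads the support of any operator by a factor of at most $s_{\max}$, because operations at level $\ell$ have pairwise disjoint supports each of size $\leq s_{\max}$, and the additional factor $F(u)_{\ell-0.5}$ contributes at most another $w s_{\max}$. Iterating over the $d-1$ steps inside the window yields weight $O(w s_{\max}^{d+1})$ per slice, hence total weight $O(d \cdot w s_{\max}^{d+1}) = O(1)$ since $w$, $d$, and $s_{\max}$ are all $O(1)$. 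The delicate point I expect is the confinement step, where the combinatorial check property of $u$ must be converted into an exact Pauli identity at level $\ell_{\min}-1$ with correctly tracked signs; Corollary~\ref{cor:outcome_code_linearization} removes the $\pm$ ambiguity by replacing $u$ with its linearized counterpart, after which the remaining weight bookkeeping is routine.
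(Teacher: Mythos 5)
Your overall strategy coincides with the paper's: confine the support of $\backpropagation{F(u)}$ to the $O(1)$ levels spanned by the check, then bound how a weight-$O(w s_{\max})$ operator spreads through $O(1)$ levels of operations each touching $O(1)$ qubits. The upper confinement, the recurrence $\backpropagation{F(u)}_{\ell-0.5} = \bigl(U_\ell^{-1}\backpropagation{F(u)}_{\ell+0.5}U_\ell\bigr)F(u)_{\ell-0.5}$ (the same one the paper derives in the proof of Proposition~\ref{prop:check_operators_commutation}), and the per-slice weight bookkeeping are all correct, and this is exactly the skeleton of the paper's argument, which delegates the confinement step to Lemma~\ref{lemma:check_operator_support}.

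The gap is in your justification of the lower confinement. The boundedness hypothesis supplies a generating set of $\outcome(\circuit)^\perp$ consisting of vectors of weight and depth $O(1)$; these are arbitrary dual vectors and need not be among the checks emitted by Algorithm~\ref{algo:circuit_checks}. Your derivation of $\backpropagation{F(u)}_{\ell_{\min}-0.5}=I$ starts from ``Algorithm~\ref{algo:circuit_checks} emits $u$ from a Pauli identity $\pm S_{j^{\star}}=\prod_{k}S_k'$,'' which only covers Algorithm-\ref{algo:circuit_checks} outputs; a general low-weight, low-depth element of $\outcome(\circuit)^\perp$ is a sum of several such outputs, each of which may individually have large depth, so the identity you pull back to level $\ell_{\min}-1$ is not established for the checks the hypothesis actually hands you (the identity is true, but your argument does not prove it). This step is load-bearing: without it the backward spreading continues to level $0.5$ and the weight grows with $\Delta$. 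The paper closes it via Lemma~\ref{lemma:input_error_vs_relation_operator}: for any $u\in\outcome(\circuit)^\perp$ an input error cannot flip the check value, so $\backpropagation{F(u)}$ commutes with every $\eta_{0.5}(P)$, forcing $\backpropagation{F(u)}_{0.5}=I$, and conjugation by $U_{0,\ell}$ transports this triviality to every level $\ell<\ell_u$ since $F(u)$ has no component there. Alternatively, you can repair your own route: your identity does hold for each check emitted by Algorithm~\ref{algo:circuit_checks}, your backward recurrence then gives $\backpropagation{F(v)}_{0.5}=I$ for each such $v$, the morphism property (Proposition~\ref{prop:check_operator_morphism}) extends this to $\backpropagation{F(u)}_{0.5}=I$ for every $u$ in their span, and the same conjugation step finishes the confinement. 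With either patch the rest of your write-up stands.
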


\begin{proof}
The stabilizer group of the spacetime code is generated by the check operators $\backpropagation{F(u_i)}$ corresponding to the generators $u_1, \dots, u_r$ of $\outcome(\circuit_t)^\perp$ with bounded depth.
By Lemma~\ref{lemma:check_operator_support}, the stabilizer generators $\backpropagation{F(u_i)}$ have support on $O(1)$ levels. 
Moreover, because the circuits are bounded, the weight of $F(u)$ is bounded. Back-accumulating $F(u)$ through a circuit made with operations acting on $O(1)$ qubits leads to a bounded weight operator $\backpropagation{F(u)}$.
\end{proof}

\begin{prop} [$(D+1)$-dimensional spacetime codes] \label{prop:D_dim_local_condition}
Let $(\circuit_t)_{t \in \N}$ be a family of Clifford circuits acting on qubits placed on a $D$-dimensional grid $\Z^D$.
If the family of circuits is bounded and if all circuit operations act on qubits separated by a distance $O(1)$, then the spacetime codes of the circuits are $(D+1)$-dimensional local.
\end{prop}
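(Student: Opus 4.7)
The plan is to lift the given embedding of the data qubits into $\Z^D$ to an embedding of the spacetime qubits by sending $(\ell+0.5, q) \mapsto (\ell, q) \in \Z \times \Z^D$, and then to show that every stabilizer generator $\backpropagation{F(u_i)}$ coming from a bounded generating set $u_1, \dots, u_r$ of $\outcome(\circuit_t)^\perp$ is supported in a ball of radius $O(1)$ in $\Z^{D+1}$. The argument has the same shape as Proposition~\ref{prop:LDPC_condition}, but tracks spatial positions rather than just cardinalities.

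First, I would bound the support of $F(u_i) = \prod_{u_{i,j}=1} \eta_{\ell_j-0.5}(S_j)$ on its own. Because the family is bounded, $u_i$ has weight $O(1)$ and depth $O(1)$, so $F(u_i)$ is concentrated on $O(1)$ consecutive time-slices. Within those slices it is supported on at most $O(1)$ measured operators $S_j$, and each $S_j$ is a single circuit operation acting on qubits of pairwise distance $O(1)$. Picking any such $S_j$ as a basepoint, the remaining $O(1)$ measurement supports each contribute a set of diameter $O(1)$, and only boundedly many of them, so their union fits inside a single ball of radius $O(1)$ in $\Z^D$. Hence $F(u_i)$ itself fits inside a ball of radius $O(1)$ in $\Z^{D+1}$.

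Next, I would apply the explicit formula from Proposition~\ref{prop:explicit_propagation},
$$\backpropagation{F(u_i)}_{\ell+0.5} = \prod_{j=\ell}^{\Delta} U_{\ell,j}^{-1}\, F(u_i)_{j+0.5}\, U_{\ell,j},$$
and observe that the factor indexed by $j$ is nontrivial only where $F(u_i)_{j+0.5}$ is, i.e.\ only for $j$ inside the $O(1)$-wide band of levels touched by $u_i$; outside this band, and at target levels $\ell$ more than $\depth(u_i)$ below the bottom of it, one gets the identity. Thus $\backpropagation{F(u_i)}$ is nonidentity on only $O(1)$ time-slices. For the spatial support at each such slice I would induct on the number of conjugation steps: conjugating a Pauli by a single $U_{\ell'}$ replaces the contribution of any qubit in the current support by a subset of the support of whichever operation at level $\ell'$ contains that qubit, which has diameter $O(1)$, while qubits lying outside every level-$\ell'$ operation are unchanged. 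Hence each conjugation step enlarges the spatial support by at most an additive $O(1)$, and iterating over the $O(1)$ levels separating source from target yields a total spatial spread that is still $O(1)$.

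The hard part will be bookkeeping the constants honestly: the boundedness hypothesis must play a double role, simultaneously bounding the number of nonidentity time-slices of $\backpropagation{F(u_i)}$ and the number of back-propagation steps needed to fill them in, so that the "additive $O(1)$ per step" estimate does not silently blow up with $t$. This is exactly what the bounded-family plus bounded-operation-diameter hypotheses provide, but the argument should be written out so that the final ball radius is manifestly independent of $t$. Once that is in place, every stabilizer generator $\backpropagation{F(u_i)}$ lies in a ball of uniform $O(1)$ radius in $\Z^{D+1}$, which is the definition of a $(D+1)$-dimensional local code.
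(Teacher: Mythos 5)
Your overall strategy is the same as the paper's, which disposes of this proposition in two sentences: back-propagate the bounded-support operator $F(u_i)$ through $O(1)$ levels of spatially local operations and argue that each conjugation step enlarges the spatial support by at most an additive $O(1)$. That core of your argument is correct and is a welcome expansion of the paper's telegraphic proof. One caveat on the time direction: the explicit formula alone only shows that $\backpropagation{F(u_i)}_{\ell+0.5}$ is trivial \emph{above} the band of levels touched by $u_i$ (where every factor $F(u_i)_{j+0.5}$ is the identity); for $\ell$ below the band all the nontrivial factors are still present in the product, and the fact that they collapse to the identity is exactly Lemma~\ref{lemma:check_operator_support}, whose proof uses that $u_i$ is a check of the outcome code (via Lemma~\ref{lemma:input_error_vs_relation_operator}). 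You should cite that rather than read it off the formula.

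The genuine gap is the step ``the remaining $O(1)$ measurement supports each contribute a set of diameter $O(1)$, and only boundedly many of them, so their union fits inside a single ball of radius $O(1)$ in $\Z^D$.'' A union of boundedly many sets of bounded diameter need not lie in a single bounded ball: the sets can be arbitrarily far apart. Boundedness of the circuit family controls the \emph{weight} and \emph{depth} of the checks, not the spatial proximity of the measurements they involve. For example, a circuit on a length-$n$ line that measures $X_1$ and $X_n$ at levels $1$ and $2$ admits the weight-four, depth-two check $o_1+o_2+o_3+o_4=0$, whose $F(u)$ has support at both ends of the line; a generating set containing this check satisfies your hypotheses but yields a generator not contained in any $O(1)$ ball. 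The conclusion survives because one is free to choose better generators: by Proposition~\ref{prop:stabilizer_component}, the restriction of any stabilizer of the spacetime code to a connected component of its support in the spacetime graph is again a stabilizer, so each $\backpropagation{F(u_i)}$ factors into connected stabilizers; a connected set of bounded cardinality in the bounded-degree spacetime graph, whose edges join points at distance $O(1)$ in $\Z^{D+1}$, does lie in a ball of radius $O(1)$, and these connected pieces generate the same stabilizer group. Your proof needs this decomposition (or an argument that the generating checks can be chosen spatially clustered) to close. To be fair, the paper's own two-line proof elides exactly the same point.
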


\begin{proof}
By the same argument as in the proof of Proposition~\ref{prop:LDPC_condition}, the stabilizer generators $\backpropagation{F(u)}$ are obtained by back-accumulation of bounded weight operators $F(u)$.
Because the back-cumulant goes through local operations in $D$ dimensions, this results in local generators in $(D+1)$ dimensions.
\end{proof}

\subsection{Restriction of a stabilizer to a subset of qubits}

To generate low-weight generators for a spacetime code, we will first consider a simpler task.
We are given a stabilizer group $\langle\stab\rangle$ with length $n$ and a subset $A \subset \{1, \dots, n\}$ of the qubits. Our objective is to compute a generating set for the subgroup of operators $S \in \langle\stab\rangle$ whose support is included in $A$.

Let us first describe a naive solution.
Assume that we are given a set of stabilizer generators $\stab = S_1, \dots, S_r$.
It suffices to solve the linear system with the $n-|A|$ equations
$$
\prod_{i=1}^r (S_{i, j})^{\lambda_i} = I
$$
in the $r$ variables $\lambda_1, \dots \lambda_r \in \Z_2$ where $S_{i, j}\in\Pauli_1$ is the $j$th component of $S_i$ and $j$ varies over $\{1, \dots, n\} \backslash A$.
Using Gaussian elimination, the complexity of this approach is cubic in $n$.

This scaling may be too slow for large circuits. We can achieve a more favorable complexity in the case of a small subset $A$ using the following proposition.
Before describing our solution, we need to introduce some notations.

The {\em restriction of a Pauli operator} $P \in \Pauli_n$ to a subset of qubits $A$ is the operator $P_{|A}$ obtained by setting all the components of $P$ outside of $A$ to $I$.
The {\em restriction of a subgroup} $G$ of $\Pauli_n$ to the subset $A$, denoted $G_{|A}$, is the set that contains the restrictions of all the operators of $G$.
For example, the set of $n$-qubit Pauli operators supported on a subset $A$ of $\{1, \dots, n\}$ is $\Pauli_{n |A}$.

The following proposition yields a more efficient way to compute the restriction of a stabilizer group to a small subset of qubits.
Recall that we use the notation $G^\perp$ for the set of Pauli operators that commutes with the operators included in the subset $G$ of $\Pauli_n$.
For a stabilizer code $\stab$, the set $\stab^\perp$ is the set of logical operators of the code.

\begin{prop} [Restricted stabilizer subgroup] \label{prop:induced_pauli_subsgroup}
Let $\langle\stab\rangle$ be a length-$n$ stabilizer code with stabilizer generators $\stab = \{S_1,\dots, S_r\}$ and logical operators $\bar X_1, \bar Z_1, \dots, \bar X_k, \bar Z_k$. 
Let $A \subset \{1, \dots, n\}$.
A Pauli operator $P$ with support contained in $A$ is a stabilizer iff it commutes with the restricted operators $S_{1 |A}, \dots, S_{r |A}, \bar X_{1|A}, \bar Z_{1|A}, \dots, \bar X_{k|A}, \bar Z_{k|A}$.
\end{prop}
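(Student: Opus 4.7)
The plan is to pivot on a single commutator-restriction identity: for any $P \in \Pauli_n$ with $\supp(P) \subseteq A$ and any $Q \in \Pauli_n$, one has $[P, Q] = [P, Q_{|A}]$. This follows by applying Eq.~\eqref{eq:commutator_tensor_product} qubit by qubit, because every single-qubit commutator $[I, Q_i] = 0$ vanishes outside $A$, while $Q$ and $Q_{|A}$ agree on $A$.

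The forward direction is then immediate. If $P \in \langle\stab\rangle$ is supported on $A$, then $[P, S_{i|A}] = [P, S_i] = 0$ since $P$ commutes with every stabilizer, and $[P, \bar X_{j|A}] = [P, \bar X_j] = 0$, $[P, \bar Z_{j|A}] = [P, \bar Z_j] = 0$ because stabilizers commute with logicals.

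For the reverse direction, suppose $P$ is supported on $A$ and commutes with each restricted operator. The identity upgrades this to $[P, S_i] = [P, \bar X_j] = [P, \bar Z_j] = 0$ for all $i$ and $j$. Commutation with the $S_i$ places $P$ in the normalizer of $\langle\stab\rangle$ in $\Pauli_n$, which by the standard structure theorem for stabilizer codes is generated modulo phase by $\stab \cup \{\bar X_1, \bar Z_1, \dots, \bar X_k, \bar Z_k\}$. So I can write $P = S \cdot L$ with $S \in \langle\stab\rangle$ and $L$ a product of logical generators. Since $S$ already commutes with all logicals, the assumption forces $L$ to commute with every $\bar X_j$ and $\bar Z_j$. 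But $\{\bar X_j, \bar Z_j\}_{j=1}^{k}$ form a symplectic basis for the logicals modulo stabilizers—any $\bar X_j$-factor in $L$ would make $L$ anti-commute with $\bar Z_j$, and symmetrically—so $L$ must itself lie in $\langle\stab\rangle$, whence $P = SL \in \langle\stab\rangle$. The only step requiring care is the appeal to the normalizer decomposition together with the symplectic non-degeneracy on the logical coset space; both are textbook facts about stabilizer codes, and everything else is bookkeeping around the restriction identity.
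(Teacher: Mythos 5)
Your proof is correct and follows essentially the same route as the paper: both hinge on the identity $[P,Q]=[P,Q_{|A}]$ for $P$ supported on $A$, combined with the fact that a Pauli operator is a stabilizer iff it commutes with all stabilizer generators and all logical operators. The only difference is that you spell out the normalizer decomposition and symplectic non-degeneracy argument behind that standard fact, which the paper simply cites.
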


\begin{proof}
A Pauli operator $P$ belongs to the stabilizer group $\langle\stab\rangle$ iff it commutes with all the stabilizer generators and logical operators of the code, that is iff $[P, L] = 0$ for all $L \in \stab^\perp$.
The commutator can be decomposed as a sum over $A$ and its complement $A^C$,
$$
[P, L] = [P, L_{|A}] + [P, L_{|A^C}].
$$
For a Pauli operator $P$ with support contained in $A$ we have $[P, L_{|A^C}] = 0$ and therefore
$
[P, L] = [P, L_{|A}].
$
\end{proof}

Based on Proposition~\ref{prop:induced_pauli_subsgroup}, we design Algorithm~\ref{algo:induced_stabilizer_group} which returns a set of generators for the restriction of a stabilizer group to a subset of qubits.
If each qubit is acted on by $O(1)$ stabilizer generators $S_i$ and $O(1)$ logical operators, the matrix $\GG$ obtained at the end of line 2 has size $O(|A|) \times O(|A|)$ and it can be constructed in $O(|A|^2)$ bit operations.
Then, the most expensive subroutine of Algorithm~\ref{algo:induced_stabilizer_group} is the transformation of the matrix $\GG$ in reduced row echelon form which can be done in $O(|A|^3)$ bit operations using Gaussian elimination.

\begin{algorithm}[H]
\DontPrintSemicolon
\SetKwInOut{Input}{input}\SetKwInOut{Output}{output}
\Input{A set of stabilizer generators $\stab = \{S_1,\dots, S_r\}$ and  logical operators $\bar X_1, \bar Z_1, \dots, \bar X_k, \bar Z_k$ for a stabilizer code with length $n$.}
\Output{A generating set for the restriction of $\langle\stab\rangle$ to a subset $A \subset \{1, \dots, n\}$.}

\BlankLine
    Construct a Pauli matrix $\GG$ whose rows are the restricted operators of the form $S_{1|A},\dots, S_{r|A}, X_{1|A}, \bar Z_{1|A}, \dots, \bar X_{k|A}, \bar Z_{k|A}$.\;
    Remove the rows supported on the complement $A^C$ of $A$ and the trivial rows of $\GG$.\;
    Put $\GG$ in reduced row echelon form (using Gaussian elimination).\;
    Using the reduced row echelon form of $\GG$, construct a Pauli matrix $\HH$ whose rows satisfy $[\HH_i, \GG_j] = \delta_{i, j}$.\;
    Construct a Pauli matrix $\SSS'$ whose rows are the $2|A|$ single-qubit operators $X_{i}, Z_{i}$ for $i \in A$.\;
    \For{each row $\SSS_i'$ of $\SSS'$}
    {
        \For{each row $\HH_j$ of $\HH$}
        {
            \If{$[\SSS_i', \GG_j] = 1$}
            {
                Multiply $\SSS_i'$ by $\HH_j$.
            }
        }
    }
    \Return{the rows of $\SSS'$.}
\caption{Restricted Stabilizer group}
\label{algo:induced_stabilizer_group}
\end{algorithm}

\subsection{Construction of low-weight generators for the spacetime code}

To find a set of low-weight generators for a given stabilizer group, one could apply Algorithm~\ref{algo:induced_stabilizer_group} to all subsets of $w$ qubits with $w=1,2, \dots$ until there are enough generators to span the full stabilizer group.
%One could speed-up this search using information sets for Pauli groups~\cite{delfosse2022lookup} but the cost remains discouraging for general stabilizer codes.
In this section, we show that one can use some information about the structure of spacetime code to help probe the right subsets of qubits and generate low-weight stabilizer generators.

The {\em spacetime graph} of a circuit is a graph with vertex set $V = \{1, \dots, n\} \times \{0.5,1.5, \dots, \Delta - 0.5\}$ corresponding to the qubits supporting the fault operators except the last level.
The edges of the spacetime graph are constructed by looping over all the operations of the circuit and for each operation with level $\ell$ acting on qubits $q_{i_1}, \dots, q_{i_t}$, connecting together all the qubits of the form $(q, \ell - 0.5)$ or $(q, \ell' + 0.5)$.
An operation acting on $w$ qubits induces a clique with $2w$ vertices supported on level $\ell - 0.5$ and $\ell + 0.5$.
If the circuit is made with operations acting on at most $w$ qubits, the maximum degree of the spacetime graph is upper bounded by $2(2w-1)$.

A stabilizer of a stabilizer code is said to be {\em connected} if its support is connected in the spacetime graph.
In Appendix~\ref{sec:connected-components}, we prove Proposition~\ref{prop:stabilizer_component} that the restriction of a stabilizer of the spacetime code to any connected component of its support is a stabilizer.
This proves that stabilizers of the spacetime code can be decomposed as products of connected stabilizers.
Instead of running over all subsets of qubits, Algorithm~\ref{algo:spacetime_code_local_genertors} returns all the connected stabilizers of a spacetime code by running over the neighborhoods of vertices in the space time graph.

\begin{prop} \label{prop:algo_return_all_connected_stabilizers}
Let $\circuit$ be a Clifford circuit.
Then Algorithm~\ref{algo:spacetime_code_local_genertors} returns all the connected stabilizers of the space time code with weight up to $M$.
\end{prop}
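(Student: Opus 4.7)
The plan is to show that every connected stabilizer $S$ of $\spacetime(\circuit)$ with $|\supp(S)| \le M$ has its support contained in some neighborhood enumerated by Algorithm~\ref{algo:spacetime_code_local_genertors}, and that the call to Algorithm~\ref{algo:induced_stabilizer_group} on that neighborhood is guaranteed to produce $S$.

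First I would fix an arbitrary connected stabilizer $S$ of $\spacetime(\circuit)$ with $\mathrm{wt}(S) \le M$ and set $A = \supp(S)$. By hypothesis, $A$ is a connected subgraph of the spacetime graph with $|A| \le M$, hence its diameter is at most $M-1$. Choosing any vertex $v \in A$, every vertex of $A$ lies within graph distance $M-1$ of $v$, so $A$ is contained in the closed ball $B(v, M-1)$ of the spacetime graph. Since Algorithm~\ref{algo:spacetime_code_local_genertors} iterates over all vertices of the spacetime graph and probes (at least) the ball $B(v, M-1)$ around each, the outer loop reaches a vertex $v$ for which $\supp(S) \subseteq B(v, M-1)$.

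For that $v$, the algorithm invokes Algorithm~\ref{algo:induced_stabilizer_group} with the stabilizer group of $\spacetime(\circuit)$ and the subset $B(v, M-1)$. By Proposition~\ref{prop:induced_pauli_subsgroup}, this returns a generating set for the subgroup of stabilizers whose support lies inside $B(v, M-1)$. Because $\supp(S) \subseteq B(v, M-1)$, the operator $S$ lies in this subgroup and is therefore expressible as a product of the returned generators. Finally, Proposition~\ref{prop:stabilizer_component} ensures that restricting any such generator to a connected component of its support still yields a stabilizer, so the subsequent connected-component decomposition and weight filter preserve $S$ and include it in the final output.

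The main point to verify carefully is the radius bound: the fact that a connected subgraph on at most $M$ vertices has diameter at most $M-1$ is what makes the search local, and any smaller radius would miss some supports of weight exactly $M$. A minor subtlety is that Algorithm~\ref{algo:induced_stabilizer_group} returns a generating set rather than the full list of stabilizers supported in $B(v, M-1)$; this is reconciled in the final step by decomposing generators into connected components via Proposition~\ref{prop:stabilizer_component}, after which every connected stabilizer of weight at most $M$ appears explicitly in the output, proving the claim.
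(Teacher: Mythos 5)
Your overall strategy matches the paper's: show the support of any connected weight-$\le M$ stabilizer is contained in one of the probed neighborhoods, then argue that Algorithm~\ref{algo:induced_stabilizer_group} together with the enumeration step surfaces it. However, there is a genuine gap in the radius bookkeeping. Algorithm~\ref{algo:spacetime_code_local_genertors} probes balls of radius $\lfloor M/2 \rfloor$, not $M-1$, so your claim that it examines ``(at least) the ball $B(v,M-1)$'' around each vertex is not what the algorithm does, and your argument --- pick \emph{any} vertex $v$ of the support and use the diameter bound $M-1$ --- only certifies a modified algorithm with larger balls. The correct argument, which is the one the paper gives, exploits the freedom in the outer loop: since the algorithm tries \emph{every} vertex as a center, it in particular tries a central vertex of $\supp(S)$, and a connected subgraph on at most $M$ vertices has radius at most $\lfloor M/2 \rfloor$ about its center (take a spanning tree and its midpoint). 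This also shows your closing remark that ``any smaller radius would miss some supports of weight exactly $M$'' is false: radius $\lfloor M/2 \rfloor$ suffices precisely because the center is chosen well, and halving the radius here is what keeps the neighborhoods, and hence the cost of the Gaussian eliminations, small.

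A secondary inaccuracy: the final step of the algorithm does not decompose the \emph{generators} returned by Algorithm~\ref{algo:induced_stabilizer_group} into connected components --- that would indeed miss connected stabilizers that are products of several generators. It enumerates all elements of $\langle\stab(A)\rangle$ of weight at most $M$ and keeps those whose support is connected; since $S$ belongs to $\langle\stab(A)\rangle$ (by Proposition~\ref{prop:induced_pauli_subsgroup}, once $\supp(S)\subseteq A$), it is encountered directly and passes both filters. Proposition~\ref{prop:stabilizer_component} is not needed for this direction of the argument (that every connected stabilizer is \emph{found}); it is what justifies that restricting attention to connected stabilizers loses no information, i.e., that the output still generates everything relevant.
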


\begin{algorithm}[H]
\DontPrintSemicolon
\SetKwInOut{Input}{input}\SetKwInOut{Output}{output}
\Input{A Clifford circuit $\circuit$. An integer $M$.}
\Output{A set containing all the connected stabilizers of the spacetime code of $\circuit$ with weight $\leq M$.}

\BlankLine
    Set $\stab_M = \{\}$.\;
    Construct a set of generators $\{u_1, \dots, u_r\}$ of the code $\outcome(\circuit)^\perp$ using Algorithm~\ref{algo:circuit_checks}.\;
    Construct the set $\stab$ of stabilizer generators $S_i = \backpropagation{F(u_1)}$ of the spacetime code.\;
    Construct a set of logical operators $\bar X_1, \bar Z_1, \dots, \bar X_K, \bar Z_K$ of the spacetime code.\;
    Construct the spacetime graph $G = (V, E)$ of $\circuit$.\;
    \For{all vertices $v \in V$}
    {
        Let $A$ be the set of vertices of $G$ at distance $\leq \lfloor M/2 \rfloor$ from $v$.\;
        Using Algorithm~\ref{algo:induced_stabilizer_group}, compute a set $\stab(A)$ of generators of $\langle\stab\rangle$ restricted to $A$.\;
        \For{ all vectors $F$ of $\langle\stab(A)\rangle$ with weight $|F| \leq M$}
        {
            Compute the connected components of the support of $F$.\;
            If the support of $F$ has a single connected component, add $F$ to $\stab_M$.\; 
        }
        % Compute all the vectors of $\langle\stab(A)\rangle$ with weight $\leq R$ and add them to $\stab_M$.\;
    }
    \Return{$\stab_M$.}
	\caption{Low weight stabilizers of a spacetime code}
	\label{algo:spacetime_code_local_genertors}
\end{algorithm}

In the case of a depth-$\Delta$ circuit acting on $n$ qubits with operations acting on at most $w$ qubits, Algorithm~\ref{algo:spacetime_code_local_genertors} explores $n\Delta$ subsets $A$ of qubits with size at most 
$
|A| \leq 1 + \sum_{i=1}^{\lfloor M/2 \rfloor} \delta(\delta-1)^{i-1}.
$
where $\delta = 2(2w-1)$ is the degree of the spacetime graph.

The proof of Proposition~\ref{prop:algo_return_all_connected_stabilizers} provided below relies on Proposition~\ref{prop:stabilizer_component}.

\begin{proof}
Any connected stabilizer is supported on a connected subgraph of the spacetime graph. As a result, any connected stabilizer with weight $\leq M$ is included in a ball with radius $\lfloor M/2 \rfloor$ of the spacetime. This guarantees that it will be discovered by Algorithm~\ref{algo:spacetime_code_local_genertors}.
\end{proof}

\section{Conclusion}
\label{sec:conclusion}

We proposed an efficient and versatile strategy for the correction of circuit faults in Clifford circuits by reducing to error correction of a stabilizer code.
The main advantage of our approach is its flexibility. It applies to any Clifford syndrome extraction circuit, including those of topological codes and Floquet codes, and eliminates the tedious step of mapping circuit outcomes to corresponding syndromes. It also applies to general Clifford circuits which are not necessarily based on a quantum code.

Our scheme can be used to automatically generate low-weight checks in Clifford circuits.
Alternatively, it may be useful as a compilation tool for detecting and removing redundancy in a Clifford circuit, thereby reducing circuit size.
Adapting Algorithm~\ref{algo:circuit_checks} for this task is immediate.

In the future, our scheme may be improved by designing a decoder that exploits the equivalence between different fault configurations like in the work of Pryadko~\cite{pryadko2020maximum}.
Design of better decoders for quantum LDPC codes may also improve our scheme.

Conceptually, our work emphasizes and formally captures a circuit-centric approach of quantum error correction and fault tolerance.  This circuit-centric approach is central to Floquet codes~\cite{hastings2021dynamically} and has also lead to new ideas for surface codes~\cite{mcewen2023relaxing}.
Our formalism could be used to take this approach further by, for example, searching over the space of quantum codes and circuits.  In particular, machine learning techniques which were limited to Kitaev's codes in~\cite{nautrup2019optimizing} could be expanded to a much broader range of codes and circuits.

\section*{Acknowledgements}

We would like to thank David Aasen, Michael Beverland, Jeongwan Haah, Vadym Kliuchnikov and Marcus Silva for insightful discussions.
We thank Rui Chao for his comments.

%\bibliographystyle{plain}
%\bibliography{references}

\begin{thebibliography}{10}

\bibitem{aaronson2004improved}
Scott Aaronson and Daniel Gottesman.
\newblock Improved simulation of stabilizer circuits.
\newblock {\em Physical Review A}, 70(5):052328, 2004.

\bibitem{bacon2015sparse}
Dave Bacon, Steven~T Flammia, Aram~W Harrow, and Jonathan Shi.
\newblock Sparse quantum codes from quantum circuits.
\newblock In {\em Proceedings of the forty-seventh annual ACM symposium on
  Theory of Computing}, pages 327--334, 2015.

\bibitem{berlekamp1978inherent}
Elwyn Berlekamp, Robert McEliece, and Henk Van~Tilborg.
\newblock On the inherent intractability of certain coding problems (corresp.).
\newblock {\em IEEE Transactions on Information Theory}, 24(3):384--386, 1978.

\bibitem{bombin2006topological}
Hector Bombin and Miguel~Angel Martin-Delgado.
\newblock Topological quantum distillation.
\newblock {\em Physical review letters}, 97(18):180501, 2006.

\bibitem{bravyi2013quantum}
Sergey Bravyi and Jeongwan Haah.
\newblock Quantum self-correction in the 3d cubic code model.
\newblock {\em Physical review letters}, 111(20):200501, 2013.

\bibitem{bravyi2005universal}
Sergey Bravyi and Alexei Kitaev.
\newblock Universal quantum computation with ideal clifford gates and noisy
  ancillas.
\newblock {\em Physical Review A}, 71(2):022316, 2005.

\bibitem{chao2020optimization}
Rui Chao, Michael~E Beverland, Nicolas Delfosse, and Jeongwan Haah.
\newblock Optimization of the surface code design for majorana-based qubits.
\newblock {\em Quantum}, 4:352, 2020.

\bibitem{delfosse2022toward}
Nicolas Delfosse, Vivien Londe, and Michael~E Beverland.
\newblock Toward a union-find decoder for quantum ldpc codes.
\newblock {\em IEEE Transactions on Information Theory}, 2022.

\bibitem{delfosse2021almost}
Nicolas Delfosse and Naomi~H Nickerson.
\newblock Almost-linear time decoding algorithm for topological codes.
\newblock {\em Quantum}, 5:595, 2021.

\bibitem{dennis2002topological}
Eric Dennis, Alexei Kitaev, Andrew Landahl, and John Preskill.
\newblock Topological quantum memory.
\newblock {\em Journal of Mathematical Physics}, 43(9):4452--4505, 2002.

\bibitem{fowler2012surface}
Austin~G Fowler, Matteo Mariantoni, John~M Martinis, and Andrew~N Cleland.
\newblock Surface codes: Towards practical large-scale quantum computation.
\newblock {\em Physical Review A}, 86(3):032324, 2012.

\bibitem{gallager1962low}
Robert Gallager.
\newblock Low-density parity-check codes.
\newblock {\em IRE Transactions on information theory}, 8(1):21--28, 1962.

\bibitem{gidney2021stim}
Craig Gidney.
\newblock Stim: a fast stabilizer circuit simulator.
\newblock {\em Quantum}, 5:497, 2021.

\bibitem{gidney2022pair}
Craig Gidney.
\newblock A pair measurement surface code on pentagons.
\newblock {\em arXiv preprint arXiv:2206.12780}, 2022.

\bibitem{gottesman1998heisenberg}
Daniel Gottesman.
\newblock The heisenberg representation of quantum computers.
\newblock {\em arXiv preprint quant-ph/9807006}, 1998.

\bibitem{gottesman2022spacetimecode}
Daniel Gottesman.
\newblock Opportunities and challenges in fault-tolerant quantum computation,
  2022.

\bibitem{hastings2021dynamically}
Matthew~B Hastings and Jeongwan Haah.
\newblock Dynamically generated logical qubits.
\newblock {\em Quantum}, 5:564, 2021.

\bibitem{iyer2015hardness}
Pavithran Iyer and David Poulin.
\newblock Hardness of decoding quantum stabilizer codes.
\newblock {\em IEEE Transactions on Information Theory}, 61(9):5209--5223,
  2015.

\bibitem{mcewen2023relaxing}
Matt McEwen, Dave Bacon, and Craig Gidney.
\newblock Relaxing hardware requirements for surface code circuits using
  time-dynamics.
\newblock {\em arXiv preprint arXiv:2302.02192}, 2023.

\bibitem{nautrup2019optimizing}
Hendrik~Poulsen Nautrup, Nicolas Delfosse, Vedran Dunjko, Hans~J Briegel, and
  Nicolai Friis.
\newblock Optimizing quantum error correction codes with reinforcement
  learning.
\newblock {\em Quantum}, 3:215, 2019.

\bibitem{panteleev2021degenerate}
Pavel Panteleev and Gleb Kalachev.
\newblock Degenerate quantum ldpc codes with good finite length performance.
\newblock {\em Quantum}, 5:585, 2021.

\bibitem{pryadko2020maximum}
Leonid~P Pryadko.
\newblock On maximum-likelihood decoding with circuit-level errors.
\newblock {\em Quantum}, 4:304, 2020.

\bibitem{reagor2022hardware}
Matthew~J Reagor, Thomas~C Bohdanowicz, David~Rodriguez Perez, Eyob~A Sete, and
  William~J Zeng.
\newblock Hardware optimized parity check gates for superconducting surface
  codes.
\newblock {\em arXiv preprint arXiv:2211.06382}, 2022.

\bibitem{roffe_decoding_2020}
Joschka Roffe, David~R. White, Simon Burton, and Earl Campbell.
\newblock Decoding across the quantum low-density parity-check code landscape.
\newblock {\em Physical Review Research}, 2(4), Dec 2020.

\end{thebibliography}

\appendix

\section{Outcomes of the check operators}

This section relates the value of a check of the outcome code to the measurement outcome of a check operator, justifying the definition of check operators.
The first lemma provides a description of the outcomes flipped by a set of faults as a commutator of the corresponding fault operator.

\begin{lemma} [Outcome flip] \label{lemma:flip_from_propagation}
Let $F$ be a fault operator.
The faults corresponding to $F$ induce a flip of the measurement of $S_j$ iff $[\propagation F, \eta_{\ell_j - 0.5}(S_j)] = 1$.
\end{lemma}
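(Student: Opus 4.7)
The plan is to reduce the claim directly to Proposition~\ref{prop:fault_effects}, which already characterizes an outcome flip for measurement $S_j$ by the single-level commutator condition $[\propagation{F}_{\ell_j-0.5},S_j]=1$. The task then is to show that this scalar commutator is the same as the full multi-level commutator $[\propagation F,\eta_{\ell_j-0.5}(S_j)]$.

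First I would recall the definition of $\eta_{\ell_j-0.5}(S_j)$: it is the fault operator whose level-$(\ell_j-0.5)$ component equals $S_j$ and whose components at all other half-integer levels are the identity. Writing both $\propagation F$ and $\eta_{\ell_j-0.5}(S_j)$ as tensor products over the levels $\ell\in\{0,1,\dots,\Delta\}$, I would then apply the tensor-product bilinearity of the Pauli commutator, Eq.~\eqref{eq:commutator_tensor_product}, iteratively (or inductively over the number of levels), to write
\[
[\propagation F,\eta_{\ell_j-0.5}(S_j)]
=\sum_{\ell=0}^{\Delta}\bigl[\propagation F_{\ell+0.5},\,\eta_{\ell_j-0.5}(S_j)_{\ell+0.5}\bigr]\pmod 2.
\]
Every term with $\ell+0.5\neq \ell_j-0.5$ vanishes because the second argument is the identity, leaving only the term $\ell+0.5=\ell_j-0.5$, which contributes exactly $[\propagation F_{\ell_j-0.5},S_j]$.

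Combining this identity with Proposition~\ref{prop:fault_effects}, which says the $j$th outcome is flipped precisely when $[\propagation F_{\ell_j-0.5},S_j]=1$, yields the claimed equivalence.

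There is no serious obstacle: the content is essentially bookkeeping, and the only care required is in the application of Eq.~\eqref{eq:commutator_tensor_product} across all $\Delta+1$ half-integer levels and in observing that commutation with the identity contributes zero. The indexing convention $F_{\ell+0.5}$ versus the level label $\ell_j-0.5$ should be handled carefully to match the one surviving summand correctly; this is the only place where a reader might slip.
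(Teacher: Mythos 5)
Your proposal is correct and follows essentially the same route as the paper: invoke Proposition~\ref{prop:fault_effects} for the flip criterion $[\propagation F_{\ell_j-0.5},S_j]=1$, then observe that $\eta_{\ell_j-0.5}(S_j)$ is trivial at every level except $\ell_j-0.5$, so the full commutator collapses to the single-level one. The paper states this collapse in one line where you spell it out via Eq.~\eqref{eq:commutator_tensor_product}, but the argument is identical.
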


\begin{proof}
Based on Proposition~\ref{prop:fault_effects}, the faults represented by $F$ induce a flip of the outcome of $S_j$ iff $[\propagation F_{\ell_j - 0.5}, S_j] = 1$ and because $\eta_{\ell_j - 0.5}(S_j)$ is trivial except at level $\ell_j - 0.5$, we have $[\propagation F_{\ell_j - 0.5}, S_j] = [\propagation F, \eta_{\ell_j - 0.5}(S_j)]$.
\end{proof}

The following Lemma shows that measurement of a check operator returns the value of the corresponding check of the outcome code.

\begin{lemma} [Check operator outcome] \label{lemma:outcome_check_and_check_operator}
Let $F$ be a fault operator.
The faults corresponding to $F$ induce a flip of an outcome check $u \in \outcome(\circuit)^\perp$ iff
$[F, \backpropagation{F(u)}] = 1$.
\end{lemma}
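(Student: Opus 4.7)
The plan is to chain together three facts already in the paper: the characterization of single-outcome flips via the cumulant commutator (Lemma~\ref{lemma:flip_from_propagation}), the bilinearity of the Pauli commutator (Eq.~\eqref{eq:commutator_bilinearity}), and the adjointness of accumulation and back-accumulation (Proposition~\ref{prop:propagated_commutator}).

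First I would translate the statement ``$F$ flips the outcome check $u$'' into algebra. By definition, the outcome check associated to $u \in \outcome(\circuit)^\perp$ takes value $(u \mid f) = \sum_{j=1}^m u_j f_j \pmod 2$, where $f_j$ is the flip of the $j$-th measurement induced by $F$. So the check is flipped precisely when $\sum_{j=1}^m u_j f_j = 1 \pmod 2$. Lemma~\ref{lemma:flip_from_propagation} gives $f_j = [\propagation F, \eta_{\ell_j - 0.5}(S_j)]$, so the check is flipped iff
\[
\sum_{j=1}^m u_j [\propagation F, \eta_{\ell_j - 0.5}(S_j)] = 1 \pmod 2.
\]

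Next I would use the bilinearity of the commutator to pull the sum inside: the above quantity equals $[\propagation F, \prod_{j=1}^m \eta_{\ell_j - 0.5}(S_j^{u_j})]$, which by the definition~\eqref{def:relation_operator} of $F(u)$ is exactly $[\propagation F, F(u)]$. Finally, applying Proposition~\ref{prop:propagated_commutator} (the adjoint relation) with the pair $(F, F(u))$ yields $[\propagation F, F(u)] = [F, \backpropagation{F(u)}]$, completing the equivalence.

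I don't foresee any real obstacle here — the lemma is essentially a bookkeeping identity that packages Lemma~\ref{lemma:flip_from_propagation} for a single measurement into a single commutator statement for a linear combination of measurements, and the adjointness proposition does all the heavy lifting. The only thing to be mildly careful about is that $\eta_{\ell_j - 0.5}$ places a Pauli operator only at one half-integer level, so that the local commutator $[\propagation F_{\ell_j - 0.5}, S_j]$ in Lemma~\ref{lemma:flip_from_propagation} really coincides with the global commutator $[\propagation F, \eta_{\ell_j - 0.5}(S_j)]$; this follows from the tensor-product formula~\eqref{eq:commutator_tensor_product} since all other tensor factors of $\eta_{\ell_j - 0.5}(S_j)$ are the identity.
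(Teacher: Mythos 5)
Your proof is correct and follows essentially the same route as the paper's: both chain Lemma~\ref{lemma:flip_from_propagation}, the bilinearity of the commutator, and the adjoint relation of Proposition~\ref{prop:propagated_commutator}. The only cosmetic difference is direction — the paper starts from $[F,\backpropagation{F(u)}]$, expands via Eq.~\eqref{eq:backpropagation_product}, and applies the adjoint relation term by term, whereas you build up from the flip condition $(u\mid f)=1$ and apply the adjoint relation once to the whole product $F(u)$.
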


\begin{proof}
By definition of $F(u)$, we have
\begin{align*}
[F, \backpropagation{F(u)}]
	& = [F, \backpropagation{\prod_{j = 1}^{m} \eta_{\ell_j - 0.5}(S_j^{u_j})}]
\end{align*}
which leads to
\begin{align*}
[F, \backpropagation{F(u)}]
	& = [F, \prod_{j = 1}^{m} \backpropagation{\eta_{\ell_j - 0.5}(S_j^{u_j})}]
	& \text{Using Eq.~\eqref{eq:backpropagation_product}}
\\
	& = \sum_{j = 1}^{m} [F, \backpropagation{\eta_{\ell_j - 0.5}(S_j^{u_j})}]
	& \text{Using Eq.~\eqref{eq:commutator_bilinearity}}
\\
	& = \sum_{j = 1}^{m} [\propagation{F}, \eta_{\ell_j - 0.5}(S_j^{u_j})]
	& \text{By Proposition.~\ref{prop:propagated_commutator}}
\end{align*}
By Lemma.~\ref{lemma:flip_from_propagation}, this last sum coincides with the check 
$
\sum_{j} u_j o_j
$
of the outcome code corresponding to the vector $u$.
\end{proof}

The next lemma states that an error $S_j$ right before or right after the measurement of $S_j$ does not flip any of the check operator outcomes.

\begin{lemma} [Stabilizer error] \label{lemma:sabilizer_error_vs_relation_operator}
For all $u \in \outcome(\circuit)^\perp$ and for all measured operators $S_j$, we have
$
[{\eta_{\ell_j \pm 0.5}(S_j)}, \backpropagation{F(u)}] = 0.
$
\end{lemma}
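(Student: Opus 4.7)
The plan is to reduce the claim to Lemma~\ref{lemma:outcome_check_and_check_operator} and then argue, via a short density-matrix computation, that the fault $\eta_{\ell_j \pm 0.5}(S_j)$ leaves the entire joint outcome distribution of the circuit invariant. Applying Lemma~\ref{lemma:outcome_check_and_check_operator} with the specific fault operator $F = \eta_{\ell_j \pm 0.5}(S_j)$ gives $[\eta_{\ell_j \pm 0.5}(S_j), \backpropagation{F(u)}] = (u | f^{(\pm)})$, where $f^{(\pm)} \in \Z_2^m$ is the outcome-flip pattern induced by this fault. It therefore suffices to prove $f^{(\pm)} \in \outcome(\circuit)$, which would immediately force the inner product with any $u \in \outcome(\circuit)^\perp$ to vanish.

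To establish $f^{(\pm)} \in \outcome(\circuit)$, I would verify that inserting $S_j$ immediately before or immediately after the measurement of $S_j$ leaves every intermediate density matrix, and hence every subsequent outcome distribution, unchanged. The key identity is $\frac{I + (-1)^{o_j} S_j}{2} \cdot S_j = (-1)^{o_j} \cdot \frac{I + (-1)^{o_j} S_j}{2}$, which commutes $S_j$ past the measurement projector at the cost of a scalar phase. For the $-0.5$ case, sandwiching the pre-measurement state between two copies of this projector produces a factor $(-1)^{2 o_j} = 1$, so the post-measurement density matrix is identical to its fault-free counterpart. For the $+0.5$ case, the state to which $S_j$ is applied already lies in the $(-1)^{o_j}$-eigenspace of $S_j$, and the same cancellation yields $S_j \rho_{\text{post}} S_j = \rho_{\text{post}}$. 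In either case, every subsequent Clifford operation and Pauli measurement acts on an identical state, so $\Prob_{\circuit, \rho_M}^{(F)}(o) = \Prob_{\circuit, \rho_M}(o)$ for every $o$.

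Combining this equality with the shift identity $\Prob_{\circuit, \rho_M}^{(F)}(o) = \Prob_{\circuit, \rho_M}(o + f^{(\pm)})$ from Proposition~\ref{prop:fault_effects} yields $\Prob_{\circuit, \rho_M}(o) = \Prob_{\circuit, \rho_M}(o + f^{(\pm)})$ for every $o$. Since Lemma~\ref{lemma:outcome_distribution_uniformity} states that this distribution is uniform on $\outcome(\circuit)$, translation by $f^{(\pm)}$ must preserve $\outcome(\circuit)$, and linearity of the outcome code forces $f^{(\pm)} \in \outcome(\circuit)$. The main obstacle I anticipate is the density-matrix bookkeeping in the middle step: one must verify that the phases $(-1)^{o_j}$ really cancel on both sides of the projector simultaneously, so that the argument survives for the mixed input $\rho_M$ required by Lemma~\ref{lemma:outcome_distribution_uniformity} rather than merely up to a global phase on a pure state.
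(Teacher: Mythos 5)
Your proof is correct and follows essentially the same route as the paper: both reduce the claim to Lemma~\ref{lemma:outcome_check_and_check_operator} and then observe that inserting $S_j$ adjacent to its own measurement leaves the post-measurement state (and hence every check value) unchanged, because the state lies in a definite eigenspace of $S_j$. The paper states this second step in one line and handles the $-0.5$ case by commuting the fault through the projector, whereas you spell out the density-matrix cancellation and pass through the uniformity of $\Prob_{\circuit,\rho_M}$ to conclude $f^{(\pm)}\in\outcome(\circuit)$; this is a more detailed rendering of the same argument, not a different one.
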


\begin{proof}
Because $S_j$ belongs to the stabilizer group of the system right after level $\ell_j$, the faults corresponding to ${\eta_{\ell_j + 0.5}(S_j)}$ do not flip any of the checks of the outcome code. 
Based on Lemma~\ref{lemma:outcome_check_and_check_operator}, this leads to the lemma for the sign $\ell_j + 0.5$.
The same result holds for $\ell_j - 0.5$ because a fault $S_j$ after the measurement of $S_j$ is equivalent to a fault $S_j$ before this measurement.
\end{proof}

A Pauli error on the input state also keeps the check operator outcomes trivial.

\begin{lemma} [Input error] \label{lemma:input_error_vs_relation_operator}
For all $u \in \outcome(\circuit)^\perp$ and for all $P \in \Pauli_n$, we have
$
[\eta_{0.5}(P), \backpropagation{F(u)}] = 0.
$
\end{lemma}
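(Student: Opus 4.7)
The plan is to reduce the statement to the already-established Lemma~\ref{lemma:outcome_check_and_check_operator} and then exploit the fact that a fault at level $0.5$ is nothing more than a modification of the (arbitrary) input state.

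First, I would set $F = \eta_{0.5}(P)$ and apply Lemma~\ref{lemma:outcome_check_and_check_operator} to rewrite
\[
[\eta_{0.5}(P), \backpropagation{F(u)}] = \sum_{j=1}^{m} u_j f_j,
\]
where $f_j = [\propagation{F}_{\ell_j-0.5}, S_j]$ is the $j$th component of the outcome flip vector $f \in \Z_2^m$ induced by $F$. Thus it is enough to show that $f$ belongs to the outcome code $\outcome(\circuit)$, since $u \in \outcome(\circuit)^\perp$ then forces the sum to vanish.

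Next, I would observe that a fault acting at level $0.5$ occurs before any circuit operation and is therefore indistinguishable from a modification of the input state: running $\circuit$ on input $\rho$ in the presence of $F$ is exactly the same experiment as running $\circuit$ on input $P\rho P$ with no faults. By Proposition~\ref{prop:fault_effects}, the outcome distributions satisfy $\Prob_{\circuit,\rho}^{(F)}(o) = \Prob_{\circuit,\rho}(o+f) = \Prob_{\circuit, P\rho P}(o)$. Because Theorem~\ref{theorem:outcome_code} places no restriction on the input state, the set of outcome bit-strings with nonzero probability under \emph{either} input lies in $\outcome(\circuit)$. Picking any $o$ that occurs with nonzero probability on input $P\rho P$, we get both $o \in \outcome(\circuit)$ and $o + f \in \outcome(\circuit)$, and since $\outcome(\circuit)$ is linear (by Corollary~\ref{cor:outcome_code_linearization}, which we assume throughout), this yields $f \in \outcome(\circuit)$.

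Combining the two steps, $(u \mid f) = 0$ for every $u \in \outcome(\circuit)^\perp$, which is precisely $[\eta_{0.5}(P), \backpropagation{F(u)}] = 0$. The only subtlety I anticipate is justifying that $f$ itself, not just $o$ and $o+f$ individually, lies in $\outcome(\circuit)$; this requires either linearity (as just used) or, equivalently, picking any bit-string $o \in \outcome(\circuit)$ achievable with nonzero probability on input $\rho$ and noting that $o + f$ is also achievable (on input $P\rho P$), so their difference $f$ lies in the linear span. This is the only genuinely nontrivial point and is handled by the linearization corollary.
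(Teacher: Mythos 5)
Your proof is correct and follows essentially the same route as the paper's: interpret the level-$0.5$ fault as a change of input state, use the fact that the outcome code is defined over all possible inputs, and conclude via Lemma~\ref{lemma:outcome_check_and_check_operator}. The only difference is that you make explicit the step the paper leaves implicit --- that the flip vector $f$ itself lies in $\outcome(\circuit)$ because both $o$ and $o+f$ are achievable outcomes and the code is linear --- which is a worthwhile clarification but not a different argument.
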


\begin{proof}
An fault $P$ on the input state of the circuit cannot flip the value of a check because the outcome code $\outcome(\circuit)$ is the set of checks for all possible input states.
Based on Lemma~\ref{lemma:outcome_check_and_check_operator}, this implies $[\eta_{0.5}(P), \backpropagation{F(u)}] = 0$ for all check $u$ of the outcome code.
\end{proof}

\section{Commutation of the check operators}

The goal of this section is to prove that check operators are pairwise commuting.
More precisely, we show by induction on $\ell$ that $[\backpropagation{F(u)}_{\ell + 0.5}, \backpropagation{F(v)}_{\ell + 0.5}] = 0$ for all $\ell$.

\begin{prop} \label{prop:check_operators_commutation}
Let $\circuit$ be a Clifford circuit.
For all $u, v \in \outcome(\circuit)^\perp$, we have 
$
[\backpropagation{F(u)}, \backpropagation{F(v)}] = 0.
$
\end{prop}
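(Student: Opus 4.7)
The plan is to prove the stronger statement, by downward induction on $\ell$, that each level-wise component satisfies
$[\backpropagation{F(u)}_{\ell + 0.5}, \backpropagation{F(v)}_{\ell + 0.5}] = 0$
for every $\ell \in \{0, 1, \dots, \Delta\}$. Summing these over $\ell$ and using Eq.~\eqref{eq:commutator_tensor_product} then yields the global commutation. The base case $\ell = \Delta$ is immediate: $F(u)$ and $F(v)$ are supported on levels of the form $\ell_j - 0.5$ with $\ell_j \leq \Delta$, so both top-level components are trivial, and hence so are their back-cumulants at level $\Delta + 0.5$.

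For the inductive step, I first unfold the back-accumulation procedure into the one-step recursion
\begin{align*}
\backpropagation{F(u)}_{\ell - 0.5} \;=\; M(u, \ell) \,\cdot\, U_\ell^{-1}\, \backpropagation{F(u)}_{\ell + 0.5}\, U_\ell,
\end{align*}
where $M(u, \ell) = \prod_{j : \ell_j = \ell} S_j^{u_j}$ is the product of the level-$\ell$ measured operators selected by $u$, and $U_\ell$ is the product of all unitary operations at level $\ell$. This follows from the explicit formula of Proposition~\ref{prop:explicit_propagation} together with the composition identity $U_{\ell-1, j} = U_{\ell, j} U_\ell$. Expanding $[\backpropagation{F(u)}_{\ell - 0.5}, \backpropagation{F(v)}_{\ell - 0.5}]$ by the bilinearity of Eq.~\eqref{eq:commutator_bilinearity} produces four terms, which I argue vanish in turn.

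The term $[U_\ell^{-1} \backpropagation{F(u)}_{\ell + 0.5} U_\ell,\; U_\ell^{-1} \backpropagation{F(v)}_{\ell + 0.5} U_\ell]$ equals $[\backpropagation{F(u)}_{\ell + 0.5}, \backpropagation{F(v)}_{\ell + 0.5}]$ by Eq.~\eqref{eq:commutator_unitary} and vanishes by the inductive hypothesis. The term $[M(u, \ell), M(v, \ell)]$ vanishes because measured operators at the same level must pairwise commute in order to be simultaneously measurable. For the cross term $[M(u, \ell), U_\ell^{-1} \backpropagation{F(v)}_{\ell + 0.5} U_\ell]$, I observe that operations sharing a level act on disjoint supports, so $M(u, \ell)$ commutes with $U_\ell$; conjugation by $U_\ell$ therefore drops out, reducing the term to $[M(u, \ell), \backpropagation{F(v)}_{\ell + 0.5}]$. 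Expanding $M(u,\ell)$ and applying bilinearity yields a sum of commutators of the form $[S_j, \backpropagation{F(v)}_{\ell + 0.5}]$, each of which is the only non-trivial summand in $[\eta_{\ell + 0.5}(S_j), \backpropagation{F(v)}]$; the latter vanishes by Lemma~\ref{lemma:sabilizer_error_vs_relation_operator}. The symmetric cross term is handled identically.

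The main subtlety is precisely this cross term: a purely local operator $M(u, \ell)$ must commute with the globally back-propagated $\backpropagation{F(v)}_{\ell + 0.5}$, and the only leverage available is Lemma~\ref{lemma:sabilizer_error_vs_relation_operator}, which itself is derived from the outcome-code interpretation of check operators via Lemma~\ref{lemma:outcome_check_and_check_operator} and Proposition~\ref{prop:propagated_commutator}. Because none of those prerequisites invoke the present proposition, the induction is non-circular.
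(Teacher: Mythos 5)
Your proof is correct and follows essentially the same route as the paper's: a downward induction on the level-wise commutators, the same one-step recursion for the back-cumulant, the same four-term expansion, and the same use of Lemma~\ref{lemma:sabilizer_error_vs_relation_operator} together with the disjoint-support argument to kill the cross terms (your indexing is merely shifted by one level relative to the paper's). The explicit check that the prerequisites of Lemma~\ref{lemma:sabilizer_error_vs_relation_operator} do not depend on this proposition, so the induction is non-circular, is a welcome addition.
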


\begin{proof}
Based on Eq.~\eqref{eq:commutator_tensor_product}, we have
\begin{align*}
[\backpropagation{F(u)}, \backpropagation{F(v)}]
= \sum_{\ell = 0}^{\Delta}
	[\backpropagation{F(u)}_{\ell + 0.5}, \backpropagation{F(v)}_{\ell + 0.5}]
\end{align*}
In the remainder of this proof, we demonstrate by induction on $\ell$ (in decreasing order) that 
$[\backpropagation{F(u)}_{\ell + 0.5}, \backpropagation{F(v)}_{\ell + 0.5}] = 0$ for all $\ell$.

For $\ell = \Delta$, by definition we have 
$
{F(u)_{\Delta + 0.5}} = {F(v)_{\Delta + 0.5}} = I
$
and therefore
$
\backpropagation{F(u)}_{\Delta + 0.5} = \backpropagation{F(v)}_{\Delta + 0.5} = I
$
which yields 
$
[\backpropagation{F(u)}_{\Delta + 0.5}, \backpropagation{F(v)}_{\Delta + 0.5}] = 0.
$

Assume now that $[\backpropagation{F(u)}_{\ell+1.5}, \backpropagation{F(v)}_{\ell+1.5}] = 0$ and let us prove that $[\backpropagation{F(u)}_{\ell + 0.5}, \backpropagation{F(v)}_{\ell + 0.5}] = 0$.
Using Proposition~\ref{prop:explicit_propagation} and Eq.~\eqref{eq:U_composition}, we find
\begin{align*}
\backpropagation{F(u)}_{\ell + 0.5} 
	& = \prod_{j = \ell}^{\Delta} U_{\ell, j}^{-1} F(u)_{j + 0.5} U_{\ell, j}
\\
	& = F(u)_{\ell + 0.5} \prod_{j = \ell+1}^{\Delta} U_{\ell, j}^{-1} F(u)_{j + 0.5} U_{\ell, j}
\\
	& = F(u)_{\ell + 0.5} \prod_{j = \ell+1}^{\Delta} 
		\left( U_{\ell+1, j} U_{\ell, \ell+1} \right)^{-1} 
		F(u)_{j + 0.5} 
		U_{\ell+1, j} U_{\ell, \ell+1}
\\
	& = F(u)_{\ell + 0.5} U_{\ell, \ell+1}^{-1} \left( \prod_{j = \ell+1}^{\Delta} 
		U_{\ell+1, j}^{-1} 
		F(u)_{j + 0.5} 
		U_{\ell+1, j} \right) U_{\ell, \ell+1}
\\
	& = F(u)_{\ell + 0.5} U_{\ell, \ell+1}^{-1} \backpropagation{F(u)}_{\ell+1.5} U_{\ell, \ell+1}
\end{align*}
The same holds for $v$, that is 
$\backpropagation{F(v)}_{\ell + 0.5} \
= 
F(v)_{\ell + 0.5} U_{\ell, \ell+1}^{-1} \backpropagation{F(v)}_{\ell+1.5} U_{\ell, \ell+1}.
$
Using these expressions for $\backpropagation{F(u)}_{\ell + 0.5}$ and $\backpropagation{F(v)}_{\ell + 0.5}$ and applying Eq.~\eqref{eq:commutator_bilinearity}, we obtain
\begin{align*}
[\backpropagation{F(u)}_{\ell + 0.5}, \backpropagation{F(v)}_{\ell + 0.5}]
%	& = 
%	[
%	F(u)_{\ell + 0.5} U_{\ell, \ell+1}^{-1} \backpropagation{F(u)}_{\ell+1 + 0.5} U_{\ell, \ell+1}
%	,
%	F(v)_{\ell + 0.5} U_{\ell, \ell+1}^{-1} \backpropagation{F(v)}_{\ell+1 + 0.5} U_{\ell, \ell+1}
%	]
%\\
	& = [F(u)_{\ell + 0.5}, F(v)_{\ell + 0.5}]
\\
	& \phantom{ = } + 
	[
		U_{\ell, \ell+1}^{-1} \backpropagation{F(u)}_{\ell+1 + 0.5} U_{\ell, \ell+1}
		,
		U_{\ell, \ell+1}^{-1} \backpropagation{F(v)}_{\ell+1 + 0.5} U_{\ell, \ell+1}
	] 
\\
	&  \phantom{=} + [F(u)_{\ell + 0.5}, U_{\ell, \ell+1}^{-1} \backpropagation{F(v)}_{\ell+1 + 0.5} U_{\ell, \ell+1}]
\\
	& \phantom{ = } + [U_{\ell, \ell+1}^{-1} \backpropagation{F(u)}_{\ell+1 + 0.5} U_{\ell, \ell+1}, F(v)_{\ell + 0.5}] \cdot
\end{align*}
Let us show that each of these four terms is trivial.
The first one 
$
[F(u)_{\ell + 0.5}, F(v)_{\ell + 0.5}]
$
is trivial by definition of the $F(u)$ and $F(v)$.
Using Eq.~\eqref{eq:commutator_unitary} and the induction hypothesis, we get 
$$
[
U_{\ell, \ell+1}^{-1} \backpropagation{F(u)}_{\ell+1.5} U_{\ell, \ell+1}
,
U_{\ell, \ell+1}^{-1} \backpropagation{F(v)}_{\ell+1.5} U_{\ell, \ell+1}
]
=
[\backpropagation{F(u)}_{\ell+1.5}, \backpropagation{F(v)}_{\ell+1.5}]
= 0
$$
proving that the second term is trivial.
Because the operator $U_{\ell, \ell+1}$ is the product of the unitary gates at level $\ell+1$ and $F(u)_{\ell + 0.5}$ and $F(v)_{\ell + 0.5}$ corresponds to measurements at the same level, the support of $U_{\ell, \ell+1}$ cannot overlap with the supports of $F(u)_{\ell + 0.5}$ and $F(v)_{\ell + 0.5}$. 
As a result $U_{\ell, \ell+1}$ acts trivially on $F(u)_{\ell + 0.5}$ and $F(v)_{\ell + 0.5}$.
Therefore, we have 
\begin{align*}
[F(u)_{\ell + 0.5}, U_{\ell, \ell+1}^{-1} \backpropagation{F(v)}_{\ell+1 + 0.5} U_{\ell, \ell+1}]
	& = [U_{\ell, \ell+1} F(u)_{\ell + 0.5} U_{\ell, \ell+1}^{-1}, \backpropagation{F(v)}_{\ell+1.5}]
\\
	& = [F(u)_{\ell + 0.5}, \backpropagation{F(v)}_{\ell+1.5}] \cdot
\end{align*}
Therein, we used Eq.~\eqref{eq:commutator_unitary} in the first equality.
To see that $[F(u)_{\ell + 0.5}, \backpropagation{F(v)}_{\ell+1.5}] = 0$, write this commutator as
\begin{align*}
[F(u)_{\ell + 0.5}, \backpropagation{F(v)}_{\ell+1.5}]
	& = [\eta_{\ell+1.5}(F(u)_{\ell + 0.5}), \backpropagation{F(v)}] \cdot
\end{align*}
and apply Lemma~\ref{lemma:sabilizer_error_vs_relation_operator}.
We can apply this lemma because 
$
F(u)_{\ell + 0.5}
$
is a product of some measured operators $S_j$ with level $\ell_j = \ell + 1$ in the circuit.
We proved that the third term is trivial. 
By symmetry, the fourth term is trivial by the same argument.
This proves that 
$
[\backpropagation{F(u)}_{\ell + 0.5}, \backpropagation{F(v)}_{\ell + 0.5}] = 0
$
concluding the proof of the proposition.
\end{proof}

\section{Logical operators of the spacetime code}

To describe the logical operators of the spacetime code, we introduce some notation.
For any $\ell = 1, \dots, \Delta$ and for any $P \in \Pauli_n$, define the fault operator $G(P, \ell)$ as
\begin{align} \label{eq:image-operator}
G(P, \ell) = \eta_{\ell - 0.5}(P) \eta_{\ell + 0.5}(U_{\ell-1, \ell} P U_{\ell-1, \ell}^{-1}) \cdot
\end{align}
These operators satisfy
\begin{align} \label{eq:propagation_of_G}
\propagation{G(P, \ell)} = \eta_{\ell - 0.5}(P) \cdot
\end{align}
For any vector $v \in \Z_2^m$, define the fault operator $L(v)$ as 
\begin{align}
L(v) = \prod_{j=1}^m G(P_j, \ell_j)^{v_j}
\end{align}
where $\ell_j$ is the level of the $j$ th measured operator $S_j$ and $P_j \in \Pauli_n$ is an arbitrary Pauli operator acting on the support of $S_j$ that anti-commutes with $S_j$.
Because $P_j$ is included in the support of $S_j$ it commutes with all other measured operators at level $\ell_j$.

\begin{prop} \label{prop:logical_operators}
Let $\circuit$ be a depth-$\Delta$ Clifford circuit acting on $n$ qubits.
The stabilizers and logical operators of the spacetime code are generated by the following operators.
\begin{itemize}
    \item A set of operators $\eta_{\Delta + 0.5}(P)$ where $P \in \Pauli_n$ runs over a basis of $\Pauli_n$.
    \item A set of operators $G(P, \ell)$ for all $\ell = 1, \dots, \Delta$ where $P \in \Pauli_n$ runs over a basis of the space of Pauli operators that commutes with all the measured operators at level $\ell$.
    \item A set of operators $L(v)$ where $v \in \Z_2^m$ runs over a basis of the space $\outcome(\circuit)$.
\end{itemize}
\end{prop}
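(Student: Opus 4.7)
My plan is to characterize membership in $\stab^\perp$ via the adjoint relation $[F, \backpropagation{F(u)}] = [\propagation{F}, F(u)]$ from Proposition~\ref{prop:propagated_commutator}, which reduces the commutation check to the forward cumulant of a candidate. A convenient feature of the three proposed generator families is that each has a forward cumulant supported on a \emph{single} half-integer level, which makes both commutation and independence tractable. Since the accumulator is a group automorphism (Corollary~\ref{cor:propagation_automorphism}), verifying independence of the listed generators is equivalent to verifying it of their forward cumulants.

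\textbf{Step 1 (Commutation).} Using Eq.~\eqref{eq:propagation_of_G} and the morphism property~\eqref{eq:propagation_product}, compute
$\propagation{\eta_{\Delta+0.5}(P)} = \eta_{\Delta+0.5}(P)$,
$\propagation{G(P,\ell)} = \eta_{\ell-0.5}(P)$ and
$\propagation{L(v)} = \prod_j \eta_{\ell_j-0.5}(P_j^{v_j})$.
Since $F(u)$ is supported on levels $\{\ell_j-0.5\}\subseteq \{0.5,\dots,\Delta-0.5\}$, the first case commutes trivially because its support is at level $\Delta+0.5$. For the second, $[\propagation{G(P,\ell)}, F(u)] = \sum_{j:\ell_j=\ell} u_j[P,S_j] = 0$ because $P$ lies in the commutant of all measurements at level $\ell$. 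For the third, the disjoint-support condition on measurements sharing a level together with the choice $[P_j,S_{j'}]=\delta_{jj'}$ for $\ell_j=\ell_{j'}$ gives $[\propagation{L(v)}, F(u)] = \sum_j u_j v_j = (u|v) = 0$, using $u\in\outcome(\circuit)^\perp$ and $v\in\outcome(\circuit)$.

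\textbf{Step 2 (Dimension count).} By Theorem~\ref{theorem:space_time_code}, $\dim\stab^\perp = N+K = 2n(\Delta+1)-(m-k)$. Letting $m_\ell$ be the number of measurements at level $\ell$, the listed operators number
$2n + \sum_{\ell=1}^\Delta (2n-m_\ell) + k = 2n(\Delta+1)-(m-k),$
exactly matching.

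\textbf{Step 3 (Independence).} Apply $\propagation{\cdot}$ (an automorphism) to a vanishing linear combination. The $\eta_{\Delta+0.5}(P)$ contributions are isolated at level $\Delta+0.5$ and vanish independently by the basis choice of $\Pauli_n$. At each level $\ell-0.5$ with $\ell\leq\Delta$, decompose $\Pauli_n$ into the commutant $C_\ell$ of the measurements at level $\ell$ and the complementary subspace $\mathrm{span}\{P_j:\ell_j=\ell\}$: these have trivial intersection and complementary dimensions $(2n-m_\ell)$ and $m_\ell$. The $G(P,\ell)$ images contribute only to the $C_\ell$ part while the level-$(\ell-0.5)$ component of $\propagation{L(v)}$ is $\prod_{j:\ell_j=\ell}P_j^{v_j}$, lying in the complement. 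Hence a relation splits level-by-level into two subrelations. The $G$-subrelation forces all $G$-coefficients to zero by basis independence in $C_\ell$. The $L$-subrelation forces $\sum_i \alpha_i v_{i,j}=0$ for every $j$ with $\ell_j=\ell$; collecting this across all levels gives $\sum_i \alpha_i v_i = 0$ in $\Z_2^m$, whence $\alpha_i=0$ by independence of the chosen basis of $\outcome(\circuit)$.

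The main obstacle is the last independence argument: the $G(P,\ell)$ and $L(v)$ generators a priori overlap in support across levels, and what cleanly separates them is the structural fact that $P_j$ (chosen in the support of $S_j$) lies outside $C_\ell$ while the $G$'s sit inside $C_\ell$. Everything else is straightforward bookkeeping of the forward cumulants and the adjoint identity.
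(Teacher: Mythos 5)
Your proposal is correct, and Steps 1 and 2 follow essentially the same route as the paper: the commutation check is done by pushing everything through the adjoint identity $[F,\backpropagation{F(u)}]=[\propagation{F},F(u)]$ together with $\propagation{G(P,\ell)}=\eta_{\ell-0.5}(P)$ and the computation $[\propagation{L(v)},F(u)]=(u|v)$ (the paper isolates the latter as Lemma~\ref{lemma:relation_Fu_Lv}), and the dimension count $2n+\sum_\ell(2n-m_\ell)+k=2n(\Delta+1)-(m-k)=2K+R$ is the same bookkeeping. Where you genuinely diverge is Step 3. The paper establishes independence by showing the three subgroups $L_1,L_2,L_3$ have pairwise trivial intersections and then asserting that the rank of their product is the sum of the ranks --- an inference that, as literally stated, does not follow (pairwise trivial intersection of three subspaces does not imply their sum is direct), so the paper's argument leans on the reader to fill in why the ranks actually add. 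Your argument closes exactly that gap: after applying the accumulator automorphism, you split each level $\ell-0.5$ as $\Pauli_n=C_\ell\oplus\mathrm{span}\{P_j:\ell_j=\ell\}$ (trivial intersection because $\prod_{j\in T}P_j$ commutes with every $S_{j'}$ at level $\ell$ only if $T=\emptyset$, using the disjoint supports), so that the $G$-contributions and the $L$-contributions at each level vanish separately, and the $L$-subrelations reassemble into $\sum_i\alpha_i v_i=0$ in $\Z_2^m$. This is a more rigorous and self-contained independence proof; what it costs is the explicit per-level decomposition, whereas the paper's version is shorter but informal at the key step.
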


The following lemma is used in the proof of the proposition.

\begin{lemma} \label{lemma:relation_Fu_Lv}
For all $u, v \in \Z_2^m$, we have
$
[\backpropagation{F(u)}, L(v)] = (u | v).
$
\end{lemma}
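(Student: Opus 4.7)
The plan is to use the adjoint property (Proposition~\ref{prop:propagated_commutator}) to swap back-propagation on one side for forward propagation on the other, turning the commutator into an expression in which the fault operators live on well-understood half-integer levels. Concretely, I would first apply Proposition~\ref{prop:propagated_commutator} together with the symmetry of the commutator (Eq.~\eqref{eq:commutator_symmetry}) to obtain
\begin{align*}
[\backpropagation{F(u)}, L(v)] = [F(u), \propagation{L(v)}].
\end{align*}

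Next, I would compute $\propagation{L(v)}$ explicitly. Since the accumulator is a group morphism (Eq.~\eqref{eq:propagation_product}) and $\propagation{G(P_j, \ell_j)} = \eta_{\ell_j - 0.5}(P_j)$ by Eq.~\eqref{eq:propagation_of_G}, we get
\begin{align*}
\propagation{L(v)} = \prod_{j=1}^m \propagation{G(P_j, \ell_j)}^{v_j} = \prod_{j=1}^m \eta_{\ell_j - 0.5}(P_j)^{v_j}.
\end{align*}
Thus both $F(u)$ and $\propagation{L(v)}$ are supported on the half-integer layers $\ell_j - 0.5$.

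Now I would expand $[F(u), \propagation{L(v)}]$ using the bilinearity (Eq.~\eqref{eq:commutator_bilinearity}) and the tensor-product rule (Eq.~\eqref{eq:commutator_tensor_product}) to obtain
\begin{align*}
[F(u), \propagation{L(v)}] = \sum_{j, k : \ell_j = \ell_k} u_j v_k [S_j, P_k],
\end{align*}
since contributions from pairs with $\ell_j \neq \ell_k$ live on disjoint layers and vanish by Eq.~\eqref{eq:commutator_tensor_product}. For pairs with $\ell_j = \ell_k$ and $j \neq k$, recall that distinct operations at the same level have disjoint support; since $P_k$ is supported inside $\supp(S_k)$, it commutes with $S_j$. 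For $j = k$, by construction $P_j$ anti-commutes with $S_j$, so $[S_j, P_j] = 1$. Therefore the sum collapses to $\sum_j u_j v_j = (u | v)$, which is the claim.

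The main subtlety is bookkeeping: making sure the adjoint swap is applied in the right direction, and carefully justifying that only the diagonal terms $j = k$ survive. The latter hinges on two separate facts baked into the definition of $L(v)$ and the circuit model, namely that $P_j$ is chosen inside $\supp(S_j)$ anti-commuting only with $S_j$, and that measured operators sharing a level have disjoint supports; neither is algebraically deep, but both must be invoked explicitly. Once those points are in hand, the rest is a routine computation.
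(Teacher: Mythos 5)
Your proposal is correct and follows essentially the same route as the paper's proof: apply the adjoint property (Proposition~\ref{prop:propagated_commutator}) to rewrite the commutator as $[F(u), \propagation{L(v)}]$, use the morphism property and Eq.~\eqref{eq:propagation_of_G} to localize $\propagation{L(v)}$ on the levels $\ell_j - 0.5$, and reduce to the single-level commutators, where only the diagonal terms $[S_j, P_j] = 1$ survive. The paper compresses your final bookkeeping into the remark that $[F(u)_{\ell_j - 0.5}, P_j] = u_j$ by the choice of $P_j$, having already noted when defining $L(v)$ that $P_j$ commutes with all other measured operators at level $\ell_j$.
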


\begin{proof}
\begin{align*}
[\backpropagation{F(u)}, L(v)]
    & = [F(u), \propagation{L(v)}]
    & \text{By Proposition~\ref{prop:propagated_commutator}}
\\
    & = [F(u), \propagation{\prod_{j=1}^m L(P_j, \ell_j)^{v_j}}]
    & \text{By definition of $L(v)$}
\\
    & = [F(u), \prod_{j=1}^m \propagation{G(P_j, \ell_j)^{v_j}}]
    & \text{By Eq.~\eqref{eq:propagation_product}}
\\
    & = \sum_{j=1}^m v_j [F(u), \propagation{G(P_j, \ell_j)}]
    & \text{By Eq.~\eqref{eq:commutator_bilinearity}}
\\
    & = \sum_{j=1}^m v_j [F(u), \eta_{\ell_j - 0.5}(P_j)]
    & \text{By Eq.~\eqref{eq:propagation_of_G}}
\\
    & = \sum_{j=1}^m v_j [F(u)_{\ell_j - 0.5}, P_j]
    &
\end{align*}
By definition of $P_i$, we have $[F(u)_{\ell_j - 0.5}, P_j] = u_j$ which yields
$
[\backpropagation{F(u)}, L(v)] = (u | v).
$
\end{proof}

\begin{proof} [Proof of Proposition~\ref{prop:logical_operators}]
First, let us prove that these three families of operators are logical operators of the spacetime code, that is that they commute with all stabilizers $\backpropagation{F(u)}$ with $u \in \outcome(\circuit)^\perp$.

The operators of the form $\eta_{\Delta + 0.5}(P)$ satisfy
\begin{align*}
[\backpropagation{F(u)}, \eta_{\Delta + 0.5}(P)]
    & = [F(u), \propagation{\eta_{\Delta + 0.5}(P)}]
\\
    & = [F(u), \eta_{\Delta + 0.5}(P)]
% \\
%     & = [P, F(u)_{\Delta + 0.5}]
% \\
%     & = [P, I]
%     = 0 \cdot
\end{align*}
which is trivial because $F(u)$ is trivial over level $\Delta + 0.5$.

For the operators $G(P, \ell)$, we get
$$
[\backpropagation{F(u)}, G(P, \ell)]
= 
[F(u), \propagation{G(P, \ell)}]
$$
which is equal to $[F(u), \eta_{\ell - 0.5}(P)]$ by Eq.~\ref{eq:propagation_of_G}.
Because $P$ commute with all the measured operators at level $\ell$, we get
\begin{align*}
[\backpropagation{F(u)}, G(P, \ell)]
    & = [\eta_{\ell - 0.5}(P), F(u)]
\\
    & = [F(u)_{\ell - 0.5}, P] = 0 \cdot
\end{align*}

For the operators $L(v)$, based on Lemma~\ref{lemma:relation_Fu_Lv} we have
$
[\backpropagation{F(u)}, L(v)]
= (u | v)
$
and this inner product is trivial because $v$ is in the code $\outcome(\circuit)$ and $u$ belongs to its dual.

To prove that these three families of operators generate all stabilizers and logical operators, it is enough to show that the group they generate has rank $2K + R$ where $K$ is the number of logical qubits of the stabilizer code and $R$ is the rank of the stabilizer group.
For the spacetime code, we know that $K = n(\Delta+1) - r$ and $R = r$

Denote by $L_1, L_2, L_3$ the subgroups of $\Pauli_n(\delta+1)$ generated by these three sets of operators.
We have $L_1 \cap L_2 = \{I\}$ because the operators of $L_2$ are supported on at least two levels and $L_1$ has support on level $\Delta + 0.5$.
We also have $L_1 \cap L_3 = \{I\}$ because the operators of $L_1$ cannot flip any outcome and the only operator of $L_3$ that induces no outcome flip is $I$.
The same argument also shows that $L_2 \cap L_3 = \{I\}$.
As a result, the rank of the subgroup of $\Pauli_n(\delta+1)$ generated by all the operators of $L_1, L_2$ and $L_3$ is the sum of the ranks of the three subgroups $\rank(L_1) + \rank(L_2) + \rank(L_3)$.

It is immediate to see that $\rank(L_1) = 2n$.
For a circuit without measurement the rank of $L_2$ is $2n\Delta$.
The constraint associated with the commutation with each measurement decreases the rank by $1$, which yields $\rank(L_2) = 2n\Delta - m$ where $m$ is the number of measurements of the circuit.
Finally, the rank of $L_3$ is given by the dimension of the outcome code, that is $\rank(L_3) = m-r$.

Putting things together this proves that these three sets of operators generate a group with rank 
$
% 2n + 2n\Delta - m + m - r 
% =
2n(\Delta+1) - r
$
which coincides with the value of $2K + R$.
This proves that this family of operators generate all stabilizer and logical operators.
\end{proof}

\section{Levels supporting a check operator}

In this section, we prove the support of a check operator of the spacetime code is related to the support of the measurements involved in the corresponding check of the outcome code.

\begin{lemma} \label{lemma:check_operator_support}
Let $u \in \outcome(\circuit)^\perp$ be a non-zero vector. Denote by $\ell_u$ (respectively $\ell_u'$) the minimum (respectively maximum) level of a measured operator $S_j$ with $u_j = 1$.
If $\ell < \ell_u$ or $\ell \geq \ell_u'$ then we have
$
\backpropagation{F(u)}_{\ell + 0.5} = I.
$
\end{lemma}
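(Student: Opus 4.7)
The plan is to split on the two disjoint conditions in the hypothesis, handling each with the explicit back-cumulant formula~\eqref{eq:explicit_backpropagation} from Proposition~\ref{prop:explicit_propagation}.

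The case $\ell \geq \ell_u'$ is a direct bookkeeping check: a component $F(u)_{j+0.5}$ can be nontrivial only when $j+1$ equals some $\ell_{j'}$ with $u_{j'} = 1$, while the hypothesis forces $j + 1 \geq \ell_u' + 1$, which exceeds every such level. Every factor in the back-cumulant product is then $I$, so $\backpropagation{F(u)}_{\ell + 0.5} = I$.

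The case $\ell < \ell_u$ is the substantive one. The strategy is to reduce it to the identity $\backpropagation{F(u)}_{0.5} = I$, which in turn follows from Lemma~\ref{lemma:input_error_vs_relation_operator}: that lemma gives $[\eta_{0.5}(P), \backpropagation{F(u)}] = 0$ for every $P \in \Pauli_n$, so by Eq.~\eqref{eq:commutator_tensor_product} the component $\backpropagation{F(u)}_{0.5}$ commutes with every element of $\Pauli_n$ and therefore equals $I$. To transfer this to arbitrary $\ell < \ell_u$, I would use the composition identities~\eqref{eq:U_composition}--\eqref{eq:U_U_inverse_composition} to rewrite each factor $U_{\ell, j}^{-1} F(u)_{j+0.5} U_{\ell, j}$ appearing in Eq.~\eqref{eq:explicit_backpropagation} as $U_{0,\ell}\,U_{0,j}^{-1} F(u)_{j+0.5} U_{0,j}\,U_{0,\ell}^{-1}$, and then pull the common $U_{0,\ell}$-conjugation outside the product. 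The factors with $j \in \{0, \dots, \ell - 1\}$ that would be added by lowering the starting index all have $F(u)_{j+0.5} = I$, since $j + 1 \leq \ell < \ell_u$ excludes every $\ell_{j'}$ with $u_{j'} = 1$, so extending the product down to $j = 0$ is harmless. The inner product thus becomes exactly $\backpropagation{F(u)}_{0.5}$, and $\backpropagation{F(u)}_{\ell + 0.5} = U_{0, \ell}\, \backpropagation{F(u)}_{0.5}\, U_{0, \ell}^{-1} = I$ follows.

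The main obstacle is this second case: it cannot be read off from the back-cumulant formula alone and genuinely uses the outcome-check property of $u$, which is precisely what Lemma~\ref{lemma:input_error_vs_relation_operator} packages. Once the reduction to level $0.5$ is in place, the composition identities for $U_{i,j}$ do all the remaining work.
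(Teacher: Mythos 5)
Your proof is correct and follows essentially the same route as the paper's: the $\ell \geq \ell_u'$ case by inspecting the explicit back-cumulant formula, and the $\ell < \ell_u$ case by first deriving $\backpropagation{F(u)}_{0.5} = I$ from Lemma~\ref{lemma:input_error_vs_relation_operator} and then transporting this to level $\ell + 0.5$ via conjugation by $U_{0,\ell}$ using the composition identities.
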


\begin{proof}
By definition of $\ell_u'$, for all $\ell \geq \ell_u'$ the component $F(u)_{\ell + 0.5}$ is trivial.
Applying Proposition~\ref{prop:explicit_propagation}, this shows that $\backpropagation{F(u)}_{\ell + 0.5}$ is also trivial for all $\ell \geq \ell_u'$.

Assume now that $\ell = 0$.
We know from Lemma~\ref{lemma:input_error_vs_relation_operator} that $\backpropagation{F(u)}$ commutes with all fault operators $\eta_{0.5}(P)$ supported on level $0.5$.
Because $P$ can be any operator of $\Pauli_n$, this implies $\backpropagation{F(u)}_{0.5} = I$.

Consider now the case $\ell < \ell_u$ and let us prove that 
$
\backpropagation{F(u)}_{\ell + 0.5} = \backpropagation{F(u)}_{0.5}
$
which is the identity.
Applying Proposition~\ref{prop:explicit_propagation}, we find
\begin{align*}
\backpropagation{F(u)}_{0.5} 
%= \prod_{j = 0}^{\Delta} \left( U_{0, j}^{-1} F_{j + 0.5} U_{0, j} \right)
= \prod_{j = \ell}^{\Delta} \left( U_{0, j}^{-1} F_{j + 0.5} U_{0, j} \right)
\end{align*}
because for all $j < \ell$, we know that $F(u)_{j + 0.5} = I$.
Using Eq.~\eqref{eq:U_composition}, we can write 
$
U_{0, j} = U_{\ell, j} U_{0, \ell}
$
for all $j \leq \ell$ which yields
\begin{align*}
\backpropagation{F(u)}_{0.5} 
	& = \prod_{j = \ell}^{\Delta} \left((U_{\ell, j} U_{0, \ell})^{-1} F_{j + 0.5} U_{\ell, j} U_{0, \ell} \right)
\\
	& = U_{0, \ell}^{-1} \prod_{j = \ell}^{\Delta} \left( U_{\ell, j}^{-1} F_{j + 0.5} U_{\ell, j} \right) U_{0, \ell}
\\
	& = U_{0, \ell}^{-1} \backpropagation{F(u)}_{\ell + 0.5} U_{0, \ell} \cdot
\end{align*}
Because $\backpropagation{F(u)}_{0.5}$ is trivial, this relation implies $\backpropagation{F(u)}_{\ell + 0.5} = U_{0, \ell} \backpropagation{F(u)}_{0.5} U_{0, \ell}^{-1} = I$.
\end{proof}

\section{Alternative definition of the check operators}
\label{appendix:alternative_def_of_check_operators}

One may prefer to use the cumulant obtained by propagating faults forward instead of the back-cumulant obtained from the backward propagation of faults.
Here, we show that we can obtain the check operators by propagating forward the fault operators ${F'(u)}$ defined in Section~\ref{subsec:check_operators}.

\begin{prop} \label{prop:fault_operator_alternative_def}
If $u \in \outcome(\circuit)^\perp$, then, we have
\begin{align}
\backpropagation{F(u)} = \propagation{F'(u)} \cdot
\end{align}
\end{prop}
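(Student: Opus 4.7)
The plan is to prove the per-level equality $\backpropagation{F(u)}_{\ell + 0.5} = \propagation{F'(u)}_{\ell + 0.5}$ for every $\ell \in \{0, 1, \ldots, \Delta\}$, and then assemble the tensor product over levels. Applying Eq.~\eqref{eq:explicit_backpropagation} to $F(u)$, noting that $F(u)_{i+0.5}$ is nontrivial only when $i = \ell_j - 1$, the back-cumulant collapses to a product over those $j$ with $u_j = 1$ and $\ell_j > \ell$; symmetrically, Eq.~\eqref{eq:explicit_propagation} applied to $F'(u)$, whose support sits at levels $\ell_j + 0.5$, gives a product over $j$ with $u_j = 1$ and $\ell_j \leq \ell$. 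The two products thus range over complementary subsets of the indices $\{j : u_j = 1\}$, and the equality I am after is that these two partial products agree in $\Pauli_n$ up to phase.

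My next step is to rewrite both products in terms of a common object: for each measurement index $j$ and each target level $\ell$, let $T_j^{(\ell)}$ denote the result of propagating $S_j$ (backward if $\ell_j > \ell$, forward if $\ell_j \leq \ell$) to level $\ell + 0.5$. A key algebraic fact I will use is that the product $U_{\ell_j}$ of unitary gates at the same level as the measurement of $S_j$ commutes with $S_j$, since operations at the same level have disjoint support; this reconciles the slightly different conjugations $U_{\ell_j, \ell}$ versus $U_{\ell_j - 1, \ell}$ appearing in the two formulas, so both sides become products of the same operators $T_j^{(\ell)}$.

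The heart of the argument is the identity
\begin{align*}
\prod_{j : u_j = 1} T_j^{(\ell)} = I \quad \text{in } \Pauli_n,
\end{align*}
which, because $\Pauli_n$ is abelian and its elements are involutions, rearranges the two complementary-subset products into one another. At $\ell = 0$ this identity is exactly the statement $\backpropagation{F(u)}_{0.5} = I$, which holds by Lemma~\ref{lemma:check_operator_support} since $0 < \ell_u$. A short computation with Eq.~\eqref{eq:U_composition} shows $T_j^{(\ell)} = U_{0,\ell}\, T_j^{(0)}\, U_{0,\ell}^{-1}$, so conjugation by $U_{0,\ell}$ transports the identity from level $0$ to arbitrary $\ell$. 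The main technical obstacle is this conjugation bookkeeping across the regimes $\ell_j > \ell + 1$, $\ell_j = \ell + 1$, and $\ell_j \leq \ell$ where the propagation formula changes form; the commutation $[U_{\ell_j}, S_j] = 0$ is precisely what smooths out the transition at $\ell_j = \ell + 1$ and ensures a single uniform definition of $T_j^{(\ell)}$.
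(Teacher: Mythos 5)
Your proof is correct, and it reaches the conclusion by a genuinely different organization of the same underlying facts. The paper first proves a conditional statement (Lemma~\ref{lemma:relation_F_F'}: the hypothesis $\backpropagation{F(u)}_{0.5} = I$ alone implies $\backpropagation{F(u)} = \propagation{F'(u)}$) by induction on the level, using the one-step recursions $\propagation{F'(u)}_{\ell + 0.5} = U_{\ell} \propagation{F'(u)}_{\ell - 0.5} U_\ell^{-1} F'(u)_{\ell + 0.5}$ and $\backpropagation{F(u)}_{\ell - 0.5} = U_{\ell}^{-1} \backpropagation{F(u)}_{\ell + 0.5} U_{\ell} F(u)_{\ell - 0.5}$ together with $F(u)_{\ell - 0.5} = F'(u)_{\ell + 0.5}$ and the disjointness of same-level measurements and unitaries; it then supplies the base case from Lemma~\ref{lemma:input_error_vs_relation_operator}. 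You instead unroll everything through the explicit formulas of Proposition~\ref{prop:explicit_propagation}, observe that the two sides at level $\ell + 0.5$ are the partial products of a single family $\{T_j^{(\ell)}\}_{j : u_j = 1}$ over the complementary index sets $\{\ell_j > \ell\}$ and $\{\ell_j \leq \ell\}$, and reduce the claim to the single identity $\prod_{j : u_j = 1} T_j^{(\ell)} = I$, which you obtain at $\ell = 0$ (where it is $\backpropagation{F(u)}_{0.5} = I$, i.e.\ Lemma~\ref{lemma:check_operator_support}, itself resting on Lemma~\ref{lemma:input_error_vs_relation_operator} with no circularity) and transport to all levels by conjugation with $U_{0,\ell}$; the commutation $[U_{\ell_j}, S_j] = 0$ correctly reconciles the backward conjugation $U_{\cdot,\,\ell_j - 1}$ with the forward conjugation $U_{\ell_j,\,\cdot}$ at the crossover. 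The inputs are identical in both proofs --- triviality at level $0.5$ (the only place $u \in \outcome(\circuit)^\perp$ is used) and same-level support disjointness --- but your complementary-subset framing makes the mechanism more transparent, whereas the paper's inductive lemma is stated conditionally and is reused elsewhere (e.g.\ in Lemma~\ref{lemma:last_level_F_expression}), which your monolithic argument would not directly provide.
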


Let us first prove a lemma.

\begin{lemma} \label{lemma:relation_F_F'}
Let $u \in \Z_2^m$.
If $\backpropagation{F(u)}_{0.5} = I$ then we have
\begin{align}
\backpropagation{F(u)} = \propagation{F'(u)} \cdot
\end{align}
\end{lemma}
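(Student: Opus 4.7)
The plan is to show that both $\backpropagation{F(u)}$ and $\propagation{F'(u)}$ satisfy the same level-by-level recursion with the same initial condition, so they must coincide at every level. Concretely, define $A_\ell = \backpropagation{F(u)}_{\ell+0.5}$ and $B_\ell = \propagation{F'(u)}_{\ell+0.5}$, and establish recursions for $A_\ell$ and $B_\ell$ in terms of $A_{\ell-1}$ and $B_{\ell-1}$ respectively.

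Starting from the explicit formulas in Proposition~\ref{prop:explicit_propagation} and applying the composition rule $U_{i,\ell} = U_\ell U_{i,\ell-1}$ from Eq.~\eqref{eq:U_composition}, peeling off the $i=\ell$ term (respectively $j=\ell-1$ term) gives
\[
B_\ell = F'(u)_{\ell+0.5} \cdot U_\ell B_{\ell-1} U_\ell^{-1}
\quad\text{and}\quad
A_{\ell-1} = F(u)_{\ell-0.5} \cdot U_\ell^{-1} A_\ell U_\ell.
\]
Solving the second relation for $A_\ell$ (using that Pauli operators square to the identity modulo phase, so $P \cdot P = I$ in $\Pauli_{n(\Delta+1)}$) yields $A_\ell = U_\ell F(u)_{\ell-0.5} A_{\ell-1} U_\ell^{-1}$.

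The key observation that unifies these two recursions is that $F(u)_{\ell-0.5}$ and $F'(u)_{\ell+0.5}$ are both equal to $\prod_{k:\,\ell_k=\ell} S_k^{u_k}$ by definition of $F$, $F'$, and the $\eta_{\ell\pm 0.5}$ notation. Moreover, the operator $U_\ell$ is a product of unitary operations at level $\ell$, whose supports must be disjoint from those of the measured operators $S_k$ with $\ell_k = \ell$ (because operations at the same level act simultaneously, and the paper assumes disjoint supports at each level). Hence $U_\ell$ commutes with $F(u)_{\ell-0.5}$, and the $A$-recursion simplifies to
\[
A_\ell = F(u)_{\ell-0.5} \cdot U_\ell A_{\ell-1} U_\ell^{-1} = F'(u)_{\ell+0.5} \cdot U_\ell A_{\ell-1} U_\ell^{-1},
\]
which is identical to the $B$-recursion.

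For the base case at $\ell = 0$, the hypothesis of the lemma gives $A_0 = \backpropagation{F(u)}_{0.5} = I$, while $B_0 = \propagation{F'(u)}_{0.5} = F'(u)_{0.5} = I$ since no measurements occur at level $0$ (levels are in $\{1,\dots,\Delta\}$). A straightforward induction on $\ell$ then gives $A_\ell = B_\ell$ for every $\ell \in \{0,1,\dots,\Delta\}$, which is exactly the claim $\backpropagation{F(u)} = \propagation{F'(u)}$. The main conceptual point, and the only non-routine step, is recognizing that both the assumption $\backpropagation{F(u)}_{0.5} = I$ and the disjointness of same-level supports are precisely what is needed to match the two recursions; everything else is bookkeeping with Eq.~\eqref{eq:U_composition} and the explicit cumulant formulas.
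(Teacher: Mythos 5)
Your proof is correct and follows essentially the same route as the paper's: an induction on the level using the one-step recursions for the cumulant and back-cumulant derived from Proposition~\ref{prop:explicit_propagation}, the identity $F(u)_{\ell-0.5}=F'(u)_{\ell+0.5}$, and the fact that $U_\ell$ commutes with the measured operators at level $\ell$ because same-level operations have disjoint supports. The only difference is presentational — you show both sides satisfy the same recursion, while the paper substitutes one recursion into the other and cancels — so there is nothing substantive to add.
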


\begin{proof}
Let us prove by induction that we have the equality 
$
\backpropagation{F(u)}_{\ell + 0.5} = \propagation{F'(u)}_{\ell + 0.5}
$
for all level $\ell$.

For $\ell = 0$, we have $\backpropagation{F(u)}_{0.5} = I$ by assumption and by definition of the cumulant we have $\propagation{F'(u)}_{0.5} = I$.

Assume that the result is true for level $\ell - 1 < \Delta$.
Using Proposition~\ref{prop:explicit_propagation}, we obtain
\begin{align} \label{eq:proof_lemma_F_F'_prop}
\propagation{F'(u)}_{\ell + 0.5} 
    & = U_{\ell} \propagation{F'(u)}_{\ell - 0.5} U_\ell^{-1} F'(u)_{\ell + 0.5}
\end{align}
and
\begin{align} \label{eq:proof_lemma_F_F'_backprop}
\backpropagation{F(u)}_{\ell - 0.5}
    & = U_{\ell}^{-1} \backpropagation{F(u)}_{\ell + 0.5} U_{\ell} F(u)_{\ell - 0.5} \cdot
\end{align}
Using the induction hypothesis, we can replace $\propagation{F'(u)}_{\ell - 0.5}$ by $\backpropagation{F(u)}_{\ell - 0.5}$ in Eq.~\eqref{eq:proof_lemma_F_F'_prop} and we can use Eq.~\eqref{eq:proof_lemma_F_F'_backprop} for the value of $\backpropagation{F(u)}_{\ell - 0.5}$.
This leads to the equation
\begin{align*}
\propagation{F'(u)}_{\ell + 0.5} 
    & = U_{\ell} \backpropagation{F(u)}_{\ell - 0.5} U_\ell^{-1} F'(u)_{\ell + 0.5}
\\
    & = U_{\ell} U_{\ell}^{-1} \backpropagation{F(u)}_{\ell + 0.5} U_{\ell} F(u)_{\ell - 0.5} U_\ell^{-1} F'(u)_{\ell + 0.5}
\\
    & = \backpropagation{F(u)}_{\ell + 0.5} U_{\ell} F(u)_{\ell - 0.5} U_\ell^{-1} F'(u)_{\ell + 0.5}
\end{align*}
By definition of $U_\ell$ acts trivially on $F(u)_{\ell - 0.5}$ implying 
$
U_{\ell} F(u)_{\ell - 0.5} U_{\ell}^{-1} = F(u)_{\ell - 0.5}.
$
Injecting this in the previous equation produces
\begin{align*}
\propagation{F'(u)}_{\ell + 0.5} 
    & = \backpropagation{F(u)}_{\ell + 0.5} F(u)_{\ell - 0.5} F'(u)_{\ell + 0.5}
\end{align*}
which leads to
$
\propagation{F'(u)}_{\ell + 0.5}
=
\backpropagation{F(u)}_{\ell + 0.5}
$
because $F(u)_{\ell - 0.5} = F'(u)_{\ell + 0.5}$ by definition.
\end{proof}

\begin{proof} [Proof of Proposition~\ref{prop:fault_operator_alternative_def}]
By Lemma~\ref{lemma:input_error_vs_relation_operator}, we know that $\backpropagation{F(u)}_{0.5}$ commutes with all Pauli operators $P \in \Pauli_n$.
The only Pauli operator that satisfies this property is $\backpropagation{F(u)}_{0.5} = I$.
Therefore, the operator $\backpropagation{F(u)}$ satisfies the assumption of Lemma~\ref{lemma:relation_F_F'} which proves the Proposition.
\end{proof}

\section{The connected components of a stabilizer of the spacetime code}
\label{sec:connected-components}

The proof of proposition~\ref{prop:algo_return_all_connected_stabilizers} relies on the following proposition. The proof of this proposition relies on lemmas proven after the proposition.

\begin{prop} \label{prop:stabilizer_component}
Let $S$ be a stabilizer for a spacetime code.
The restriction $S_{|\kappa}$ of $S$ to a connected component $\kappa$ of the support of $S$ in the spacetime graph is a stabilizer of the spacetime code.
\end{prop}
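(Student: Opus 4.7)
My approach is to realize $S_{|\kappa}$ as a check operator of the spacetime code. By Theorem~\ref{theorem:space_time_code} we may write $S = \backpropagation{F(u)}$ for some $u \in \outcome(\circuit)^\perp$, and the plan is to construct a vector $u^\kappa \in \outcome(\circuit)^\perp$ for each connected component $\kappa$ of $\supp(S)$, so that $\backpropagation{F(u^\kappa)} = S_{|\kappa}$; this immediately makes $S_{|\kappa}$ a stabilizer of $\spacetime(\circuit)$.

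The key structural input is a clique property of the spacetime graph: every circuit operation at level $\ell$ acting on a qubit set $Q$ creates, by construction, a clique on the $2|Q|$ vertices $\{(q,\ell \pm 0.5) : q \in Q\}$. Consequently, two vertices appearing in the footprint of a single operation and both lying in $\supp(S)$ must belong to the same connected component of $\supp(S)$, and in particular no single operation has a footprint spanning two distinct components.

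I would define $u^\kappa_j = u_j$ whenever the footprint of the measurement $S_j$ at level $\ell_j$ meets $\kappa$, and $u^\kappa_j = 0$ otherwise; the clique property guarantees this is well defined. The main claim $\backpropagation{F(u^\kappa)} = S_{|\kappa}$ is then proved by decreasing induction on $\ell$ using the recursion
\[
\backpropagation{F}_{\ell+0.5} = F_{\ell+0.5} \cdot U_{\ell,\ell+1}^{-1} \backpropagation{F}_{\ell+1.5} U_{\ell,\ell+1}
\]
derived in the proof of Proposition~\ref{prop:check_operators_commutation}. The base case $\ell = \Delta$ holds by Lemma~\ref{lemma:check_operator_support}; in the inductive step, the measurement injection $F(u^\kappa)_{\ell+0.5}$ matches $F(u)_{\ell+0.5}$ exactly on the qubits in $\kappa$, and conjugation of $S_{|\kappa,\ell+1.5}$ by $U_{\ell,\ell+1}^{-1}$ is shown to preserve $\kappa$-membership at level $\ell+0.5$ by the clique property: any spread to a qubit outside $\kappa$ would force an operation at level $\ell+1$ to have a footprint meeting two distinct components of $\supp(S)$.

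The main obstacle is precisely this inductive step, since Pauli conjugation by a product of gates and measurement projectors does not obviously respect restriction to a vertex subset. One must carefully track the support of $\backpropagation{F(u)}_{\ell+0.5}$ and show that the contributions from $S_{|\kappa,\ell+1.5}$ and from the $\kappa$-relevant measurement injections at level $\ell+1$ combine to exactly $S_{|\kappa,\ell+0.5}$, while contributions coming from the complementary components cancel on $\kappa$. Once this is done, the membership $u^\kappa \in \outcome(\circuit)^\perp$ can be verified via Proposition~\ref{prop:propagated_commutator}: for any $v \in \outcome(\circuit)^\perp$, the commutator $[\backpropagation{F(u^\kappa)}, \backpropagation{F(v)}]$ decomposes half-level by half-level, and the disjointness of the component supports together with the known identity $[S, \backpropagation{F(v)}] = 0$ forces each piece to vanish.
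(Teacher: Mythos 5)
Your construction of $u^\kappa$ (restricting $u$ to the measurements whose footprint meets $\kappa$, well-defined by the clique property) and the claim $\backpropagation{F(u^\kappa)} = S_{|\kappa}$ match the first half of the paper's argument (Lemma~\ref{lemma:space_time_stabilizer_decomposition}), and that part is sound. The genuine gap is in your final step, where you try to establish $u^\kappa \in \outcome(\circuit)^\perp$ by checking that $[\backpropagation{F(u^\kappa)}, \backpropagation{F(v)}] = 0$ for all $v \in \outcome(\circuit)^\perp$. This is insufficient for two reasons. First, commuting with every element of the stabilizer group only places $\backpropagation{F(u^\kappa)}$ in the normalizer, i.e., it shows $S_{|\kappa}$ is a \emph{logical operator} of the spacetime code, not a stabilizer; by Lemma~\ref{lemma:relation_Fu_Lv} the condition $u^\kappa \in \outcome(\circuit)^\perp$ is instead equivalent to $[\backpropagation{F(u^\kappa)}, L(w)] = 0$ for all $w \in \outcome(\circuit)$, so you would need commutation with the operators $L(w)$, not with the checks. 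Second, even the commutation you want does not follow from your decomposition argument: $[S, \backpropagation{F(v)}] = \sum_i [S_{|\kappa_i}, \backpropagation{F(v)}] = 0$ is a sum mod $2$, and a vanishing sum does not force each summand to vanish (two components could each anticommute with $\backpropagation{F(v)}$).

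The paper closes this gap by a different mechanism: it returns to the \emph{definition} of the outcome code rather than to commutation relations. Writing $\ell_{v}'$ for the last level carrying a measurement indexed by $v = u^\kappa$, it shows (Lemmas~\ref{lemma:last_level_F_expression} and~\ref{lemma:stabilizer_of_subcircuit}) that $F(v)_{\ell_v'-0.5} = \propagation{F'(v)}_{\ell_v'-0.5}$ is a stabilizer of the output state of the truncated circuit $\circuit[\ell_v'-1]$, by an induction that propagates forward through the circuit levels. Hence the parity $\sum_j v_j o_j$ of the corresponding outcomes is deterministically $0$, which is exactly the statement $(o \,|\, v) = 0$ for all $o \in \outcome(\circuit)$, i.e., $v \in \outcome(\circuit)^\perp$. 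Some argument of this kind, tying the restricted vector back to a deterministic outcome constraint, is what your proof is missing.
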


\begin{proof}
Based on lemma~\ref{lemma:space_time_stabilizer_decomposition}, the restriction $S_{|\kappa}$ can be written as $\backpropagation{F(v)}$ for some vector $v$. To show that this is a stabilizer of the spacetime code, we must show that $v \in \outcome(\circuit)^\perp$.
% It suffices to show that there is a relation between the measured operators $S_{j}$ such that $v_j = 1$.
Denote by $\ell_v'$ the maximum level of a measured operator $S_{j}$ of the circuit such that $v_j = 1$.
The product of the measured operators indexed by $v$ at level $\ell_v'$ is $F(v)_{\ell_v' - 0.5}$.
By Lemma~\ref{lemma:last_level_F_expression}, this operator is equal to $\propagation{F'(v)}_{\ell_v' - 0.5}$.
This is a stabilizer for the output of the circuit $\circuit[\ell_v'-1]$ based on Lemma~\ref{lemma:stabilizer_of_subcircuit} (recall that $\circuit[\ell_v']$ is the subcircuit containing all the operations of $\circuit$ with level $\leq \ell_v'$).
This implies that, in the absence of circuit faults, the outcomes of the measurements of the operators $S_j$ with $v_j = 1$ have a fixed parity 0, which means that $(o | v) = 0$ for all $o \in \outcome(\circuit)$. This proves the proposition.
\end{proof}

\begin{lemma} \label{lemma:space_time_stabilizer_decomposition}
Let $\circuit$ be a Clifford circuit with $m$ measurements.
Let $S$ be a stabilizer for the corresponding spacetime code. We can decompose $S$ as
$$
S = \prod_{i} \backpropagation{F(u(i))}
$$
where $u(i) \in \Z_2^m$.
Moreover, the support of the operators $\backpropagation{F(u(i))}$ have disjoint supports.
\end{lemma}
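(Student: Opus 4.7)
My plan is to build the decomposition by splitting $S$ into single-measurement back-cumulants and then regrouping those pieces according to whether their supports intersect.

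First, since $S$ is a stabilizer of the spacetime code, Theorem~\ref{theorem:space_time_code} gives $S = \backpropagation{F(u)}$ for some $u \in \outcome(\circuit)^\perp$. Let $J = \{j : u_j = 1\}$ and, for each $j \in J$, set $G_j = \backpropagation{\eta_{\ell_j - 0.5}(S_j)}$. From $F(u) = \prod_{j \in J} \eta_{\ell_j - 0.5}(S_j)$ together with the morphism property of the back-accumulator (Proposition~\ref{prop:check_operator_morphism} and Eq.~\eqref{eq:backpropagation_product}), I obtain the factorization $S = \prod_{j \in J} G_j$.

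Next, I would form the intersection graph $H$ on vertex set $J$, placing an edge $\{j,j'\}$ exactly when $\supp(G_j) \cap \supp(G_{j'}) \neq \emptyset$, and take its connected components $J_1, \ldots, J_r$. Define $u(i) \in \Z_2^m$ by $u(i)_j = u_j$ for $j \in J_i$ and $u(i)_j = 0$ otherwise, so that $\sum_i u(i) = u$. Applying the morphism property block by block gives $\backpropagation{F(u(i))} = \prod_{j \in J_i} G_j$, hence $\prod_i \backpropagation{F(u(i))} = \prod_{j \in J} G_j = S$. Disjointness is immediate from the partition: for $j \in J_i$ and $j' \in J_{i'}$ with $i \neq i'$, the pair $\{j,j'\}$ is not an edge of $H$, so $\supp(G_j) \cap \supp(G_{j'}) = \emptyset$; since $\supp(\backpropagation{F(u(i))}) \subseteq \bigcup_{j \in J_i} \supp(G_j)$, the supports of distinct $\backpropagation{F(u(i))}$ cannot meet.

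The only nontrivial input is the morphism property, already in hand from Proposition~\ref{prop:check_operator_morphism}, so the argument is essentially bookkeeping and I do not anticipate a serious obstacle. A subtlety worth flagging for the downstream use in Proposition~\ref{prop:stabilizer_component} is that the intersection-graph partition can be strictly coarser than the partition of $\supp(S)$ into its spacetime-graph connected components, since supports of the $G_j$ generally include cancelled qubits; if a finer decomposition is needed, one can verify (by induction on the level of back-propagation, exploiting the clique structure the spacetime graph inherits from each circuit operation) that every $\supp(G_j)$ is itself connected, and refine the grouping by intersecting with the connected components of $\supp(S)$.
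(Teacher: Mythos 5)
Your argument does prove the lemma exactly as it is stated: the factorization $S=\prod_{j\in J}G_j$ with $G_j=\backpropagation{\eta_{\ell_j-0.5}(S_j)}$, the regrouping along connected components of the intersection graph, and the disjointness argument are all correct (and the morphism property you invoke is indeed the only nontrivial input). Note, however, that the statement as written is almost vacuous --- the one-factor decomposition $u(1)=u$ already satisfies it --- and the content the paper actually extracts from this lemma, in the proof of Proposition~\ref{prop:stabilizer_component}, is that the factors can be taken to be exactly the restrictions $S_{|\kappa_i}$ of $S$ to the connected components $\kappa_i$ of $\supp(S)$ in the \emph{spacetime graph}. The paper's proof is organized around that goal from the start: it takes the $\kappa_i$ as the index set, assigns each measurement index $j$ with $u_j=1$ to the unique component met by the clique $V(S_j)=\{(\ell_j\pm 0.5,q)\ :\ q\in\supp(S_j)\}$ (well defined because a clique of the spacetime graph can meet at most one connected component of $\supp(S)$), and then argues that $\supp(\backpropagation{F(u(i))})=\kappa_i$, hence $\backpropagation{F(u(i))}=S_{|\kappa_i}$.

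Your intersection-graph grouping does not deliver this, and the subtlety you flag at the end is exactly where it fails. When two $G_j$'s overlap and cancel on their overlap, your $u(i)$ lumps them together even though $\supp(S)$ may split there into several spacetime-graph components; the restriction of $S$ to one such component is then not exhibited as any $\backpropagation{F(v)}$, which is precisely what Proposition~\ref{prop:stabilizer_component} needs. Your proposed repair (showing each $\supp(G_j)$ is connected and then ``refining by intersecting with the components of $\supp(S)$'') does not close the gap: connectivity of each $\supp(G_j)$ does not prevent a single $\backpropagation{F(u(i))}$ from having disconnected support after cancellation, and there is no canonical way to split the corresponding $u(i)$ further once that happens. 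So relative to the lemma's intended content there is a genuine gap; the missing idea is to index the decomposition by the spacetime-graph components of $\supp(S)$ from the outset and to use the clique structure of $V(S_j)$ to assign each measurement unambiguously to one of them.
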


\begin{proof}
We can decompose $S$ as
$$
S = \prod_{i} S_{|\kappa_i}
$$
where $\kappa_1, \kappa_2, \dots$ are the connected components of the support of $S$ in the spacetime graph.
By construction, the operators $S_{|\kappa_i}$ have disjoint support.

Because it is a stabilizer, the operator $S$ is of the form $\backpropagation{F(u)}$ with $u \in \outcome(\circuit)^\perp$.
Given a measured operator $S_j$ of the circuit with level $\ell_j$, define $V(S_j)$ to be the set of vertices of spacetime graph of the form $(\ell_j \pm 0.5, q)$ where $q$ is a qubit of the support of $S_j$.
Define the vector $u(i) \in \Z_2^m$ to be the restriction of $u$ to the set of coordinates $j$ such that the circuit operation $V(S_j)$ overlaps with $\kappa_i$.
Because it is a clique of the spacetime graph, a set $V(S_j)$ can only overlap with a single connected component $\kappa_i$.
Therefore, $u$ can be decomposed as
$
u = \sum_{i} u(i)
$
and the vectors $u(i)$ do not overlap.
Applying Proposition~\ref{prop:check_operator_morphism}, this yields
$$
\backpropagation{F(u)} = \prod_{i} \backpropagation{F(u(i))} \cdot
$$

Now let us show that the support of $\backpropagation{F(u(i))}$ is the component $\kappa_i$.
By definition of the back-cumulant, if a qubit $Q = (\ell + 0.5, q)$ belongs to the support of $\backpropagation{F(u(i))}$, there must exists a path in this support that connects $Q$ to a set $V(S_j)$ for some $j \in \supp(u(i))$ (otherwise one cannot reach $Q$ by back-propagating $F(u(i))$). By definition of the set $u(i)$, the proves that the qubit $Q$ is part of the connected component $\kappa_i$.
This shows that $\supp(\backpropagation{F(u(i))}) = \kappa_i$ which implies $\backpropagation{F(u(i))} = S_{\kappa_i}$ and leads to the decomposition claimed in the lemma.
\end{proof}

\begin{lemma} \label{lemma:last_level_F_expression}
Let $\circuit$ be a Clifford circuit with depth $\Delta$ and with $m$ measurements.
Let $v \in \Z_2^m$.
If $\backpropagation{F(v)}_{0.5} = I$, then we have 
$$
F(v)_{\ell_v' - 0.5} = \propagation{F'(v)}_{\ell_v' - 0.5}
$$
where $\ell_v'$ is the maximum level of a measured operator $S_j$ with $v_j = 1$.
\end{lemma}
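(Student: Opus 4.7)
The plan is to combine two ingredients: first, show that the back-cumulant of $F(v)$ is trivial at level $\ell_v' + 0.5$ purely from the definition of $\ell_v'$, which will force $\backpropagation{F(v)}_{\ell_v' - 0.5} = F(v)_{\ell_v' - 0.5}$; second, invoke Lemma~\ref{lemma:relation_F_F'} using the hypothesis $\backpropagation{F(v)}_{0.5} = I$ to conclude $\backpropagation{F(v)} = \propagation{F'(v)}$, and read off the claimed identity at level $\ell_v' - 0.5$.

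For the first step, I would observe that $F(v)_{j + 0.5} = \prod_{k : \ell_k = j + 1} S_k^{v_k}$. By definition of $\ell_v'$, no measurement index $k$ with $v_k = 1$ satisfies $\ell_k > \ell_v'$, hence $F(v)_{j + 0.5} = I$ for every $j \geq \ell_v'$. Applying the explicit back-cumulant formula (Proposition~\ref{prop:explicit_propagation}, Eq.~\eqref{eq:explicit_backpropagation}) at level $\ell_v'$ then gives $\backpropagation{F(v)}_{\ell_v' + 0.5} = \prod_{j=\ell_v'}^{\Delta} U_{\ell_v', j}^{-1} F(v)_{j+0.5} U_{\ell_v', j} = I$. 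Using the recursive relation derived in the proof of Proposition~\ref{prop:check_operators_commutation}, namely $\backpropagation{F(v)}_{\ell + 0.5} = F(v)_{\ell + 0.5} \cdot U_{\ell, \ell+1}^{-1} \backpropagation{F(v)}_{\ell + 1.5} U_{\ell, \ell+1}$, applied at $\ell = \ell_v' - 1$, we obtain
$$
\backpropagation{F(v)}_{\ell_v' - 0.5} = F(v)_{\ell_v' - 0.5} \cdot U_{\ell_v' - 1, \ell_v'}^{-1} \, I \, U_{\ell_v' - 1, \ell_v'} = F(v)_{\ell_v' - 0.5}.
$$

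For the second step, the assumption $\backpropagation{F(v)}_{0.5} = I$ is exactly the hypothesis of Lemma~\ref{lemma:relation_F_F'}, so we conclude $\backpropagation{F(v)} = \propagation{F'(v)}$ as full fault operators. Restricting to the component at level $\ell_v' - 0.5$ and chaining with the equality from the first step yields $F(v)_{\ell_v' - 0.5} = \backpropagation{F(v)}_{\ell_v' - 0.5} = \propagation{F'(v)}_{\ell_v' - 0.5}$, which is the claim.

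There is no real obstacle here since both tools are already in hand; the only thing to be careful about is the indexing convention, in particular that $F(v)$ places $S_j^{v_j}$ at level $\ell_j - 0.5$ (so $F(v)_{j+0.5}$ involves measurements at level $j+1$) while $F'(v)$ places it at level $\ell_j + 0.5$, and that the hypothesis $\backpropagation{F(v)}_{0.5} = I$ is used only for invoking Lemma~\ref{lemma:relation_F_F'}, whereas the triviality of $\backpropagation{F(v)}_{\ell_v' + 0.5}$ comes for free from the definition of $\ell_v'$ alone and requires no assumption on $v$.
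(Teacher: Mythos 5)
Your proof is correct and follows essentially the same route as the paper: establish that $\backpropagation{F(v)}$ is trivial above level $\ell_v' - 0.5$ so that $\backpropagation{F(v)}_{\ell_v' - 0.5} = F(v)_{\ell_v' - 0.5}$, then apply Lemma~\ref{lemma:relation_F_F'} under the hypothesis $\backpropagation{F(v)}_{0.5} = I$. You simply spell out, via the explicit formula of Proposition~\ref{prop:explicit_propagation} and the one-step recursion, the claim that the paper states without detail.
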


\begin{proof}
By definition, level $\ell_v' - 0.5$ is the largest non-trivial level of $\backpropagation{F(v)}$ and for this level we have
$$
\backpropagation{F(v)}_{\ell_v' - 0.5} = F(v)_{\ell_v' - 0.5}.
$$
Combining this with Lemma~\ref{lemma:relation_F_F'} this proves result.
\end{proof}

In the following lemma $\circuit[\ell]$ denotes the subcircuit of $\circuit$ that contains all the operations of $\circuit$ with level $\leq \ell$.

\begin{lemma} \label{lemma:stabilizer_of_subcircuit}
Let $\circuit$ be a Clifford circuit with depth $\Delta$ and with $m$ measurements.
Let $u \in \outcome(\circuit)^\perp$ and let $v \in \Z_2^m$ such that $\backpropagation{F(v)}$ is the restriction of $\backpropagation{F(u)}$ to one of a connected component of its support in the spacetime graph.
Then for all $\ell = 1, \dots \Delta$, the operator $\propagation{F'(v)}_{\ell + 0.5}$ belongs to output stabilizer group of the subcircuit $\circuit[\ell]$.
\end{lemma}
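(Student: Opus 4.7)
The plan is to prove the lemma by induction on $\ell$. A preparatory step is to show that $\backpropagation{F(v)}_{0.5} = I$: since $u \in \outcome(\circuit)^\perp$ and every measurement occurs at a level $\ell_j \geq 1$, Lemma~\ref{lemma:check_operator_support} yields $\backpropagation{F(u)}_{0.5} = I$, and restricting to a connected component preserves this triviality. Lemma~\ref{lemma:relation_F_F'} then gives the identification $\backpropagation{F(v)} = \propagation{F'(v)}$, which I will use throughout.

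For the base case $\ell = 1$, I would observe that $\propagation{F'(v)}_{1.5} = F'(v)_{1.5}$ (using $F'(v)_{0.5} = I$) is a product of operators $S_j$ measured at level $1$; these operators mutually commute by the disjoint-support constraint at a common level, and each becomes a stabilizer of $\circuit[1]$ after measurement by Lemma~\ref{lemma:stabilizer_update_rules}. For the inductive step, I would use the recursion
$$\propagation{F'(v)}_{\ell + 0.5} = U_\ell\, \propagation{F'(v)}_{\ell - 0.5}\, U_\ell^{-1} \cdot F'(v)_{\ell + 0.5}.$$
By hypothesis, $\propagation{F'(v)}_{\ell - 0.5}$ lies in the output stabilizer group of $\circuit[\ell-1]$, so its conjugate by $U_\ell$ lies in the stabilizer group immediately after the unitaries at level $\ell$. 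Likewise, $F'(v)_{\ell + 0.5}$ is a product of operators $S_j$ measured at level $\ell$, each of which joins the output stabilizer group of $\circuit[\ell]$ after its measurement.

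The main obstacle is to show that $U_\ell \propagation{F'(v)}_{\ell - 0.5} U_\ell^{-1}$ survives the measurement step, i.e., commutes with every measured operator $S_{j_0}$ at level $\ell$. Since same-level operations have disjoint support, $U_\ell^{-1} S_{j_0} U_\ell = S_{j_0}$, which reduces the task to verifying $[S_{j_0}, \backpropagation{F(v)}_{\ell - 0.5}] = 0$. For the unrestricted $u$, Lemma~\ref{lemma:sabilizer_error_vs_relation_operator} applied to the fault operator $\eta_{\ell - 0.5}(S_{j_0})$ gives $[S_{j_0}, \backpropagation{F(u)}_{\ell - 0.5}] = 0$. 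To transfer this to the restricted operator, I would exploit the clique structure of the spacetime graph: the vertices $V(S_{j_0}) = \{(q, \ell \pm 0.5) : q \in \supp(S_{j_0})\}$ form a clique, so whichever of them belong to $\supp(\backpropagation{F(u)})$ all lie in a single connected component. Hence on $\supp(S_{j_0})$ at level $\ell - 0.5$, the operator $\backpropagation{F(v)}_{\ell - 0.5}$ either agrees with $\backpropagation{F(u)}_{\ell - 0.5}$ (when the clique sits in the defining component $\kappa$) or is the identity (otherwise), and in either case the commutator with $S_{j_0}$ vanishes. Combining this commutation with Lemma~\ref{lemma:stabilizer_update_rules} then completes the inductive step.
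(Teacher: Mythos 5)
Your proof is correct and follows essentially the same route as the paper's: induction on $\ell$, with the key step being that each measured operator at level $\ell$ commutes with $\propagation{F'(v)}_{\ell-0.5}$, transferred from the unrestricted $\backpropagation{F(u)}$ via Lemma~\ref{lemma:sabilizer_error_vs_relation_operator} together with the clique structure of the spacetime graph. The only cosmetic differences are that you establish the identification $\backpropagation{F(v)}=\propagation{F'(v)}$ up front via Lemmas~\ref{lemma:check_operator_support} and~\ref{lemma:relation_F_F'}, and that you apply the level-$\ell$ unitaries before the measurements rather than after, which is immaterial since same-level operations have disjoint supports.
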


\begin{proof}
Let us prove this result by induction on $\ell$.

For $\ell = 1$, the result holds because the stabilizer group of the output of the subcircuit $\circuit[1]$ is generated by the measured operators at level $1$ and $\propagation{F'(v)}_{1.5} = F'(v)_{1.5}$ is a product of measured operators at level $1$.

Assume now that $\propagation{F'(v)}_{\ell - 1 + 0.5}$ is a stabilizer of the output state of the circuit $\circuit[\ell-1]$ and let us prove that the result is true at level $\ell$.

The measurements performed at level $\ell$ do not affect the stabilizer $\propagation{F'(v)}_{\ell - 1 + 0.5}$ of the circuit because it commutes with these measurements.
Indeed, consider a measured operator $M$ of the level $\ell$.
From Lemma~\ref{lemma:sabilizer_error_vs_relation_operator}, we know that $[\eta_{\ell - 0.5}(M), \backpropagation{F(u)}] = 0$ because
$\backpropagation{F(u)}$ is a stabilizer of the spacetime code.
This equation leads to 
$
[M, \backpropagation{F(u)}_{\ell - 0.5}] = 0
$
and
$
[M, \propagation{F'(u)}_{\ell - 0.5}] = 0
$
using Lemma~\ref{lemma:relation_F_F'}.
By definition of the spacetime graph of the circuit, the operation $M$ can only overlap with at most one of the connected component of the $\propagation{F'(u)}$.
Therefore, we also have 
$
[M, \propagation{F'(v)}_{\ell - 0.5}] = 0.
$
This proves that the stabilizer $\propagation{F'(v)}_{\ell - 0.5}$ commutes with all the operators measured at level $\ell$.
As a result, it is still a stabilizer after applying these measurements.
Moreover, the measured operators are also stabilizers.

Consider now the effect of the unitary operations.
The unitary operations of level $\ell$ induce a conjugation of the stabilizers by $U_\ell$ which is the product of all unitary operations at level $\ell$.
This maps the stabilizer $\propagation{F'(v)}_{\ell - 1 + 0.5}$ onto
\begin{align} \label{proof:stabilizer_update}
U_\ell \propagation{F'(v)}_{\ell - 1 + 0.5} U_\ell
%     & =
%     \prod_{j = 0}^{\ell-1} U_{\ell} U_{j,\ell-1} F'(v)_{j + 0.5} U_{j,\ell-1}^{-1} U_{\ell}^{-1}
% \\
    & = 
    \prod_{j = 0}^{\ell-1} U_{j,\ell} F'(v)_{j + 0.5} U_{j,\ell}^{-1} \cdot
\end{align}
Moreover, we know that $F'(v)_{\ell + 0.5}$ is a stabilizer because it is a product of measured operators at level $\ell$.
Multiplying the stabilizer obtained Eq.~\eqref{proof:stabilizer_update} with $F'(v)_{\ell + 0.5}$ we get
the stabilizer $\propagation{F'(v)}_{\ell + 0.5}$ for the output of the circuit $\circuit[\ell]$.
\end{proof}

\end{document}